\newcommand{\RN}[1]{%
  \textup{\uppercase\expandafter{\romannumeral#1}}%
}
\newcommand{\R}{\mathbb{R}}
\newcommand{\Div}{\textup{div}}
\newcommand{\nablas}{\slashed{\nabla}}
\newtheorem{theorem}{Theorem}
\newtheorem{proposition}[theorem]{Proposition}
\newtheorem{lemma}[theorem]{Lemma}
\theoremstyle{definition}
\newtheorem{definition}[theorem]{Definition}
\begin{document} 

\title[Stability of Higher Dimensional Schwarzschild]{Linear Stability of Higher Dimensional Schwarzschild Spacetimes: Decay of Master Quantities}

\author{Pei-Ken Hung}
\address{Pei-Ken Hung\\
Department of Mathematics\\
Massachusetts Institute of Technology, USA}
\email{pkhung@mit.edu}

\author{Jordan Keller}
\address{Jordan Keller\\
Black Hole Initiative\\
Harvard University, USA}
\email{jordan\_keller@fas.harvard.edu}

\author{Mu-Tao Wang}
\address{Mu-Tao Wang\\
Department of Mathematics\\
Columbia University, USA}
\email{mtwang@math.columbia.edu}

\thanks{Portions of this work were carried out during visits to the Department of Mathematics and The Institute of Mathematical Sciences at the Chinese University of Hong Kong.  The authors wish to thank these institutions for their hospitality and support.  The second author wishes to thank the John Templeton Foundation for its support. This material is based upon work supported by the National Science Foundation under Grant No. DMS-1810856 (Mu-Tao Wang).}

\begin{abstract}
In this paper, we study solutions to the linearized vacuum Einstein equations centered at higher-dimensional Schwarzschild metrics.  We employ Hodge decomposition to split solutions into scalar, co-vector, and two-tensor pieces; the first two portions respectively correspond to the closed and co-closed, or polar and axial, solutions in the case of four spacetime dimensions, while the two-tensor portion is a new feature in the higher-dimensional setting.  Rephrasing earlier work of Kodama-Ishibashi-Seto in the language of our Hodge decomposition, we produce decoupled gauge-invariant master quantities satisfying Regge-Wheeler type wave equations in each of the three portions.  The scalar and co-vector quantities respectively generalize the Moncrief-Zerilli and Regge-Wheeler quantities found in the setting of four spacetime dimensions; beyond these quantities, we further discover a higher-dimensional analog of the Cunningham-Moncrief-Price quantity in the co-vector portion.  In the analysis of the master quantities, we strengthen the mode stability result of Kodama-Ishibashi to a uniform boundedness estimate in all dimensions; further, we prove decay estimates in the case of six or fewer spacetime dimensions.  Finally, we provide a rigorous argument that linearized solutions of low angular frequency are decomposable as a sum of pure gauge solution and linearized Myers-Perry solution, the latter solutions generalizing the linearized Kerr solutions in four spacetime dimensions.
\end{abstract}
\maketitle

\section{Introduction}

The Schwarzschild-Tangherlini black holes are higher-dimensional generalizations of the Schwarzschild spacetimes, comprising a static, spherically symmetric family of black hole solutions to higher-dimensional vacuum gravity:
\begin{equation}\label{vacuumEinstein}
Ric (g) = 0.
\end{equation}

Non-linear stability of the Schwarzschild-Tangherlini black holes as solutions of \eqref{vacuumEinstein} is a matter of considerable mathematical interest, owing to the developments in geometric analysis necessary in the problem's resolution.  Such work would add to non-linear stability results in four spacetime dimensions, in particular that of Christodoulou-Klainerman \cite{ChristKlainerman} for the Minkowski spacetime, in addition to the more recent non-linear stability results of Hintz-Vasy \cite{HintzVasy} for the slowly rotating Kerr-de Sitter spacetimes and Klainerman-Szeftel \cite{KlainermanSzeftel} for the Schwarzschild spacetime subject to polarized axisymmetric perturbations.

In this paper, we consider the simpler matter of linear stability of the Schwarzschild-Tangherlini solutions, concerning solutions $\delta g$ of the linearization of the vacuum Einstein equations about a member of the Schwarzschild-Tangherlini family $(\mathcal{M},g_{M})$, with mass $M > 0$:
\begin{equation}\label{linEinstein}
\delta Ric_{g_{M}}(\delta g) = 0.
\end{equation}
Owing to diffeomorphism-invariance of the Einstein equation, infinitesimal deformations of the background spacetime via smooth co-vector fields $X$
\begin{equation}
\pi_{X} := \mathcal{L}_{X}g_{M},
\end{equation}
referred to as pure gauge solutions, are solutions to the linearized equation \eqref{linEinstein}.  Moreover, the Schwarzschild-Tangherlini family is contained within the larger family of Myers-Perry solutions, yielding solutions to \eqref{linEinstein} corresponding to infinitesimal changes in mass and angular velocity.  To demonstrate linear stability it suffices to show that, with a choice of well-posed pure gauge solution $\pi_{X}$, the normalized solution
\[ \widehat{\delta g} = \delta g - \pi_{X} \]
decays through a suitable foliation to a Myers-Perry perturbation under appropriate initial conditions. 

In both the physics and mathematics literature, the identification and analysis of gauge-invariant quantities satisfying decoupled wave equations forms the basis of linear stability.  Building upon our earlier work \cite{HKW} in four spacetime dimensions, we utilize the spherical symmetry of the background Schwarzschild-Tangherlini spacetimes to split linearized solutions into scalar, co-vector, and two-tensor portions in a spacetime Hodge decomposition.  Identification of gauge-invariant master quantities satisfying decoupled Regge-Wheeler type wave equations for each of the three portions appears in the work of Kodama-Ishibashi-Seto \cite{KI1} and Kodama-Ishibashi \cite{KI2}, with the scalar and co-vector quantities respectively generalizing the Moncrief-Zerilli \cite{Moncrief, Zerilli} and Regge-Wheeler \cite{RW} quantities in four spacetime dimensions.  Beyond recasting the quantities of Kodama-Ishibashi-Seto, we further identify a higher-dimensional analog of the Cunningham-Moncrief-Price quantity \cite{CMP} in the co-vector portion.  Our work is the first of numerous recent results on linear stability \cite{HKW, DHR, Ma, ABW, Johnson, Hung} to consider higher-dimensional gravity.  We remark that a generalization of the four dimensional approach of Dafermos-Holzegel-Rodnianski \cite{DHR}, involving decoupled Weyl curvature components satisfying the Bardeen-Press equation \cite{BardeenPress}, could provide another avenue towards higher-dimensional linear stability.

The analysis of the Regge-Wheeler type equations (\ref{RW1},\ \ref{RW3},\ \ref{spinraised},\ \ref{RW4}) satisfied by the master quantities is informed by the study of the scalar wave equation, regarding as a ``poor man's'' linearization of the vacuum Einstein equations.  We draw upon the pioneering efforts and later refinements of many authors in four spacetime dimensions for the Schwarzschild and Kerr spacetimes \cite{KayWald, BlueSoffer2, DR, DR2, Luk, Tataru1, FSKY, Tataru2, AnderssonBlue, DRR}, in addition to the higher-dimensional generalization of Schlue \cite{Schlue}, utilizing the red-shift, Morawetz, and $r^p$ estimates described in these works.  Owing to the non-positivity of their potentials, the equations we consider are more challenging to analyze than the standard wave equation.  We overcome these difficulties with respect to uniform boundedness, using Hardy estimates in the spirit of Kodama-Ishibashi \cite{KI3} to strengthen their mode stability result to a uniform boundedness estimate in all dimensions.  In addition, we prove uniform decay estimates in the case of six and fewer spacetime dimensions.

Spherical symmetry of the background Schwarzschild-Tangherlini spacetime allows for an additional decomposition of the metric perturbation into tensor spherical harmonics.  In particular, we decompose a linearized solution into portions of lower and higher angular frequency:
\begin{equation}
\delta g = \delta g^{\ell < 2} + \delta g^{\ell \geq 2},
\end{equation}
per Proposition \ref{mode_decomp}.  The master quantities discussed above are central to controlling the higher angular frequency portion $\delta g^{\ell \geq 2}$, but have no control over the lower angular frequency portion $\delta {g}^{\ell < 2}$.  Generalizing the situation in four spacetime dimensions, we prove that $\delta g^{\ell < 2}$ splits as the sum of a pure gauge solution and a linearized Myers-Perry solution.  Our proof makes rigorous the same claim in Kodama-Ishibashi \cite{KI2}, which the authors base upon an enumeration of degrees of freedom.

We summarize our results in the two main theorems below.  First, we have the analysis of the lower angular frequency portion $\delta g^{\ell < 2}$:

\begin{theorem}
Let $\delta g$ be a smooth, symmetric two-tensor on a Schwarzschild-Tangherlini spacetime, satisfying the linearized Einstein equation \eqref{linEinstein}.  For the lower frequency portion $\delta g^{\ell <2}$ of $\delta g$, there exists a smooth co-vector $X^{\ell < 2}$ on the Schwarzschild-Tangherlini background, unique modulo Killing fields, and constants $c, d_{m}$ such that
\begin{equation}
\delta g^{\ell <2} = \pi_{X^{\ell<2}} + cK + \sum_{m =1}^{\frac{1}{2}n(n+1)}d_{m}K_{m},
\end{equation}
where $K, K_{m}$ are the basis solutions of the linearized Myers-Perry family in Definition \ref{linMPBasis}.
\end{theorem}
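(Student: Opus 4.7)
The plan is to decompose $\delta g^{\ell < 2}$ further using the Hodge decomposition together with a refinement by spherical harmonics of total angular momentum $\ell \in \{0,1\}$, and to solve the linearized Einstein equations \eqref{linEinstein} separately in each sector. The two-tensor Hodge part is supported only on $\ell \geq 2$ and is thus automatically absent; only the scalar part at $\ell = 0$, and both the scalar and co-vector parts at $\ell = 1$, contribute. In each of these sectors, spherical symmetry reduces \eqref{linEinstein} to an overdetermined ODE system on the $(t,r)$ quotient, whose solution space can be computed explicitly modulo the action of the pure-gauge map $X \mapsto \pi_{X}$ restricted to the same low-frequency sector.

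For the $\ell = 0$ sector, the result is a linearized Birkhoff rigidity statement: any spherically symmetric smooth solution of \eqref{linEinstein} is the sum of a spherically symmetric pure gauge term $\pi_{X^{\ell=0}}$ and a one-parameter family $cK$ corresponding to the linearized mass perturbation, i.e.\ the tangent to the Schwarzschild-Tangherlini family at $g_{M}$. The argument is direct ODE integration: after normalizing metric components with a suitable $X^{\ell=0}$, the residual Einstein equations force the perturbation to lie in the one-dimensional family of linearized mass changes.

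For the $\ell = 1$ sector I treat the scalar and co-vector pieces separately. In the scalar case, the ODE analysis (the higher-dimensional generalization of the classical four-dimensional fact that $\ell = 1$ polar Schwarzschild perturbations are pure gauge) shows that every smooth solution equals $\pi_{X^{\ell=1,\mathrm{scalar}}}$ for some gauge vector. In the co-vector case, the $\ell = 1$ co-vector spherical harmonics form a space of dimension $\tfrac{1}{2}n(n+1)$, isomorphic to $\mathfrak{so}(n+1)$; after exhausting the available co-vector gauge freedom, a $\tfrac{1}{2}n(n+1)$-parameter residual family of physical solutions remains, which I match directly with the basis $K_{m}$ of linearized Myers-Perry solutions in Definition \ref{linMPBasis}, representing infinitesimal angular momentum in each rotation plane. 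Summing the three sectoral contributions and the three sectoral gauge vectors gives the claimed decomposition with a single $X^{\ell<2}$.

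Uniqueness of $X^{\ell<2}$ modulo Killing fields is then immediate: two choices $X, \tilde{X}$ giving the same decomposition must satisfy $\pi_{X - \tilde{X}} = 0$, which is equivalent to $X - \tilde{X}$ being a Killing field of $g_{M}$. The main obstacle is the $\ell = 1$ co-vector sector, where one must both solve the coupled ODE system on the $(t,r)$ quotient and verify that the non-gauge residual coincides with the tangent vectors to the Myers-Perry family, rather than some other exotic perturbation; in addition, smoothness of $X^{\ell<2}$ across the event horizon and at the axes of rotation requires a careful choice of integration constants in the ODE solves.
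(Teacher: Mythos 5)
Your proposal follows essentially the same route as the paper: a sector-by-sector treatment of the $\ell=0$ scalar, $\ell=1$ scalar, and $\ell=1$ co-vector pieces, in each case normalizing by a pure gauge transformation and then integrating the reduced Einstein equations on the quotient to show the residual is spanned by the linearized mass solution $K$ (at $\ell=0$) and the linearized angular-momentum solutions $K_{m}$ (at $\ell=1$ co-vector), with the $\ell=1$ scalar sector being pure gauge. The substance of the argument lies in the explicit gauge normalizations and ODE integrations that your plan defers, but the structure, the identification of which sectors carry physical content, and the accounting of residual gauge freedom (Killing fields) all match the paper's proof.
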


Next, we have the analysis of the gauge-invariant master quantities for the higher angular frequency portion $\delta g^{\ell \geq 2}$:

\begin{theorem}
Making the same assumption on $\delta g$ as in Theorem 1, there exist gauge-invariant master quantities satisfying decoupled Regge-Wheeler type equations (\ref{RW1},\ \ref{RW3},\ \ref{spinraised},\ \ref{RW4}) for the scalar, co-vector, and two-tensor portions of $\delta g^{\ell \geq 2}$.  With further specification of an initial data slice $\Sigma_{0}$ and a decay foliation $\Sigma_{\tau} := \phi_{\tau}(\Sigma_{0})$ formed by flowing along the static Killing vector field, a symmetric traceless two-tensor $\Psi$ solving any one of the equations (\ref{RW1},\ \ref{RW3},\ \ref{spinraised},\ \ref{RW4}) satisfies the uniform boundedness estimate
\begin{equation}
\check{E}^{N}_{\Psi}(\Sigma_{\tau}) \leq C(n,M)\check{E}^{N}_{\Psi}(\Sigma_{0}),
\end{equation}
and, in spacetime dimension six and below, the uniform decay estimate
\begin{equation}
\check{E}^{N}_{\Psi}(\Sigma_{\tau}) \leq C(n,M)\frac{I_{\Psi}(\Sigma_0)}{\tau^{2}},
\end{equation}
with $\check{E}^{N}_{\Psi}$ and $I_{\Psi}(\Sigma_0)$ representing Sobolev data for $\Psi$ on members of the decay foliation $\Sigma_{\tau}$ and $C(n,M)$ being a universal constant depending upon the orbit sphere dimension $n$ and the background mass $M$.   
\end{theorem}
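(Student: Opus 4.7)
The plan is to address the three assertions in turn: existence of the gauge-invariant master quantities, the uniform boundedness estimate, and the uniform decay estimate in spacetime dimension at most six.

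For the first assertion, I would begin with the spacetime Hodge decomposition on the round sphere factors of the Schwarzschild-Tangherlini background, splitting $\delta g^{\ell \geq 2}$ into scalar, co-vector, and symmetric traceless two-tensor portions, as in our earlier treatment of the four-dimensional case. In each portion a candidate master quantity is built from suitable projections of the linearized metric and its first derivatives; these generalize the Moncrief-Zerilli, Regge-Wheeler, Cunningham-Moncrief-Price, and (new) two-tensor quantities. Gauge invariance is verified by substituting $\delta g + \pi_{X}$ for $\delta g$ and checking cancellation on each Hodge sector. The corresponding Regge-Wheeler equations (\ref{RW1}, \ref{RW3}, \ref{spinraised}, \ref{RW4}) are then derived by projecting the linearized Ricci equation onto the relevant harmonic sector and commuting derivatives against the background Schwarzschild-Tangherlini geometry, reading off the potentials dimension by dimension.

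For the uniform boundedness estimate, I would fix attention on a Regge-Wheeler type equation $\Box_{g_M}\Psi + V\Psi = 0$ with $V$ of a priori indefinite sign. The natural energy associated with the static Killing field $T$ need not be coercive when $V < 0$, and the core technical tool is a Hardy inequality in the spirit of Kodama-Ishibashi that absorbs the negative contribution of $V\Psi^{2}$ into $|\nablas \Psi|^{2}$ on the angular frequencies $\ell \geq 2$, producing a coercive modified energy. One then introduces the red-shift vector field $N$ of Dafermos-Rodnianski to control the energy up to and across the future horizon. Commutation with $T$ and with a suitable commutator-friendly modification of $N$ produces higher-order analogs, with error terms that can be absorbed via the same Hardy machinery, yielding $\check{E}^{N}_{\Psi}(\Sigma_{\tau}) \leq C \check{E}^{N}_{\Psi}(\Sigma_{0})$.

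For the decay estimate, the plan is to combine an integrated local energy decay estimate with the $r^{p}$ hierarchy of Dafermos-Rodnianski. A Morawetz multiplier of the form $f(r^{*})\pl_{r^{*}} + g(r^{*})$, engineered to account for the photon sphere and the indefinite potential, should yield a spacetime integral estimate with a degeneracy at the photon sphere which can be removed by commuting with $T$. Pairing this with $r^{p}$-weighted energy estimates in a double null frame on the exterior, and combining with the boundedness estimate, provides the pigeonhole $\tau^{-2}$ decay of $\check{E}^{N}_{\Psi}(\Sigma_{\tau})$ in terms of a higher initial data norm $I_{\Psi}(\Sigma_{0})$. The dimensional restriction $n+1 \leq 6$ enters at the Morawetz step: the positivity of the bulk term after Hardy absorption of $V$ is only guaranteed up to this dimension.

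The main obstacle I foresee is the Morawetz estimate. Unlike the scalar wave equation, the higher-dimensional Regge-Wheeler potentials $V$ appearing in (\ref{RW1},\ \ref{RW3},\ \ref{spinraised},\ \ref{RW4}) need not be non-negative, so the standard multiplier construction must be carefully balanced against the Hardy bound so that the bulk term remains coercive away from the photon sphere while simultaneously capturing the trapping degeneracy. Ensuring this balance, uniformly in the angular frequency $\ell \geq 2$, is the crux of the analysis, and it is precisely the failure of this balance above six spacetime dimensions that limits the decay portion of the theorem while leaving the boundedness portion intact.
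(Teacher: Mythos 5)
Your proposal is correct and follows essentially the same route as the paper: Hodge decomposition into scalar, co-vector, and two-tensor portions with gauge-invariant master quantities satisfying Regge-Wheeler type equations; a Hardy-type inequality (combined with the Poincar\'{e} inequality on the spheres) to restore coercivity of the $T$-energy despite the indefinite potentials; the red-shift vector field for non-degenerate boundedness; and a Morawetz estimate paired with the $r^p$-hierarchy and a pigeonhole argument for the $\tau^{-2}$ decay, with the dimensional restriction to six and fewer spacetime dimensions entering precisely at the Morawetz step. The only minor divergence is that the paper proves the first-order boundedness estimate directly via a continuity-type argument without commuting with Killing fields, reserving commutation for the decay stage, but this does not affect the validity of your plan.
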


We remark that further pointwise uniform boundedness and uniform decay estimates can be derived from those above by means of commutation with the angular Killing fields and application of Sobolev estimates on the orbit spheres.

It is expected that, after estimating the master quantities and making a suitable choice of linearized gauge, the remaining linearized metric components can be controlled by the gauge-invariant master quantities to ensure their uniform boundedness and decay, yielding a complete proof of linear stability.  We do not treat this matter in the current paper, deferring it to later work.  We remark that such efforts have borne fruit in the case of four spacetime dimensions.  In particular, our earlier work \cite{HKW} controls the linearized metric via a rather irregular combination of the Regge-Wheeler and Chandrasekhar gauges, while Johnson \cite{Johnson} controls the linearized metric uniformly in the Regge-Wheeler gauge, after an intermediate passage through the wave-coordinate gauge.  Inasmuch as these results depend upon corresponding gauge choices, each approach has drawbacks in extending to the non-linear regime.  In this direction, there is more promising work of the first author \cite{Hung}, wherein control of a portion of the linearized metric in the wave-map gauge is accomplished.

The paper is organized as follows.  In Section 2, we present the Schwarzschild-Tangherlini black holes.  In Section 3, we discuss the more general class of spherically symmetric spacetimes; in particular, we present Hodge decomposition and tensor spherical harmonic decomposition.  In Section 4 we discuss gravitational perturbations of spherically symmetric spacetimes, identifying pure gauge solutions and linearized Myers-Perry solutions in Section 5.  We prove Theorem 1, decomposing $\delta g^{\ell < 2}$ as a sum of a pure gauge and linearized Myers-Perry solution, in Section 6.  In Section 7, we discuss general estimates for Regge-Wheeler type equations sufficient to prove uniform boundedness and uniform decay.  In Section 8, we identify and analyze the master quantity for the two-tensor portion, proving uniform boundedness and decay in all spacetime dimensions.  Similar analyses for the co-vector and scalar portions are carried out in Sections 9 and 10, respectively; in each case, we prove uniform boundedness estimates in all spacetime dimensions and uniform decay estimates in spacetime dimension six and fewer.  We summarize our results on the master quantities of $\delta g^{\ell \geq 2}$ in Section 11, wherein we prove Theorem 2.

\section{Higher Dimensional Schwarzschild Spacetimes}

The higher dimensional Schwarzschild-Tangherlini black holes $(\mathcal{M}^{2+n},g_{M})$ generalize the well-known four-dimensional spacetimes, with the $(2 + n)$-dimensional family comprised of static, spherically-symmetric (i.e., $SO(n+1)$-invariant) members, parametrized by mass $M > 0$.  Each such member is a solution of vacuum gravity; i.e., each metric $g_{M}$ satisfies $Ric(g_{M}) = 0.$

In standard Schwarzschild coordinates $(t,r, x^{\alpha})$, $x^{\alpha}$ coordinates on $S^n$, the Schwarzschild metric takes the form
\begin{equation}
g_{M} = -(1-\mu)dt^2 + (1-\mu)^{-1}dr^2 + r^2\mathring\sigma_{\alpha\beta}dx^{\alpha}dx^{\beta},
\end{equation}
with
\begin{equation}
\mu := \frac{2M}{r^{n-1}}
\end{equation}
and
\begin{equation}
\mathring\sigma_{\alpha\beta}dx^{\alpha}dx^{\beta}
\end{equation}
understood to be the standard round metric of the unit $n$-sphere.

Defining the Regge-Wheeler coordinate via
\begin{equation}\label{RWcoordinate}
r_{*} = \int^{r}\left(1-\frac{2M}{s^{n-1}}\right)^{-1}ds,
\end{equation}
we find
\begin{equation}
g_{M} = -(1-\mu)dt^2 + (1-\mu)dr_{*}^2 + r^2\mathring\sigma_{\alpha\beta}dx^{\alpha}dx^{\beta}.
\end{equation}

Using the Regge-Wheeler coordinates, we specify the Eddington-Finkelstein double-null coordinates by
\begin{align}
\begin{split}
u = \frac{1}{2}(t - r_{*}),\\
v = \frac{1}{2}(t + r_{*}),
\end{split}
\end{align}
such that
\begin{equation}
g_{M} = -4(1-\mu)dudv + r^2\mathring\sigma_{\alpha\beta}dx^{\alpha}dx^{\beta}.
\end{equation}

We remark that $r_{*}$ is defined up to normalization.  All three of the coordinate systems cover the exterior region of the spacetime and degenerate at the event horizon.

We also make use of the ingoing Eddington-Finkelstein coordinate system
\begin{align}
\begin{split}
\bar{v} &= t + r_{*},\\
R &= r
\end{split}
\end{align}
with
\begin{equation}
g_{M} = -(1-\mu)d\bar{v}^2 + 2d\bar{v}dR + R^2\mathring\sigma_{\alpha\beta}dx^{\alpha}dx^{\beta}.
\end{equation}

Finally, a variant of the Regge-Wheeler coordinates takes
\begin{equation}
t_{*} = t - r + r_{*},
\end{equation}
such that
\begin{equation}
g_{M} = -(1-\mu)dt_{*}^2 +2\mu dt_{*}dr + (1+\mu)dr^2 + r^2\mathring\sigma_{\alpha\beta}dx^{\alpha}dx^{\beta}.
\end{equation}
These two coordinate systems remain regular up to and on the future event horizon.

The event horizon appears in this coordinate system as the null hypersurface
\begin{equation}
r = r_{h} := (2M)^{1/(n-1)}.
\end{equation}
Along the event horizon, the Schwarzschild-Tangherlini solution has positive surface gravity
\begin{equation}
\kappa_{n} := \frac{(n-1)}{2r_{h}},
\end{equation}
in addition to simple trapping at the timelike hypersurface
\begin{equation}
r = r_{P} := ((n+1)M)^{1/(n-1)}.
\end{equation}
This hypersurface is referred to as the photon sphere.  With it, we normalize the Regge-Wheeler coordinate by
\begin{equation}\label{RWnormalization}
r_{*}(r_{P}) = 0.
\end{equation}

For a detailed discussion of these and other issues related to the geometry of higher dimensional Schwarzschild spacetimes, we refer the reader to Schlue \cite{Schlue}.

\section{Spherically Symmetric Spacetimes}

\subsection{General Considerations}
Let $(\mathcal{Q}, \tilde{g})$ be a two-dimensional Lorentzian manifold with local coordinates $x^A, A=0,1$, and let $(S^n, \mathring{\sigma})$ be the unit $n$-sphere with the standard round metric in local coordinates $x^\alpha, \alpha=2, \hdots, n+1$.  Each point on $\mathcal{Q}$ represents an orbit sphere, with $r$ a positive function which represents the areal radius of each orbit sphere. We consider a general spherically symmetric spacetime in local coordinates $x^0, x^1, x^2, \hdots x^{n+1}$:
\begin{equation}\label{spacetime metric} 
g_{ab}dx^{a}dx^{b} = \tilde{g}_{AB} dx^A dx^B+r^2 \mathring{\sigma}_{\alpha\beta} dx^\alpha dx^\beta.
\end{equation}
The index notations above are adopted throughout the paper: $A, B, C, \cdots =0, 1$ for quotient indices, $\alpha, \beta, \gamma, \cdots =2, \hdots, n+1$ for spherical indices, and $a, b, c, \cdots =0, 1, 2, \hdots, n+1$ for spacetime indices.  

The Christoffel symbols $\Gamma_{ab}^c$ of a spherically symmetric spacetime are 
\[\begin{split}\Gamma_{AB}^C & =\tilde{\Gamma}_{AB}^C,\\
\Gamma_{\alpha\beta}^\gamma&=\mathring{\Gamma}_{\alpha\beta}^\gamma,\\
\Gamma_{\alpha A}^\beta&=r^{-1} \partial_A r (\delta_\alpha^\beta),\\
\Gamma_{\alpha\beta}^D&=-r \partial^Dr (\mathring{\sigma}_{\alpha\beta}),\end{split}\]
where $\Gamma_{AB}^C$ and $\mathring{\Gamma}_{\alpha\beta}^\gamma$ are the Christoffel symbols of $\tilde{g}_{AB}$ and $\mathring{\sigma}_{\alpha\beta}$, respectively. 

Using the Christoffel symbols, it is possible to calculate the curvature of the quotient $\mathcal{Q}$ and the $n$-sphere $S^n$ directly.  On the other hand, as $\mathcal{Q}$ is a two-manifold, we have immediately
\begin{align}
\begin{split}
\tilde{R}_{ABCD} &= \tilde{K}\left(\tilde{g}_{AC}\tilde{g}_{BD} - \tilde{g}_{AD}\tilde{g}_{BC}\right),\\
\tilde{R}_{AB} &= \tilde{K}\tilde{g}_{AB},\\
\tilde{R} &= 2\tilde{K},
\end{split}
\end{align}
relating the Riemannian curvature tensor, the Ricci tensor, and the scalar curvature of the quotient to its sectional curvature $\tilde{K}$.  Likewise, as the $n$-sphere is a space form with constant sectional curvature $\mathring{K} = 1$, we find
\begin{align}
\begin{split}
\mathring{R}_{\alpha\beta\gamma\eta} &= \left(\mathring{\sigma}_{\alpha\gamma}\mathring{\sigma}_{\beta\eta} - \mathring{\sigma}_{\alpha\eta}\mathring{\sigma}_{\beta\gamma}\right),\\
\mathring{R}_{\alpha\beta} &= (n-1)\mathring{\sigma}_{\alpha\beta},\\
\mathring{R} &= n(n-1).
\end{split}
\end{align}

With respect to the $2+n$ decomposition into quotient and spherical parts, we consider two types of differential operators,  $\tilde{\nabla}_A$ and $\mathring{\nabla}_\alpha$. When applied to functions, $\tilde{\nabla}_A$  and $\mathring{\nabla}_\alpha$ are just differentiation with respect to coordinate variables $x^A, A=0, 1$ and $x^\alpha, \alpha=2, \hdots, n+1$, respectively. For co-vectors, we define
\begin{equation}\label{mathring}\begin{split} \tilde{\nabla}_A dx^B&=-\tilde{\Gamma}_{AC}^B dx^C,\\
\tilde{\nabla}_A dx^\alpha&=0,\\
\mathring{\nabla}_\alpha dx^B&=0,\\
\mathring{\nabla}_\alpha dx^\beta&=-\mathring{\Gamma}_{\alpha\gamma}^\beta dx^\gamma,
\end{split}
\end{equation}
with an obvious extension of the operators to more elaborate tensor bundles.

We use the notation $\tilde\Box$ and $\mathring\Delta$ for the quotient d'Alembertian and the spherical Laplacian operators.  Furthermore, we denote the volume form for the quotient spaceby $\epsilon_{AB}$. 

Later in the work, we will make use of the commutation identities
\begin{align}
\begin{split}
\nabla_{a}\nabla_{b} v_{c} - \nabla_{b}\nabla_{a} v_{c} &= {R_{abc}}^{d}v_{d},\\
\nabla_{a}\nabla_{b} v_{cd} - \nabla_{b}\nabla_{a} v_{cd} &= {R_{abc}}^{e}v_{ed} + {R_{abd}}^{e}v_{ce},
\end{split}
\end{align}
which apply to either the quotient or the orbit spheres.

Specializing to the Schwarzschild spacetime, we note the formulae
\begin{align}\label{SchwarzFormulae}
\begin{split}
&\tilde\nabla_{A}\tilde\nabla_{B} r = \frac{M(n-1)}{r^{n}}\tilde{g}_{AB} = \frac{(n-1)}{2r}\mu \tilde{g}_{AB},\\
&\tilde\nabla_{A}\tilde\nabla_{B} t = -(1-\mu)^{-1}\left(\tilde\Box r\right)t_{(A}r_{B)},\\
&r^{A}r_{A} = |\tilde\nabla r|^2 = 1- \frac{2M}{r^{n-1}} = 1- \mu,\\
&\tilde{K} = \frac{n(n-1)M}{r^{n+1}} = \frac{n(n-1)}{2r^2}\mu.
\end{split}
\end{align}

\subsection{Tensors on $S^n$}

Specializing to the $n$-sphere, with the curvature calculations and commutation relations above taken into account, we find
\begin{align}\label{angularCommutation}
\begin{split}
\mathring\nabla_{\alpha}\mathring\nabla_{\beta}v_{\gamma} - \mathring\nabla_{\beta}\mathring\nabla_{\alpha}v_{\gamma} &= \mathring{\sigma}_{\alpha\gamma}v_{\beta} - \mathring{\sigma}_{\beta\gamma}v_{\alpha},\\
\mathring\nabla_{\alpha}\mathring\nabla_{\beta} v_{\gamma\delta} - \mathring\nabla_{\beta}\mathring\nabla_{\alpha} v_{\gamma\delta} &= \mathring{\sigma}_{\alpha\gamma}v_{\beta\delta} - \mathring{\sigma}_{\beta\gamma}v_{\alpha\delta} + \mathring\sigma_{\alpha\delta}v_{\gamma\beta} - \mathring\sigma_{\beta\delta}v_{\gamma\alpha}.
\end{split}
\end{align}

\subsubsection{Tensor Spherical Harmonics}

In this subsection we outline tensor spherical harmonics on $S^n$, following closely the discussion in Chodos-Myers \cite{ChodosMyers}.

The scalar spherical harmonics $Y^{\ell m_{s}(n,\ell)}$ are eigenfunctions of the spherical Laplacian, satisfying
\begin{equation}\label{scalarSpectraOne}
\mathring\Delta Y^{\ell m_{s}(n,\ell)} = -\ell(\ell + n - 1) Y^{\ell m_{s}(n,\ell)},
\end{equation}
with indices $\ell \geq 0$ and $m_{s}(n,\ell) \in \{1, \hdots, d_{s}(n,\ell)\},$ where
\begin{equation}\label{scalarEigenspace}
d_s(n,\ell) = \binom{n+\ell}{\ell} - \binom{n+\ell -2}{\ell-2}.
\end{equation}
Note that the formula gives
\begin{align}\label{lowerScalars}
\begin{split}
d_s(n,0) &= 1,\\
d_s(n,1) &= n+1\\
d_s(n,2) &= \frac{1}{2}(n+2)(n+1)-1.
\end{split}
\end{align}

Using the scalar spherical harmonics, we obtain eigensections for the sub-bundles of co-vectors and symmetric traceless two-tensors given by scalar potentials.  Namely, we have eigensections
\begin{equation}
Y^{\ell m_{s}(n,\ell)}_{\alpha} := \mathring\nabla_{\alpha} Y^{\ell m_s(n,\ell)},
\end{equation}
satisfying
\begin{equation}\label{scalarSpectraTwo}
\mathring\Delta Y^{\ell m_s(n,\ell)}_{\alpha} = \left((n-1)-\ell(\ell+n-1)\right)Y^{\ell m_s(n,\ell)}_{\alpha},
\end{equation}
and eigensections 
\begin{equation}\label{scalarTensorHarmonics}
Y^{\ell m_s(n,\ell)}_{\alpha\beta} := \mathring\nabla_{\alpha}\mathring\nabla_{\beta} Y^{\ell m_s(n,\ell)} - \frac{1}{n}\mathring\sigma_{\alpha\beta}\mathring\Delta Y^{\ell m_s(n,\ell)},
\end{equation}
such that
\begin{equation}\label{scalarSpectraThree}
\mathring\Delta Y^{\ell m_s(n,\ell)}_{\alpha\beta} = \left(2n-\ell(\ell + n - 1)\right)Y^{\ell m_s(n,\ell)}_{\alpha\beta},
\end{equation}
with $\ell \geq 2$.

In addition, the spherical Laplacian acts as an endomorphism on the sub-bundles of divergence-free co-vectors and divergence-free symmetric traceless two-tensors.  Regarding such co-vectors, we have eigensections $X^{\ell m_v(n,\ell)}_{\alpha}$ satisfying
\begin{equation}\label{coVectorSpectra}
\mathring\Delta X^{\ell m_v(n,\ell)}_{\alpha} =\left(1-\ell(\ell+n-1)\right) X^{\ell m_v(n,\ell)}_{\alpha},
\end{equation}
for $\ell \geq 1$ and $m_{v}(n,\ell) \in \{1, \hdots, d_{v}(n,\ell)\},$ where
\begin{align}
\begin{split}
d_v(n,\ell) &= (n+1)d_s(n,\ell) - d_s(n,\ell+1) - d_s(n,\ell-1)\\
&=(n+1)\left(\binom{n+\ell}{\ell} - \binom{n+\ell -2}{\ell-2}\right) \\
&- \binom{n+\ell+1}{\ell+1}+\binom{n+\ell-3}{\ell-3}.
\end{split}
\end{align}
Note that the formula above together with \eqref{lowerScalars} gives
\begin{equation}\label{l1Covector}
d_v(n,1) = \frac{1}{2}n(n+1).
\end{equation}

For those symmetric traceless two-tensors given by divergence-free co-vector potentials, we have eigensections
\begin{equation}
X^{\ell m_v(n,\ell)}_{\alpha\beta} :=  \mathring\nabla_{\alpha}X^{\ell m_v(n,\ell)}_{\beta} + \mathring\nabla_{\beta}X^{\ell m_v(n,\ell)}_{\alpha},
\end{equation}
with
\begin{equation}\label{coVectorSpectraTwo}
\mathring\Delta X^{\ell m_v(n,\ell)}_{\alpha\beta} = \left( n + 2 - \ell(\ell + n - 1)\right)X^{\ell m_v(n,\ell)}_{\alpha\beta}
\end{equation}
for $\ell \geq 2.$  
  
On the sub-bundle of the symmetric traceless two-tensors, we find
\begin{equation}\label{twoTensorSpectra}
\mathring\Delta U^{\ell m_t(n,\ell)}_{\alpha\beta} = \left(2-\ell(\ell + n - 1)\right) U^{\ell m_t(n,\ell)}_{\alpha\beta},
\end{equation}
for eigensections $U^{\ell m_t(n,\ell)}_{\alpha\beta}$, with $\ell \geq 2$ and and $m_{t}(n,\ell) \in \{1, \hdots, d_{t}(n,\ell)\},$ where
\begin{align}
\begin{split}
d_t(n,\ell) &= \frac{1}{2}(n+1)(n+2)d_s(n,\ell) - \left(d_v(n,\ell+1)+d_v(n,\ell-1)\right)\\
&-d_s(n,\ell+2) - 2d_s(n,\ell) - d_s(n,\ell-2)\\
&= \frac{1}{2}(n+1)(n+2)\left(\binom{n+\ell}{\ell} - \binom{n+\ell-2}{\ell-2}\right)\\
&-(n+1)\left(\binom{n+\ell+1}{\ell+1}-\binom{n+\ell-3}{\ell-3}\right).
\end{split}
\end{align}

We remark that the $\ell = 1$ scalar co-vectors correspond to conformal Killing vectors,
\begin{equation}\label{conformalKilling}
\mathring\nabla_{\alpha}Y_{\beta}^{1m_s(n,1)} + \mathring\nabla_{\beta}Y^{1m_s(n,1)}_{\alpha} = -2Y^{1m_s(n,1)}\mathring\sigma_{\alpha\beta},
\end{equation}
and the $\ell = 1$ divergence-free co-vectors correspond to Killing vectors
\begin{equation}\label{Killing}
\mathring\nabla_{\alpha}X_{\beta}^{1m_v(n,1)} + \mathring\nabla_{\beta}X^{1m_v(n,1)}_{\alpha} = 0.
\end{equation}

Concretely, the conformal Killing vector fields can be realized by considering a Cartesian coordinate system $(X^1,\hdots, X^{n+1})$ on $\R^{n+1}$.  Denoting the restriction of the coordinate functions to the unit sphere $S^n$ by $\tilde{X}^{A}, A=1\cdots n+1$, the set of Killing vector fields $\{ X_{\alpha}^{1 m_v(n,1)}\}_{m_v(n,1) = 1,\hdots,d_v(n,1)}$ is given by
\begin{equation}\label{killingConcrete}
\tilde{X}^{A}\mathring\nabla_{\alpha}\tilde{X}^{B} - \tilde{X}^{B}\mathring\nabla_{\alpha}\tilde{X}^{A},
\end{equation}
where $1\leq A < B \leq n+1$, and the set of conformal Killing vector fields $\{ Y_{\alpha}^{1 m_s(n,1)}\}_{m_s(n,1)=1,\hdots,d_s(n,1)}$ is given by
\begin{equation}\label{conformalKillingConcrete}
\mathring\nabla_{\alpha}\tilde{X}^{A},
\end{equation}
with $1 \leq A \leq n+1$.

\subsubsection{Tensor Decomposition}

The following decomposition lemmae, generalizing the situation for the two-sphere, are fundamental to the remainder of the work.

\begin{lemma} 
Given any co-vector $v_\alpha$ on $S^n$, there exist a scalar function $V$ and a divergence-free co-vector $\hat{v}_\alpha$ (i.e. $\mathring{\nabla}^\alpha \hat{v}_{\alpha}=0$) such that 
\[v_\alpha=\mathring{\nabla}_\alpha V+\hat{v}_\alpha.\]
The decomposition is unique modulo the $\ell = 0$ mode of $V$.
\end{lemma}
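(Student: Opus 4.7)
The plan is to reduce existence of the decomposition to solving a single scalar Poisson equation on $S^n$ for the potential $V$, then define $\hat{v}_\alpha$ as the residual and verify it is divergence-free.

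First, I would take the divergence of the ansatz $v_\alpha = \mathring{\nabla}_\alpha V + \hat{v}_\alpha$ and use the requirement $\mathring{\nabla}^\alpha \hat{v}_\alpha = 0$ to produce the necessary scalar equation
\begin{equation*}
\mathring{\Delta} V = \mathring{\nabla}^\alpha v_\alpha.
\end{equation*}
The right-hand side integrates to zero over $S^n$ by Stokes' theorem on the closed manifold, so it lies in the orthogonal complement of the constants in $L^2(S^n)$. By standard elliptic theory on the round sphere, or equivalently by inverting mode-by-mode in the scalar spherical harmonic basis $\{Y^{\ell m_s(n,\ell)}\}$ using the non-vanishing eigenvalues $-\ell(\ell+n-1)$ recorded in \eqref{scalarSpectraOne} for $\ell \geq 1$, this equation admits a smooth solution $V$, unique up to an additive constant, i.e.\ up to the $\ell = 0$ mode. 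Defining $\hat{v}_\alpha := v_\alpha - \mathring{\nabla}_\alpha V$, the equation for $V$ then yields $\mathring{\nabla}^\alpha \hat{v}_\alpha = 0$ by construction, establishing existence.

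For uniqueness, I would suppose two decompositions $\mathring{\nabla}_\alpha V_1 + \hat{v}^1_\alpha = \mathring{\nabla}_\alpha V_2 + \hat{v}^2_\alpha$ of the same $v_\alpha$ are given, subtract, and take the divergence to obtain $\mathring{\Delta}(V_1 - V_2) = 0$. Since $S^n$ is closed and connected, $V_1 - V_2$ must be constant, and substitution back into the equation forces $\hat{v}^1_\alpha = \hat{v}^2_\alpha$, which confirms the claimed uniqueness modulo the $\ell=0$ mode of $V$. The only nontrivial input is the solvability of the Poisson equation on $S^n$ under the vanishing-integral compatibility condition, which is classical Fredholm/Hodge theory for the round sphere; no genuine obstacle is expected. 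The result is simply the one-form version of the standard Hodge decomposition on $S^n$, with the triviality of the harmonic part reflected in the fact that a single scalar potential $V$ suffices.
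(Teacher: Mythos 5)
Your proof is correct and follows essentially the same route as the paper: solve the Poisson equation $\mathring{\Delta}V = \mathring{\nabla}^\alpha v_\alpha$ for the potential, define $\hat{v}_\alpha$ as the residual, and read off uniqueness modulo constants from the fact that the potential must satisfy that same equation. Your added remarks on the zero-mean compatibility condition and mode-by-mode inversion just make explicit the solvability the paper takes for granted.
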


\begin{proof}
Let $V$ be a solution of Poisson's equation
\begin{equation}
\mathring{\Delta} V = \mathring\nabla^{\alpha}v_{\alpha}.
\end{equation}
The difference $\hat{v}_{\alpha} := v_{\alpha} - \mathring{\nabla}_{\alpha}V$ is manifestly divergence-free, and we have
\[ v_{\alpha} = \mathring\nabla_{\alpha}V + \hat{v}_{\alpha}.\]
Assuming that $v_{\alpha}$ has such a decomposition, it must be that the scalar function satisfies Poisson's equation above, so that $V$ is determined up to its constant $\ell = 0$ mode and $\hat{v}_{\alpha}$ is uniquely determined.
\end{proof}

\begin{lemma} Given any symmetric traceless $(0, 2)$-tensor $t_{\alpha\beta}$ on $S^n$, there exists a co-vector $v_\alpha$ and a divergence-free symmetric traceless $(0, 2)$-tensor $\hat{t}_{\alpha\beta}$ such that 
\[t_{\alpha\beta}=\mathring{\nabla}_\alpha v_\beta+\mathring{\nabla}_\beta v_\alpha-\frac{2}{n} \mathring{\sigma}_{\alpha\beta} \mathring{\nabla}^\gamma v_\gamma+\hat{t}_{\alpha\beta}.\]
Combined with the previous lemma, we have 
\begin{equation}
t_{\alpha\beta} = \left(\mathring\nabla_{\alpha}\mathring\nabla_{\beta}V - \frac{1}{n}\mathring{\Delta}V\mathring\sigma_{\alpha\beta}\right) + \left(\mathring\nabla_{\alpha}\hat{v}_{\beta} + \mathring\nabla_{\beta} \hat{v}_{\alpha}\right) + \hat{t}_{\alpha\beta}.
\end{equation}

The decomposition is unique modulo the $\ell < 2$ modes of $V$ and the $\ell = 1$ mode of $\hat{v}_{\alpha}.$
\end{lemma}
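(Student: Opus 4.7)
The plan is to reduce the existence claim to solvability of a single linear elliptic equation for the co-vector $v_\alpha$. Introduce the conformal Killing operator on co-vectors
\[ (Lv)_{\alpha\beta} := \mathring\nabla_\alpha v_\beta + \mathring\nabla_\beta v_\alpha - \frac{2}{n}\mathring\sigma_{\alpha\beta}\,\mathring\nabla^\gamma v_\gamma, \]
so that $(Lv)_{\alpha\beta}$ is automatically symmetric and traceless. If I can produce $v_\alpha$ satisfying
\[ \mathring\nabla^\beta(Lv)_{\alpha\beta} = \mathring\nabla^\beta t_{\alpha\beta}, \]
then setting $\hat{t}_{\alpha\beta} := t_{\alpha\beta} - (Lv)_{\alpha\beta}$ yields a symmetric traceless two-tensor that is divergence-free by construction; applying Lemma 3.1 to the resulting $v_\alpha$ then produces the combined decomposition.

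The next step is to unpack the left-hand side and invoke elliptic theory. Using the angular commutation identity \eqref{angularCommutation} together with $\mathring R_{\alpha\beta} = (n-1)\mathring\sigma_{\alpha\beta}$, a direct computation yields
\[ \mathring\nabla^\beta(Lv)_{\alpha\beta} = \mathring\Delta v_\alpha + \frac{n-2}{n}\mathring\nabla_\alpha\mathring\nabla^\gamma v_\gamma + (n-1) v_\alpha, \]
which is second-order elliptic on co-vectors on $S^n$. Integration by parts on the closed manifold $S^n$, together with the symmetry and tracelessness of $(Lv)$, gives the pairing
\[ \int_{S^n} u^\alpha \mathring\nabla^\beta(Lv)_{\alpha\beta} = -\frac{1}{2}\int_{S^n} (Lu)^{\alpha\beta}(Lv)_{\alpha\beta}, \]
so the operator is formally self-adjoint and standard Fredholm theory reduces solvability to a single orthogonality condition. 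Setting $u = v = w$ in the kernel forces $(Lw)_{\alpha\beta} = 0$, so the kernel consists precisely of the conformal Killing vectors on $S^n$. For any such $\xi$,
\[ \int_{S^n} \xi^\alpha \mathring\nabla^\beta t_{\alpha\beta} = -\frac{1}{2}\int_{S^n} (L\xi)^{\alpha\beta} t_{\alpha\beta} = 0, \]
verifying the orthogonality and producing the desired $v_\alpha$.

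For uniqueness, any two solutions $v$ and $v'$ differ by a conformal Killing vector on $S^n$. By \eqref{killingConcrete}, \eqref{conformalKillingConcrete}, and \eqref{l1Covector}, the conformal algebra of $S^n$ is spanned by the gradients of $\ell = 1$ scalar spherical harmonics together with the $\ell = 1$ divergence-free co-vector harmonics, so the ambiguity takes the form $\mathring\nabla W + \xi$ for $W$ an $\ell = 1$ scalar harmonic and $\xi$ an $\ell = 1$ divergence-free co-vector. Feeding this into the Hodge decomposition of the previous lemma, the scalar potential $V$ acquires this $\ell = 1$ ambiguity on top of the $\ell = 0$ ambiguity already present in Lemma 3.1, so $V$ is determined modulo all $\ell < 2$ modes, while $\hat{v}_\alpha$ is determined modulo the $\ell = 1$ divergence-free co-vector harmonics, exactly as claimed. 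The hard part will be the Fredholm step---identifying the kernel of $\mathring\nabla^\beta L$ with the conformal Killing vectors and verifying the orthogonality of the source---while all remaining input is standard elliptic analysis on the closed manifold $S^n$.
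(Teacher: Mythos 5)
Your proof is correct, but it takes a genuinely different route from the paper's. You solve a single second-order elliptic system for the full co-vector $v$ by introducing the conformal Killing operator $L$, verifying that $\mathring\nabla^\beta L$ is formally self-adjoint via the pairing $\int u^\alpha\mathring\nabla^\beta(Lv)_{\alpha\beta}=-\tfrac12\int(Lu)^{\alpha\beta}(Lv)_{\alpha\beta}$, identifying its kernel with the conformal Killing fields, and checking orthogonality of the source $\mathring\nabla^\beta t_{\alpha\beta}$ against that kernel; the split of $v$ into a scalar potential and a divergence-free part is then obtained a posteriori from the first lemma. The paper proceeds in the opposite order: it Hodge-decomposes the divergence $d_\beta=\mathring\nabla^\alpha t_{\alpha\beta}=\mathring\nabla_\beta D+\hat d_\beta$ first, shows $d_\beta$ is supported in $\ell\geq 2$ by pairing against the $\ell=1$ harmonics, and then solves two decoupled equations — the fourth-order scalar equation $\tfrac{n-1}{n}\mathring\Delta(\mathring\Delta+n)V=\mathring\nabla^\alpha\mathring\nabla^\beta t_{\alpha\beta}$ and the second-order equation $(\mathring\Delta+(n-1))\hat v_\beta=\hat d_\beta$ — whose solvability follows from the support condition. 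Your argument is the standard York-type transverse-traceless decomposition and is arguably more structural, with the coercivity identity doing the work of both the kernel identification and the uniqueness claim in one stroke; the paper's version produces the potentials $V$ and $\hat v_\alpha$ together with their explicit defining elliptic equations, which is the form it reuses downstream. Note that the orthogonality check is the same computation in both proofs (integration by parts against conformal Killing and Killing fields using the symmetry and tracelessness of $t_{\alpha\beta}$), and your uniqueness bookkeeping — the kernel contributes $\ell=1$ gradients to $V$ on top of the $\ell=0$ ambiguity from the first lemma, and $\ell=1$ divergence-free fields to $\hat v_\alpha$ — matches the stated conclusion.
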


\begin{proof}
The divergence of $t_{\alpha\beta}$ has co-vector decomposition
\[ \mathring\nabla^{\alpha}t_{\alpha\beta} =: d_{\beta} = \mathring\nabla_{\beta} D + \hat{d}_{\beta},\]
per the previous lemma.  Note that $d_{\beta}$ is supported in $\ell \geq 2$, as 
\begin{align*}
&\int_{S^n} \mathring\nabla^{\alpha}t_{\alpha}^{\beta}Y^{1m}_{\beta} = -\int_{S^n}t_{\alpha}^{\beta}\mathring\nabla^{\alpha}Y^{1m}_{\beta} = 2\int_{S^n}t_{\alpha\beta}\mathring\sigma^{\alpha\beta} = 0\\
& \int_{S^n}\mathring\nabla^{\alpha}t_{\alpha\beta}X_{\beta}^{1m} = -\int_{S^n}t_{\alpha}^{\beta}\mathring\nabla^{\alpha}X_{\beta}^{1m} = -\int_{S^n} t^{\alpha\beta}\mathring\nabla_{(\alpha}X_{\beta)}^{1m} = 0,
\end{align*}
using the conformal Killing and Killing equations (\ref{conformalKilling},\ \ref{Killing}) and the symmetry and tracelessness of $t_{\alpha\beta}$.

We solve for $V$ and $\hat{v}_{\alpha}$ such that
\begin{align*}
& \frac{n-1}{n}\mathring\Delta\left(\mathring\Delta + n\right)V = \mathring\Delta D = \mathring\nabla^{\alpha}\mathring\nabla^{\beta}t_{\alpha\beta},\\
&\left(\mathring\Delta + (n-1)\right)\hat{v}_{\beta} = \hat{d}_{\beta}.
\end{align*}
Each of the elliptic operators in the left-hand side is self-adjoint, with kernels being the $\ell < 2$ scalar modes and the $\ell = 1$ co-vector modes, respectively.  Owing to support of $d_{\beta}$ and its constituents in $\ell \geq 2$, we have perpendicularity of the right-hand side and the kernel, from which existence of solutions to the equations follows.

The quantity
\[\hat{t}_{\alpha\beta} := t_{\alpha\beta} - \left(\mathring\nabla_{\alpha}\mathring\nabla_{\beta}V - \frac{1}{n}\mathring{\Delta}V\mathring\sigma_{\alpha\beta}\right) - \left(\mathring\nabla_{\alpha}\hat{v}_{\beta} + \mathring\nabla_{\beta} \hat{v}_{\alpha}\right),\]
satisfies the divergence-free property owing to the choices of $V$ and $\hat{v}_{\alpha}$, and we have 
\[t_{\alpha\beta} = \left(\mathring\nabla_{\alpha}\mathring\nabla_{\beta}V - \frac{1}{n}\mathring{\Delta}V\mathring\sigma_{\alpha\beta}\right) + \left(\mathring\nabla_{\alpha}\hat{v}_{\beta} + \mathring\nabla_{\beta} \hat{v}_{\alpha}\right) + \hat{t}_{\alpha\beta}\]
by definition.  

Given such a decomposition, the constituents $V$ and $\hat{v}_{\alpha}$ necessarily satisfy the equations above; hence they are uniquely determined up to the $\ell < 2$ modes of $V$ and the $\ell = 1$ mode of $\hat{v}_{\alpha}.$
\end{proof}

In the first lemma, the divergence-free co-vector $\hat{v}_{\alpha}$ generalizes the co-closed potentials in our earlier work \cite{HKW}.

As outlined in the second lemma, the situation for symmetric traceless two-tensors is more interesting.  Subtracting off the piece involving the scalar potential $V$, the remainder of the tensor satisfies the double-divergence free property, analogous to the co-closed potential in our earlier work.  However, this remainder admits a further decomposition, amounting to the term $\hat{t}_{\alpha\beta}$ with the stronger divergence-free property.  The term $\hat{t}_{\alpha\beta}$ is a novel feature in this higher dimensional setting.

\subsection{The Projected Covariant Derivative}

In what follows, we consider quantities which are scalars, co-vectors, or symmetric traceless two-tensors on the spheres of symmetry.  The associated sphere bundles, respectively referred to as $\mathcal{L}(0), \mathcal{L}(-1),$ and $\mathcal{L}(-2)$, come equipped with projected covariant derivative operators $\slashed{\nabla}$, defined for scalars by ordinary differentiation and for co-vectors by
\[\slashed{\nabla}_a dx^\alpha =-\Gamma^{\alpha}_{a\gamma} dx^\gamma, \text{  for  } a=0, 1, 2, 3,\]
extending to symmetric traceless two-tensors via the product rule.  We denote the associated d'Alembertian operators by
\begin{equation}\label{SpindAlembertian}
\slashed{\Box}_{\mathcal{L}(-s)} := \slashed{\nabla}^a\slashed{\nabla}_a,
\end{equation}
with $s = 0, 1, 2$ and the appropriate covariant derivative operator.  Note that $\slashed{\Box}_{\mathcal{L}(0)} = \Box$ is the standard d'Alembertian operator on $\mathcal{M}$. 

The projected connection, as well as the associated d'Alembertian and Laplacian operators, are related to the quotient and spherical operators of the first subsection in a straightforward fashion.  We illustrate the procedure on the bundle $\mathcal{L}(-2)$:

\begin{align*}
\slashed{\nabla}_{A}t_{\alpha\beta} &= \partial_{A}t_{\alpha\beta} - \Gamma^{\gamma}_{A\alpha}t_{\gamma\beta} - \Gamma^{\gamma}_{A\beta}t_{\alpha\gamma}\\
&=\tilde\nabla_{A}t_{\alpha\beta} - 2r^{-1}r_{A}t_{\alpha\beta},
\end{align*}

\begin{align*}
\slashed\nabla_{B}\slashed\nabla_{A}t_{\alpha\beta} &= \partial_{B}\left(\slashed\nabla_{A}t_{\alpha\beta}\right) - \Gamma^{C}_{BA}\slashed\nabla_{C}t_{\alpha\beta}\\
&- \Gamma^{\gamma}_{BA}\slashed\nabla_{\gamma}t_{\alpha\beta} - \Gamma^{\gamma}_{B\alpha}\slashed\nabla_{A}t_{\gamma\beta} - \Gamma^{\gamma}_{B\beta}\slashed\nabla_{A}t_{\alpha\gamma}\\
&=\tilde\nabla_{B}\left(\slashed\nabla_{A}t_{\alpha\beta}\right) - 2r^{-1}r_{B}\left(\slashed\nabla_{A}t_{\alpha\beta}\right)\\
&=\tilde\nabla_{B}\tilde\nabla_{A}t_{\alpha\beta} -2r^{-1}r_{A}\tilde\nabla_{B}t_{\alpha\beta} - 2r^{-1}r_{B}\tilde\nabla_{A}t_{\alpha\beta}\\
&+ 6r^{-2}r_{A}r_{B}t_{\alpha\beta} - 2r^{-1}\left(\tilde\nabla_{A}\tilde\nabla_{B}r\right)t_{\alpha\beta},
\end{align*}

\[\slashed\nabla_{\gamma}t_{\alpha\beta} = \mathring\nabla_{\gamma}t_{\alpha\beta},\]

\begin{align*}
\slashed\nabla_{\lambda}\slashed\nabla_{\gamma}t_{\alpha\beta}&=\partial_{\lambda}\left(\slashed\nabla_{\gamma}t_{\alpha\beta}\right) - \Gamma^{\delta}_{\lambda\gamma}\slashed\nabla_{\delta}t_{\alpha\beta} \\
&- \Gamma^{\delta}_{\lambda\alpha}\slashed\nabla_{\gamma}t_{\delta\beta} - \Gamma^{\delta}_{\lambda\beta}\slashed\nabla_{\gamma}t_{\alpha\delta} - \Gamma^{A}_{\lambda\gamma}\slashed\nabla_{A}t_{\alpha\beta}\\
&= \mathring\nabla_{\lambda}\mathring\nabla_{\gamma}t_{\alpha\beta} + rr^{A}\mathring\sigma_{\lambda\gamma}\left(\slashed\nabla_{A}t_{\alpha\beta}\right)\\
&=\mathring\nabla_{\lambda}\mathring\nabla_{\gamma}t_{\alpha\beta}+rr^{A}\mathring\sigma_{\lambda\gamma}\left(\tilde\nabla_{A}t_{\alpha\beta} - 2r^{-1}r_{A}t_{\alpha\beta}\right).
\end{align*}

Contracting the above, we deduce the relation

\begin{align}\label{TwoTensorWave}
\begin{split}
\slashed{\Box}_{\mathcal{L}(-2)} t_{\alpha\beta} &= \tilde\Box t_{\alpha\beta} + (n-4)r^{-1}r^{A}\tilde\nabla_{A}t_{\alpha\beta} + r^{-2}\mathring\Delta t_{\alpha\beta} \\
&+ (6-2n)r^{-2}r^{A}r_{A}t_{\alpha\beta} - 2r^{-1}\left(\tilde\Box r\right) t_{\alpha\beta}.
\end{split}
\end{align}

Likewise, we calculate
\begin{align}\label{CovectorWave}
\begin{split}
\slashed{\Box}_{\mathcal{L}(-1)} v_{\alpha} &= \tilde\Box v_{\alpha} + (n-2)r^{-1}r^{A}\tilde\nabla_{A} v_{\alpha} + r^{-2}\mathring\Delta v_{\alpha} \\
&+(2-n)r^{-2}r^{A}r_{A} v_{\alpha} - r^{-1}\left(\tilde\Box r\right) v_{\alpha},
\end{split}
\end{align}

\begin{equation}\label{ScalarWave}
\slashed{\Box}_{\mathcal{L}(0)} V = \Box V = \tilde\Box V + nr^{-1}r^{A}\tilde\nabla_{A}V + r^{-2}\mathring\Delta V.
\end{equation}

\section{Gravitational Perturbations}

\subsection{Decomposition of the Linearized Metric}

As a $(0,2)$-tensor on the background, a linear perturbation $h_{ab} = \delta g_{ab}$ admits the $(2 + n)$-decomposition
 \begin{equation}\label{h_decomposition0} 
 \delta g=h_{AB} dx^A dx^B+2h_{A\alpha} dx^A dx^\alpha+h_{\alpha\beta} dx^\alpha dx^\beta,
 \end{equation}
with each component of $h_{AB}, h_{A\alpha}, h_{\alpha\beta}$ depending upon all spacetime variables. 

Applying the above lemmae, we have: 
\begin{proposition}\label{decomposition}
Any symmetric two-tensor $\delta g$ of the form \eqref{h_decomposition0} can be decomposed as $\delta g=h_1+h_2 + h_3,$ with

\begin{align}\label{h_1}
\begin{split}
h_1&=h_{AB} dx^A dx^B+2 (\mathring{\nabla}_\alpha H_A)dx^\alpha dx^A+ H\mathring{\sigma}_{\alpha\beta}dx^{\alpha}dx^{\beta}\\
&+\left(\mathring{\nabla}_{\alpha}\mathring{\nabla}_\beta {H}_2-\frac{1}{n}\mathring{\sigma}_{\alpha\beta} \mathring{\Delta} {H}_2\right) dx^\alpha dx^\beta,
\end{split}
\end{align}

\begin{equation}\label{h_2} 
h_2=2  \hat{h}_{A\alpha} dx^\alpha dx^A+(\mathring{\nabla}_\alpha\hat{h}_\beta+\mathring{\nabla}_\beta \hat{h}_\alpha)dx^{\alpha}dx^{\beta}, 
\end{equation} 

\begin{equation}\label{h_3}
h_3 = \hat{h}_{\alpha\beta}dx^{\alpha}dx^{\beta},
\end{equation}
where the following equations are satisfied:

\[\mathring{\nabla}^\alpha \hat{h}_{A\alpha}=0, \mathring{\nabla}^\alpha \hat{h}_\alpha=0, \mathring{\nabla}^\alpha \hat{h}_{\alpha\beta}=0.\]

\end{proposition}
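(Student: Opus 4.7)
The plan is to apply Lemmas 1 and 2 of the preceding subsection fiberwise over the quotient $\mathcal{Q}$, treating the quotient coordinates $x^A$ as smooth parameters. The decomposition of $\delta g$ reduces to applying the orbit-sphere decomposition lemmas to the mixed components $h_{A\alpha}$ and to the purely spherical components $h_{\alpha\beta}$.

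First, for each $A \in \{0,1\}$ and each point of $\mathcal{Q}$, regard $h_{A\alpha}$ as a co-vector on $S^n$. By Lemma 1, there exist a scalar $H_A$ and a divergence-free co-vector $\hat h_{A\alpha}$ with $h_{A\alpha} = \mathring\nabla_\alpha H_A + \hat h_{A\alpha}$; normalizing $H_A$ to have vanishing $\ell = 0$ mode on each orbit sphere pins down the decomposition uniquely. Next, split the purely spherical component into its trace and traceless parts by setting $H := \tfrac{1}{n}\mathring\sigma^{\alpha\beta}h_{\alpha\beta}$ and $t_{\alpha\beta} := h_{\alpha\beta} - H\mathring\sigma_{\alpha\beta}$, so that $t_{\alpha\beta}$ is symmetric and traceless on each orbit sphere. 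Apply Lemma 2 to $t_{\alpha\beta}$ fiberwise to obtain a scalar $H_2$, a divergence-free co-vector $\hat h_\alpha$, and a divergence-free symmetric traceless two-tensor $\hat h_{\alpha\beta}$ satisfying the identity
\[ t_{\alpha\beta} = \left(\mathring\nabla_\alpha\mathring\nabla_\beta H_2 - \tfrac{1}{n}\mathring\Delta H_2\,\mathring\sigma_{\alpha\beta}\right) + \left(\mathring\nabla_\alpha \hat h_\beta + \mathring\nabla_\beta \hat h_\alpha\right) + \hat h_{\alpha\beta}. \]
Normalizing $H_2$ to be supported in $\ell \geq 2$ scalar harmonics and $\hat h_\alpha$ to be supported in $\ell \geq 2$ divergence-free co-vector harmonics makes these pieces uniquely determined. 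Assembling yields the three tensors $h_1,h_2,h_3$ as in (\ref{h_1})--(\ref{h_3}), whose sum manifestly recovers $\delta g$.

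The main technical issue I would address is smoothness of the potentials $H_A, H, H_2, \hat h_\alpha$ and of the residual tensors $\hat h_{A\alpha}, \hat h_{\alpha\beta}$ as functions of the quotient coordinates. The inhomogeneous Poisson-type equations appearing in Lemmas 1 and 2, namely $\mathring\Delta V = \mathring\nabla^\alpha v_\alpha$, the scalar equation $\tfrac{n-1}{n}\mathring\Delta(\mathring\Delta + n)V = \mathring\nabla^\alpha\mathring\nabla^\beta t_{\alpha\beta}$, and $(\mathring\Delta + (n-1))\hat v_\alpha = \hat d_\alpha$, have kernels consisting of fixed low-frequency spherical harmonic spaces independent of the quotient point. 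After projecting the data onto the orthogonal complement of these kernels, the elliptic operators are invertible with bounded inverses on the corresponding spaces of sections of $S^n$; since the source terms depend smoothly on $x^A$ and the spectral projections are performed against a fixed frame of spherical harmonics, the normalized potentials inherit smooth dependence on $(x^0,x^1)$. This guarantees that $h_1, h_2, h_3$ are genuinely smooth symmetric two-tensors on the spacetime, completing the proof.
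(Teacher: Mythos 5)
Your proposal is correct and follows essentially the same route as the paper, which simply invokes the two orbit-sphere decomposition lemmas (applied fiberwise to the cross-term $h_{A\alpha}$ and to the trace-free part of $h_{\alpha\beta}$) to obtain Proposition \ref{decomposition}. Your additional remarks on fixing the low-frequency normalizations and on smooth dependence of the potentials on the quotient coordinates are a reasonable, correct supplement to what the paper leaves implicit.
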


When $n=2$, both $\hat{h}_{A\alpha}$ and $\hat{h}_\alpha$ have potentials ($\underline{H}_A$ and $\underline{H}_2$) and $\hat{h}_{\alpha\beta}$ must vanish, recovering the Hodge type decomposition in our earlier work \cite{HKW}. 

To summarize, in general dimension, we have a splitting of $\delta g$ into three pieces, scalar, co-vector, and two-tensor, as objects on the orbit spheres.

It is also possible to subdivide the linearized metric with respect to the spherical harmonic decomposition outlined in the previous section.  Namely, we split the linearized metric as
\begin{equation}
\delta g = \delta g^{\ell < 2} + \delta g^{\ell \geq 2},
\end{equation}
according to the following proposition.

\begin{proposition}\label{mode_decomp} 
Any symmetric two-tensor $\delta g$ on a spherically symmetric spacetime can be split into $\delta g = \delta g^{\ell <2} + \delta g^{\ell \geq 2}$, in which the components of $\delta g^{\ell \geq 2}$, further decomposed according to Proposition \ref{decomposition}, satisfy
\begin{align*}
\int_{S^2} h_{AB} Y^{\ell m_s(n,\ell)} &= 0, \\
\int_{S^2} HY^{\ell m_s(n,\ell)} &= 0,
\end{align*}
with respect to the scalar harmonics $Y^{\ell m_s(n,\ell)}$ having $\ell < 2$, and
\begin{align*}
\int_{S^2}  H_{A}Y^{1 m_s(n,1)} &= 0,\\
\int_{S^2} \hat{h}_{A\alpha}X_{\beta}^{1 m_v(n,1)}\mathring\sigma^{\alpha\beta} &=0,
\end{align*}
with respect to the $\ell = 1$ scalar harmonics $Y^{1 m_s(n,1)}$ and the $\ell = 1$ co-vector harmonics $X_{\alpha}^{1 m_v(n,1)}$.

We remark the components $H_2$, $\hat{h}_{\alpha},$ and $\hat{h}_{\alpha\beta}$ are necessarily supported in $\ell \geq 2$.
\end{proposition}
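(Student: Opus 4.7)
The plan is to combine the Hodge-type decomposition of Proposition \ref{decomposition} with the tensor spherical harmonic decomposition developed in Section 3.2. First I would apply Proposition \ref{decomposition} to write $\delta g = h_1 + h_2 + h_3$, exposing the sphere-scalar components $h_{AB}, H_A, H, H_2$, the sphere divergence-free co-vector components $\hat{h}_{A\alpha}, \hat{h}_\alpha$, and the divergence-free symmetric traceless sphere two-tensor component $\hat{h}_{\alpha\beta}$. Each of these is a section over the quotient $\mathcal{Q}$ of a sphere-tensor bundle on which the tensor spherical harmonics of Section 3.2 form a complete basis, with coefficients depending on the quotient coordinates $x^A$.

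I would then expand each sphere-scalar component in the scalar harmonics $\{Y^{\ell m_s(n,\ell)}\}_{\ell\geq 0}$, each divergence-free co-vector component in the co-vector harmonics $\{X_\alpha^{\ell m_v(n,\ell)}\}_{\ell\geq 1}$, and $\hat{h}_{\alpha\beta}$ in the divergence-free symmetric traceless two-tensor harmonics $\{U_{\alpha\beta}^{\ell m_t(n,\ell)}\}_{\ell\geq 2}$, assigning each mode to $\delta g^{\ell<2}$ or $\delta g^{\ell\geq 2}$ by its $\ell$-index. The stated integral orthogonality conditions then follow immediately from $L^2$-orthogonality of the tensor harmonics: $h_{AB}^{\ell\geq 2}$ and $H^{\ell\geq 2}$ are orthogonal to every scalar harmonic with $\ell' < 2$, the $\ell = 1$ scalar modes have been stripped from $H_A$, and the $\ell = 1$ Killing-type modes have been stripped from $\hat{h}_{A\alpha}$.

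The remark that $H_2$, $\hat{h}_\alpha$, and $\hat{h}_{\alpha\beta}$ are necessarily supported in $\ell \geq 2$ exploits the non-uniqueness in Proposition \ref{decomposition} together with three elementary observations. The $\ell < 2$ modes of $H_2$ contribute nothing to the scalar-type tensor $\mathring\nabla_\alpha \mathring\nabla_\beta H_2 - \tfrac{1}{n}\mathring\sigma_{\alpha\beta}\mathring\Delta H_2$: trivially for $\ell = 0$, and for $\ell = 1$ by the conformal Killing equation \eqref{conformalKilling}, which yields $\mathring\nabla_\alpha \mathring\nabla_\beta Y^{1 m_s(n,1)} = -Y^{1 m_s(n,1)}\mathring\sigma_{\alpha\beta}$. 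The $\ell = 1$ divergence-free co-vector mode of $\hat{h}_\alpha$ is a Killing field and so vanishes upon symmetrization, by \eqref{Killing}. Finally, $\hat{h}_{\alpha\beta}$, being divergence-free and traceless, must lie in the span of the pure tensor-type harmonics $U^{\ell m_t(n,\ell)}_{\alpha\beta}$, since among the three families of symmetric traceless two-tensor harmonics only these are divergence-free; and these exist only for $\ell \geq 2$.

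The main obstacle is careful bookkeeping across three overlapping decompositions (by sphere-tensor type, by spectral index $\ell$, and across the non-unique representatives of Proposition \ref{decomposition}); no essentially new computation is needed beyond the eigensection identities and orthogonality of the tensor spherical harmonics from Section 3.2.
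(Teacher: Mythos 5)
Your proof is correct and is precisely the harmonic-expansion bookkeeping that the paper leaves implicit (the proposition is stated without proof): expand each Hodge component in the appropriate family of tensor spherical harmonics, split by the index $\ell$, and observe that the $\ell<2$ modes of $H_2$ and the $\ell=1$ mode of $\hat{h}_\alpha$ lie in the kernels of the traceless Hessian and symmetrized gradient by \eqref{conformalKilling} and \eqref{Killing}, while the divergence-free two-tensor harmonics only exist for $\ell\geq 2$. No gaps.
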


\subsection{Decomposition of the Linearized Ricci Tensor}

We recall that a perturbation of the Ricci curvature $\delta R_{bd}$ satisfies
\begin{equation}\label{linear_Ricci} 
2\delta R_{bd}= g^{ae}(\nabla_a\nabla_d h_{eb}+\nabla_a\nabla_b h_{ed}-\nabla_d \nabla_b h_{ea}-\nabla_a\nabla_e h_{bd}).
\end{equation}

Perturbing about a spherically symmetric spacetime, with radial function $r$, we record the calculations in Appendix B of Kodama-Ishibashi-Seto \cite{KI1}:
\begin{align}
\begin{split}\label{AB}
2\delta R_{AB} &= -\tilde{\Box}h_{AB} -\tilde{\nabla}_{A}\tilde{\nabla}_{B}\left(g^{CD}h_{CD}\right) + \tilde{\nabla}_{A}\tilde{\nabla}^{C}h_{CB} + \tilde{\nabla}_{B}\tilde{\nabla}^{C}h_{CA}\\
&+{\tilde{R}_{A}}^{C}h_{CB} + {\tilde{R}_{B}}^{C}h_{CA} - 2\tilde{R}_{ACBD}h^{CD} - r^{-2}\mathring\Delta h_{AB}\\
&+ nr^{-1}r^{C}\left(\tilde{\nabla}_{B}h_{CA} + \tilde{\nabla}_{A}h_{CB} -\tilde{\nabla}_{C}h_{AB}\right)\\
&+r^{-2}\left(\tilde{\nabla}_{A}\mathring\nabla^{\alpha}h_{B\alpha} + \tilde{\nabla}_{B}\mathring\nabla^{\alpha}h_{A\alpha}\right)\\
&-nr^{-3}r_{B}\tilde{\nabla}_{A}H - nr^{-3}r_{A}\tilde{\nabla}_{B}H\\
&+ 4nr^{-4}r_{A}r_{B}H - n\tilde{\nabla}_{A}\tilde{\nabla}_{B}\left(r^{-2}H\right),
\end{split}
\end{align}

\begin{align}
\begin{split}\label{Aalpha}
2\delta R_{A\alpha} &= \mathring\nabla_{\alpha}\tilde{\nabla}^{B}h_{AB} + (n-2)r^{-1}r^{B}\mathring\nabla_{\alpha}h_{AB} - r\mathring\nabla_{\alpha}\tilde{\nabla}_{A}\left(r^{-1}g^{BC}h_{BC}\right)\\
&-r\tilde{\Box}\left(r^{-1}h_{A\alpha}\right) - nr^{-1}r^{B}\tilde{\nabla}_{B}h_{A\alpha} - r^{-2}\mathring{\Delta} h_{A\alpha}\\
&+\left[(n+1)r^{-2}r^{B}r_{B} + (n-1)r^{-2}\left(1-r^{B}r_{B}\right) - r^{-1}\left(\tilde{\Box}r\right)\right]h_{A\alpha}\\
&-r_{A}\tilde{\nabla}^{B}\left(r^{-1}h_{B\alpha}\right) +(n+1)r^{-1}r^{B}\tilde{\nabla}_{A}h_{B\alpha} + r^{-2}r_{A}r^{B}h_{B\alpha}\\
&+ r\tilde{\nabla}_{A}\tilde{\nabla}^{B}\left(r^{-1}h_{B\alpha}\right) + {\tilde{R}_{A}}^{B}h_{B\alpha}\\
&+ (n+1)r\tilde{\nabla}_{A}\left(r^{-2}r^{B}\right)h_{B\alpha} - (n+2)r^{-1}\left(\tilde{\nabla}_{A}\tilde\nabla^{B} r\right)h_{B\alpha}\\
&+r^{-2}\mathring\nabla_{\alpha}\mathring\nabla^{\beta}h_{A\beta} + r\tilde{\nabla}_{A}\left(r^{-3}\mathring\nabla^{\beta}h_{\alpha\beta}\right) + r^{-3}r_{A}\mathring\nabla^{\beta}h_{\alpha\beta}\\
&-nr^{-3}r_{A}\mathring\nabla_{\alpha}H - nr\mathring\nabla_{\alpha}\tilde{\nabla}_{A}\left(r^{-3}H\right),
\end{split}
\end{align}

\begin{align}
\begin{split}\label{alphabeta}
2\delta R_{\alpha \beta} &= \left[2rr^{A}\tilde\nabla^{B}h_{AB} + 2(n-1)r^{A}r^{B}h_{AB} + 2r\left(\tilde\nabla^A\tilde\nabla^B r\right)h_{AB}\right]\mathring\sigma_{\alpha\beta}\\
&-\mathring\nabla_{\alpha}\mathring\nabla_{\beta}\left(h_{AB}g^{AB}\right) - rr^{A}\tilde\nabla_{A}\left(h_{BC}g^{BD}\right)\mathring\sigma_{\alpha\beta}\\
&+r\tilde\nabla^{A}\left(r^{-1}\left(\mathring\nabla_{\alpha} h_{A\beta} + \mathring\nabla_{\beta} h_{A\alpha}\right)\right)\\
&+(n-1)r^{-1}r^{A}\left(\mathring\nabla_{\alpha}h_{A\beta} + \mathring\nabla_{\beta} h_{A\alpha}\right) + 2r^{-1}r^{A}\mathring\nabla^{\gamma}h_{A\gamma}\mathring\sigma_{\alpha\beta}\\
&-r^2\tilde\Box(r^{-2}h_{\alpha\beta}) - nr^{-1}r^{A}\tilde\nabla_{A}h_{\alpha\beta}-r^{-2}\mathring\Delta h_{\alpha\beta}\\
&+ r^{-2}\left(\mathring\nabla_{\alpha}\mathring\nabla^{\gamma}h_{\gamma\beta}+ \mathring\nabla_{\beta}\mathring\nabla^{\gamma}h_{\gamma\alpha}\right)\\
&+2\left[(n-1)r^{-2}+2r^{-2}r^{A}r_{A}-r^{-1}\left(\tilde\Box r\right)\right]h_{\alpha\beta}\\
&-2r^{-2}(1-r^{A}r_{A})\left(nH\mathring\sigma_{\alpha\beta} - h_{\alpha\beta}\right) - 2nr^{-2}r^{A}r_{A}H\mathring\sigma_{\alpha\beta}\\
&-nr^{-2}\mathring\nabla_{\alpha}\mathring\nabla_{\beta}H -nrr^{A}\tilde\nabla_{A}\left(r^{-2}H\right)\mathring\sigma_{\alpha\beta}.
\end{split}
\end{align}

In the remainder of the subsection, we rewrite the above expressions, with the dual aims of expanding in the metric perturbation components of Proposition \ref{decomposition} and of writing the linearized Ricci tensor in such a decomposed form.

\subsubsection{Pure Quotient Term}
The pure quotient portion \eqref{AB} needs little modification.  Expanding, we find
\begin{align}
\begin{split}\label{AB}
2\delta R_{AB} &= -\tilde{\Box}h_{AB} -\tilde{\nabla}_{A}\tilde{\nabla}_{B}\left(g^{CD}h_{CD}\right) + \tilde{\nabla}_{A}\tilde{\nabla}^{C}h_{CB} + \tilde{\nabla}_{B}\tilde{\nabla}^{C}h_{CA}\\
&+{\tilde{R}_{A}}^{C}h_{CB} + {\tilde{R}_{B}}^{C}h_{CA} - 2\tilde{R}_{ACBD}h^{CD} - r^{-2}\mathring\Delta h_{AB}\\
&+ nr^{-1}r^{C}\left(\tilde{\nabla}_{B}h_{CA} + \tilde{\nabla}_{A}h_{CB} -\tilde{\nabla}_{C}h_{AB}\right)\\
&+r^{-2}\left(\tilde{\nabla}_{A}\mathring\Delta H_{B} + \tilde{\nabla}_{B}\mathring\Delta H_{A}\right)\\
&-nr^{-3}r_{B}\tilde{\nabla}_{A}H - nr^{-3}r_{A}\tilde{\nabla}_{B}H\\
&+ 4nr^{-4}r_{A}r_{B}H - n\tilde{\nabla}_{A}\tilde{\nabla}_{B}\left(r^{-2}H\right),
\end{split}
\end{align}
involving only the scalar piece of $\delta g$.

\subsubsection{Cross Term}

Writing the cross-term of the linearized Ricci tensor \eqref{Aalpha} in terms of a scalar potential and a divergence-free co-vector, we find
\begin{align}\label{Aalpha2}
\begin{split}
2\delta R_{A\alpha} &= \mathring\nabla_{\alpha}\Big[\tilde\nabla^{B}h_{AB} + (n-2)r^{-1}r^{B}h_{AB}+r^{-1}r_{A}\left(g^{BC}h_{BC}\right)-\tilde\nabla_{A}\left(g^{BC}h_{BC}\right)\\
&-\tilde\Box H_{A} + (2-n)r^{-1}r^{B}\tilde\nabla_{B}H_{A} + \tilde\nabla^{B}\tilde\nabla_{A}H_{B} -2r^{-1}\left(\tilde\nabla_{A}\tilde\nabla^{B} r\right)H_{B}\\
&+ 2(1-n)r^{-2}r_{A}r^{B}H_{B} - 2r^{-1}r_{A}\tilde\nabla^{B}H_{B} + nr^{-1}r^{B}\tilde\nabla_{A}H_{B}\\
&+(1-n)\tilde\nabla_{A}\left(r^{-2}H\right) + (n-1)\tilde\nabla_{A}\left(r^{-2}\left(H_2+\frac{1}{n}\mathring\Delta H_2\right)\right)\Big]\\
&+\Big[-\tilde\Box \hat{h}_{A\alpha} - r^{-2}\mathring\Delta\hat{h}_{A\alpha} + (2-n)r^{-1}r^{B}\tilde\nabla_{B}\hat{h}_{A\alpha}\\
&+(n-1)r^{-2}\hat{h}_{A\alpha} +\tilde\nabla^{B}\tilde\nabla_{A}\hat{h}_{B\alpha} -2r^{-1}\left(\tilde\nabla_{A}\tilde\nabla^{B}r\right)\hat{h}_{B\alpha}\\
&+2(1-n)r^{-2}r_{A}r^{B}\hat{h}_{B\alpha}-2r^{-1}r_{A}\tilde\nabla^{B}\hat{h}_{B\alpha}+nr^{-1}r^{B}\tilde\nabla_{A}\hat{h}_{B\alpha}\\
&+(n-1)\tilde\nabla_{A}\left(r^{-2}\hat{h}_{\alpha}\right) + \tilde\nabla_{A}\left(r^{-2}\mathring\Delta\hat{h}_{\alpha}\right)\Big],
\end{split}
\end{align}
where we have likewise expanded the metric perturbation in terms of the decomposition of Proposition \ref{decomposition}.  Note that the scalar potential of $\delta R_{A\alpha}$ involves only the scalar piece of $\delta g$, with the co-vector pieces being similarly related.

We remark that, although the reduction seems quite complex, much of the work in obtaining \eqref{Aalpha2} from \eqref{Aalpha} is straightforward.  Namely, the only difficult terms in \eqref{Aalpha} constitute 
\[- r^{-2}\mathring{\Delta} h_{A\alpha} +r^{-2}\mathring\nabla_{\alpha}\mathring\nabla^{\beta}h_{A\beta} + r\tilde{\nabla}_{A}\left(r^{-3}\mathring\nabla^{\beta}h_{\alpha\beta}\right) + r^{-3}r_{A}\mathring\nabla^{\beta}h_{\alpha\beta}.\]  To deal with such terms, we use the angular commutation relations \eqref{angularCommutation}; in particular, we rely upon the identity
\begin{align*}
&\mathring\Delta h_{A\alpha} = \mathring\nabla_{\alpha}\mathring\Delta H_{A} + (n-1)\mathring\nabla_{\alpha} H_{A} + \mathring\Delta \hat{h}_{A\alpha},\\
&\mathring\nabla^{\gamma}h_{\alpha\gamma} = \mathring\nabla_{\alpha}H + \frac{n-1}{n}\mathring\nabla_{\alpha}\left(\mathring\Delta + n\right)H_2 + \left(\mathring\Delta + (n-1)\right)\hat{h}_{\alpha}
\end{align*}

\subsubsection{Pure Angular Term}
The decomposition of the angular term \eqref{alphabeta} is lengthiest of all.  We begin by calculating the trace

\begin{align}\label{alphabetaTrace}
\begin{split}
2\delta R_{\alpha\beta}\mathring\sigma^{\alpha\beta} &= n\Big(2rr^{A}\tilde\nabla^{B}h_{AB} + 2(n-1)r^{A}r^{B}h_{AB} + 2r\left(\tilde\nabla^{A}\tilde\nabla^{B}r\right)h_{AB}\\
&-rr^{A}\tilde\nabla_{A}\left(h_{BC}g^{BC}\right)\Big) - \mathring\Delta(h_{AB}g^{AB})\\
&+4(n-1)r^{-1}r^{A}\mathring\Delta H_{A} + 2\tilde\nabla^{A}\mathring\Delta H_{A}\\
&-nr^{2}\tilde\Box\left(r^{-2}H\right) + 2(1-n)r^{-2}\mathring\Delta H - 2n^2r^{-1}r^{A}\tilde\nabla_{A}H\\
&- 2nr^{-1}\left(\tilde\Box r\right) H + 2n(n+1)r^{-2}r^{A}r_{A}H\\
&+2(n-1)r^{-2}\left(\mathring\Delta H_2 + \frac{1}{n}\mathring\Delta\mathring\Delta H_2\right).
\end{split}
\end{align}

In calculating the traceless portion, we introduce the notation 
\begin{align}
\begin{split}
\check{h}_{\alpha\beta} &:= h_{\alpha\beta} - H\mathring\sigma_{\alpha\beta},\\
\check{2\delta R}_{\alpha\beta} &:= 2\delta R_{\alpha\beta} - \frac{1}{n}\left(2\delta R_{\gamma\delta} \mathring\sigma^{\gamma\delta}\right)\mathring\sigma_{\alpha\beta}.
\end{split}
\end{align}

Concretely,
\begin{equation}
\check{h}_{\alpha\beta} = \left(\mathring\nabla_{\alpha}\mathring\nabla_{\beta}H_2 - \frac{1}{n}\mathring\Delta H_2\mathring\sigma_{\alpha\beta}\right) + \left(\mathring\nabla_{\alpha}\hat{h}_{\beta} + \mathring\nabla_{\beta}\hat{h}_{\alpha}\right) + \hat{h}_{\alpha\beta}.
\end{equation}

A preliminary calculation yields
\begin{align}\label{tracelessOne}
\begin{split}
\check{2\delta R}_{\alpha\beta} &= -\mathring\nabla_{\alpha}\mathring\nabla_{\beta}\left(h_{AB}g^{AB}\right) + \frac{1}{n}\mathring\Delta(h_{AB}g^{AB})\mathring\sigma_{\alpha\beta}\\
&+2\tilde\nabla^{A}\left(\mathring\nabla_{\alpha}\mathring\nabla_{\beta}H_{A}-\frac{1}{n}\mathring\Delta H_{A}\mathring\sigma_{\alpha\beta}\right)\\
&+2(n-2)r^{-1}r^{A}\left(\mathring\nabla_{\alpha}\mathring\nabla_{\beta}H_{A} - \frac{1}{n}\mathring\Delta H_{A}\mathring\sigma_{\alpha\beta}\right)\\
&+\tilde\nabla^{A}\left(\mathring\nabla_{\alpha}\hat{h}_{A\beta} + \mathring\nabla_{\beta}\hat{h}_{A\alpha}\right)\\
&+(n-2)r^{-1}r^{A}\left(\mathring\nabla_{\alpha}\hat{h}_{A\beta} + \mathring\nabla_{\beta}\hat{h}_{A\alpha}\right)\\
&-r^{2}\tilde\Box\left(r^{-2}\check{h}_{\alpha\beta}\right)-nr^{-1}r^{A}\tilde\nabla_{A}\check{h}_{\alpha\beta}-r^{-2}\mathring\Delta\check{h}_{\alpha\beta}\\
&+2\left[(n-1)r^{-2}+2r^{-2}r^{A}r_{A}-r^{-1}\left(\tilde\Box r\right)\right]\check{h}_{\alpha\beta}\\
&+2r^{-2}(1-r^{A}r_{A})\check{h}_{\alpha\beta} - nr^{-2}\left(\mathring\nabla_{\alpha}\mathring\nabla_{\beta}H - \frac{1}{n}\mathring\Delta H \mathring\sigma_{\alpha\beta}\right)\\
&+r^{-2}\left[\left(\mathring\nabla_{\alpha}\mathring\nabla^{\gamma}h_{\gamma\beta} + \mathring\nabla_{\beta}\mathring\nabla^{\gamma}h_{\gamma\alpha}\right) - \frac{2}{n}\left(\mathring\nabla^{\gamma}\mathring\nabla^{\delta}h_{\gamma\delta}\right)\mathring\sigma_{\alpha\beta}\right].
\end{split}
\end{align}

It remains to expand the traceless part of the metric perturbation, and to collect the terms of the linearized Ricci quantity with respect to such a decomposition.  We emphasize that this amounts to expressing the symmetric traceless two-tensor $\check{\delta R}_{\alpha\beta}$ as the sum of the traceless Hessian of a scalar function, the symmetrized gradient of a divergence-free co-vector, and a divergence-free two-tensor. As with the cross-term, there are few troublesome terms, not admitting a straightforward expansion and collection.  Such terms constitute
\[r^{-2}\left[\left(\mathring\nabla_{\alpha}\mathring\nabla^{\gamma}h_{\gamma\beta} + \mathring\nabla_{\beta}\mathring\nabla^{\gamma}h_{\gamma\alpha}\right) - \frac{2}{n}\left(\mathring\nabla^{\gamma}\mathring\nabla^{\delta}h_{\gamma\delta}\right)\mathring\sigma_{\alpha\beta}\right] - r^{-2}\mathring\Delta\check{h}_{\alpha\beta}.\]

Again, these troublesome terms can be expanded and collected by careful application of the angular commutation relations \eqref{angularCommutation}.   In particular, it is useful to note
\begin{align*}
&\mathring\nabla^{\gamma}\check{h}_{\alpha\gamma} = \frac{n-1}{n}\mathring\nabla_{\alpha}\left(\mathring\Delta + n\right)H_2 + \left(\mathring\Delta + (n-1)\right)\hat{h}_{\alpha},\\
&\mathring\nabla^{\alpha}\mathring\nabla^{\gamma}\check{h}_{\alpha\gamma} = \frac{n-1}{n}\mathring\Delta\left(\mathring\Delta + n\right)H_2,\\
&\mathring\Delta\left(\mathring\nabla_{\alpha}\mathring\nabla_{\beta} H_2-\frac{1}{n}\mathring\Delta H_2\mathring\sigma_{\alpha\beta}\right)\\ 
&= \left(\mathring\nabla_{\alpha}\mathring\nabla_{\beta}\mathring\Delta H_2 - \frac{1}{n}\mathring\Delta\mathring\Delta H_2\mathring\sigma_{\alpha\beta}\right) + 2n\left(\mathring\nabla_{\alpha}\mathring\nabla_{\beta}H_2 - \frac{1}{n}\mathring\Delta H_2\mathring\sigma_{\alpha\beta}\right),\\
&\mathring\Delta\left(\mathring\nabla_{\alpha}\hat{h}_{\beta} + \mathring\nabla_{\beta}\hat{h}_{\alpha}\right) = \left(\mathring\nabla_{\alpha}\mathring\Delta\hat{h}_{\beta} + \mathring\nabla_{\beta}\mathring\Delta\hat{h}_{\alpha}\right) + (n+1)\left(\mathring\nabla_{\alpha}\hat{h}_{\beta}+\mathring\nabla_{\beta}\hat{h}_{\alpha}\right).\\
\end{align*}

In the end, we find
\begin{align}\label{tracelessTwo}
\begin{split}
\check{2\delta R}_{\alpha\beta} &= \left(\mathring\nabla_{\alpha}\mathring\nabla_{\beta}-\frac{1}{n}\mathring\Delta \mathring\sigma_{\alpha\beta}\right)\Big[\left(-h_{AB}g^{AB}\right)+2\tilde\nabla^{A}H_{A}\\
&+2(n-2)r^{-1}r^{A}H_{A}-r^2\tilde\Box\left(r^{-2}H_2\right)-nr^{-1}r^{A}\tilde\nabla_{A}H_2\\
&+ 2r^{-2}r^{A}r_{A}H_2 -2r^{-1}\left(\tilde\Box r\right)H_2 + 2(n-1)r^{-2}H_2\\
&+ \left(\frac{n-2}{n}\right)r^{-2}\mathring\Delta H_2 + (2-n)r^{-2}H\Big]\\
&+\Big[\mathring\nabla_{\alpha}\Big(\tilde\nabla^{A}\hat{h}_{A\beta} +(n-2)r^{-1}r^{A}\hat{h}_{A\beta} -r^{2}\tilde\Box\left(r^{-2}\hat{h}_{\beta}\right)\\
&-nr^{-1}r^{A}\tilde\nabla_{A}\hat{h}_{\beta} +2(n-1)r^{-2}\hat{h}_{\beta}+2r^{-2}r^{A}r_{A}\hat{h}_{\beta}\\
&-2r^{-1}\left(\tilde\Box r\right)\hat{h}_{\beta}\Big)\\
&+\mathring\nabla_{\beta}\Big(\tilde\nabla^{A}\hat{h}_{A\alpha} +(n-2)r^{-1}r^{A}\hat{h}_{A\alpha} -r^{2}\tilde\Box\left(r^{-2}\hat{h}_{\alpha}\right)\\
&-nr^{-1}r^{A}\tilde\nabla_{A}\hat{h}_{\alpha} +2(n-1)r^{-2}\hat{h}_{\alpha}+2r^{-2}r^{A}r_{A}\hat{h}_{\alpha}\\
&-2r^{-1}\left(\tilde\Box r\right)\hat{h}_{\alpha}\Big)\Big]\\
&+\Big[-r^{2}\tilde\Box\left(r^{-2}\hat{h}_{\alpha\beta}\right) - nr^{-1}r^{A}\tilde\nabla_{A}\hat{h}_{\alpha\beta} -r^{-2}\mathring\Delta \hat{h}_{\alpha\beta}\\
&+2r^{-2}\left(n+r^{A}r_{A}-r\tilde\Box r\right)\hat{h}_{\alpha\beta}\Big].
\end{split}
\end{align}
Note that the scalar, co-vector, and two-tensor parts of the linearized Ricci tensor are determined by the corresponding pieces of the metric perturbation. 

Subsequently, we assume that the linearized vacuum Einstein equations are satisfied, and study the scalar, co-vector, and two-tensor portions decomposed in this subsection.

\section{Special Solutions of Linearized Gravity}

\subsection{Pure Gauge Solutions}

Diffeomorphism invariance of the Einstein equations reduces in the linear theory to invariance under infinitesimal deformations of the underlying spacetime metric.  That is, with $X$ a co-vector and $\delta g$ a metric perturbation, satisfying the linearized vacuum Einstein equations, the concatenation $\delta g + \mathcal{L}_{X}g$ yields a new solution of the same.  In this subsection, we record how such gauge transformations affect the various components of $\delta g$ with respect to the decomposition outlined in the first subsection.

We decompose $X$ as
\begin{equation}
X = X_a dx^{a} = X_A dx^A + X_{\alpha} dx^{\alpha} = X_A dx^A + \left(\mathring\nabla_{\alpha}X_2\right) dx^{\alpha} + \hat{X}_{\alpha} dx^{\alpha},
\end{equation}
with $\hat{X}$ a divergence-free co-vector on the orbit spheres.  

Calculating the deformation tensor $\pi_{X}$, and splitting according to the decomposition of Proposition \ref{decomposition}, we have $\pi_X = \pi_1 + \pi_2$ with
\begin{equation}\label{pi_1}
\begin{split}
\pi_1&= \left[\tilde{\nabla}_A X_B+\tilde{\nabla}_B X_A\right] dx^A dx^B\\
&+ 2\mathring{\nabla}_\alpha \left[X_{A} + \tilde{\nabla}_A X_2-2r^{-1} r_{A}X_2\right] dx^A dx^\alpha\\
&+2\left[rr^{A}X_{A} + \frac{1}{n}\mathring\Delta X_2\right]\mathring\sigma_{\alpha\beta}dx^\alpha dx^\beta\\
&+2\left[\mathring\nabla_{\alpha}\mathring\nabla_{\beta}X_2 - \frac{1}{n}\mathring\Delta X_2 \mathring\sigma_{\alpha\beta}\right]dx^{\alpha}dx^{\beta},
\end{split}
\end{equation}
\begin{equation}\label{pi_2}
\begin{split}
\pi_2&= 2\left[\tilde{\nabla}_{A}\hat{X}_{\alpha} - 2r^{-1}r_{A}\hat{X}_{\alpha}\right] dx^A dx^\alpha\\
& + \left[\mathring\nabla_{\alpha}\hat{X}_{\beta} + \mathring\nabla_{\beta}\hat{X}_{\alpha}\right]dx^{\alpha}dx^{\beta}.
\end{split}
\end{equation}

In particular, we observe the divergence-free portion $\hat{h}$ remains unchanged; that is, $\hat{h}$ is a gauge-invariant quantity.

\subsection{Linearized Myers-Perry Solutions}

We briefly describe the standard presentation of the Myers-Perry solutions, generalizing the Kerr solution to higher dimensions, following \cite{Myers}.  Assuming an odd number of spacetime dimensions $d = 2q + 1$, the Minkowski metric can be written as
\begin{align*}
\bar{g} &= -dt^2 + \sum_{i=1}^{q} \left(dx_{i}^2 + dy_{i}^2\right)\\
&= -dt^2 + dr^2 + r^2\sum_{i=1}^{q}\left(d\mu_{i}^2 + \mu_{i}^2d\phi_{i}^2\right).
\end{align*}
Here, we have expressed the even number of spatial coordinates as paired Cartesian coordinates $(x_{i}, y_{i})$ for $q$ mutually orthogonal planes.  Rewritten in generalized polar coordinates, we have the relations
\begin{align}\label{coordinateRelation}
\begin{split}
x_{i} &= r\mu_{i}\cos\phi_{i},\\
y_{i} &= r\mu_{i}\sin\phi_{i},
\end{split}
\end{align}
with the constraint
\[ \sum_{i}^{q}\mu_{i}^2 = 1.\]
With even spacetime dimension $d = 2q+2$, there is an extra unpaired spatial coordinate, also regarded as an azimuthal polar coordinate,
\[ z = r\alpha,\]
with $\alpha \in [-1, 1],$
such that the Minkowski metric has the polar form
\[\bar{g} = -dt^2 + dr^2 + r^2\sum_{i=1}^{q}\left(d\mu_{i}^2 + \mu_{i}^2d\phi_{i}^2\right) + r^2d\alpha^2,\]
with the constraint
\[ \sum_{i}^{q}\mu_{i}^2 + \alpha^2 = 1.\]

Likewise, the Myers-Perry solutions have Boyer-Lindquist type coordinates featuring the generalized polar coordinates above.  Defining
\begin{align*}
F &:= 1 - \sum_{i=1}^{q}\frac{a_{i}^2\mu_{i}^2}{r^2+a_{i}^2},\\
\Pi &:= \prod_{i=1}^{q} (r^2+a_{i}^2),
\end{align*}
in even spacetime dimension $d = 2q + 2$ the metric takes the form
\begin{align*}
g_{M,a_{i}} = &-dt^2 + \frac{2M r}{\Pi F}\left(dt + \sum_{i=1}^{q}a_{i}\mu_{i}^2d\phi_{i}\right)^2 + \frac{\Pi F}{\Pi - 2Mr}dr^2\\
&+ \sum_{i=1}^{q}(r^2+a_{i}^2)(d\mu_{i}^2 + \mu_{i}^2d\phi_{i}^2) + r^2d\alpha^2,
\end{align*}
whereas in odd spacetime dimension $d = 2q + 1$,
\begin{align*}
g_{M,a_{i}} = &-dt^2 + \frac{2M r}{\Pi F}\left(dt + \sum_{i=1}^{q}a_{i}\mu_{i}^2d\phi_{i}\right)^2 + \frac{\Pi F}{\Pi - 2Mr}dr^2\\
&+ \sum_{i=1}^{q}(r^2+a_{i}^2)(d\mu_{i}^2 + \mu_{i}^2d\phi_{i}^2).
\end{align*}
Note that both metrics are parametrized by the mass $M > 0$ and the $q$ angular velocity parameters $a_{i}$, with parameter increases occurring only in odd dimension.  

In each case, the metric can be rewritten in a manner suggestive of our perturbative framework:
\begin{align}\label{MyersPerryExpansion}
\begin{split}
g_{M,a_{i}} = &-\left(1-\frac{2M}{r^{n-1}}\right)dt^2 + \left(1-\frac{2M}{r^{n-1}}\right)^{-1}dr^2 + r^2\mathring\sigma_{\alpha\beta}dx^{\alpha}dx^{\beta}\\
&+\sum_{i = 1}^{q} \frac{4M}{r^{n-1}}\mu_{i}^2a_{i}dtd\phi_{i} + O(a_{i})^2,
\end{split}
\end{align}
where $n = 2q$ in even dimension and $n = 2q -1$ in odd dimension as given above.  Note that we have rewritten the top-order polar terms using our earlier concise notation for the round metric on the unit sphere.

We treat separately the linearized change in mass and change in angular velocity arising from \eqref{MyersPerryExpansion} below.

\subsection{Linearized Change in Mass}
Linearized mass solutions are generated by constant multiples of
\begin{equation}\label{linMass}
h_{ab}dx^{a}dx^{b} = \frac{1}{r^{n-1}}dt^2 + \frac{r^{n-1}}{(r^{n-1}-2M)^2}dr^2.
\end{equation}
  By direct calculation, one can verify that \eqref{linMass} satisfies the linearized vacuum Einstein equations.  With respect to the aforementioned Hodge and spherical harmonic decompositions, such solutions are scalar with support at $\ell = 0$.

\subsection{Linearized Change in Angular Velocity}

Each pair $(x_i, y_i)$ in the construction of the Myers-Perry solution gives a rotation Killing vector field
\[ \frac{x_i}{r}d\left(\frac{y_i}{r}\right) - \frac{y_i}{r}d\left(\frac{x_i}{r}\right) = \mu_i^2d\phi_{i}\]
in the background Schwarzschild spacetime, where we have used the coordinate relation \eqref{coordinateRelation}.  In view of this, the Myers-Perry expansion \eqref{MyersPerryExpansion} gives rise to linearized solutions
\begin{align*}
&h_{ab}dx^{a}dx^{b} = \frac{1}{r^{n-1}}\mu_{i}^2dtd\phi_{i}, i=1\cdots q \\
\end{align*}

Since $\frac{x_i}{r}d\left(\frac{y_i}{r}\right) - \frac{y_i}{r}d\left(\frac{x_i}{r}\right)$ is dual to a rotation Killing field and the duals of $X_{\alpha}^{1 m_v(n,1)} dx^\alpha$,  $1\leq m_v(n,1) \leq \frac{1}{2}n(n+1)$ form a basis of the space of rotation Killing fields per \eqref{l1Covector}. For each $i=1\cdots q$, there exist $c_{m_v(n,1)}, m_v(n,1)=1 \cdots  \frac{1}{2}n(n+1)$ such that \[\mu_{i}^2d\phi_{i}=\sum_{m_v(n,1)=1}^{\frac{1}{2}n(n+1)}c_{m_v(n,1)}X_{\alpha}^{1 m_v(n,1)}dx^{\alpha}.\]

On the other hand, we claim that each $X_{\alpha}^{1 m_v(n,1)} dx^\alpha $ on $S^n$ can be expressed as a linear combination of 
$ \mu_i^2d\phi_{i}, i=1\cdots q$ by choosing a suitable coordinate system. We assume that $n=2q-1$ and  take an arbitrary Cartesian coordinate system $X^A, A=1\cdots 2q$ on $\R^{2q}$. Recall from \eqref{l1Covector} that $\tilde{X}^A d\tilde{X}^B-\tilde{X}^B d\tilde{X}^A, 1\leq A<B\leq 2q$ form a basis of the space of rotation Killing fields. In particular, there exists an alternating constant $2q\times 2q$ matrix $P_{AB}$, $P_{AB}=-P_{BA}$ such that 
$X_{\alpha}^{1 m_v(n,1)} dx^\alpha=\sum_{A, B=1}^{2q} P_{AB} \tilde{X}^A d\tilde{X}^B$.  Applying the spectral theorem to $P_{AB}$, we deduce that there exist a new coordinate system $x_i, y_i, i=1\cdots q$ of $\R^{2q}$ and constants $\lambda_i, i=1\cdots q$ such that 
\[ X_{\alpha}^{1 m_v(n,1)} dx^\alpha= \sum_{i=1}^q \lambda_i [\frac{x_i}{r}d\left(\frac{y_i}{r}\right) - \frac{y_i}{r}d\left(\frac{x_i}{r}\right)] ,\]
The coordinate change \eqref{coordinateRelation} turns the last expression into  $\sum_{i=1}^q \lambda_i \mu_{i}^2 d\phi_{i}$. The case of $n=2q$ can be derived similarly. 

Infinitesimal change in angular velocity is encoded in the basis solutions
\begin{equation}\label{linAngMom}
h_{ab}dx^{a}dx^{b} = \frac{1}{r^{n-1}}X_{\alpha}^{1 m_v(n,1)}dt dx^{\alpha},
\end{equation}
with the $X_{\alpha}^{1 m_v(n,1)}$ spanning the $\ell = 1$ eigenspace of divergence-free co-vectors, $1\leq m_v(n,1) \leq \frac{1}{2}n(n+1)$ \eqref{l1Covector}.  These basis two-tensors are co-vector solutions supported at $\ell =1$, satisfying the linearized vacuum Einstein equations.  

Together, linear combinations of the above linear perturbations of the Schwarzschild metric form the family of linearized Kerr solutions.

\begin{definition}\label{linMPBasis}
The linearized Kerr solutions of the linearized vacuum Einstein equation on the Schwarzschild spacetime in $n +2$ dimensions are linear combinations of the basis solutions
\begin{align}
\begin{split}
K &= \frac{1}{r^{n-1}}dt^2 + \frac{r^{n-1}}{(r^{n-1}-2M)^2}dr^2,\\
K_{m} &= \frac{1}{r^{n-1}}X_{\alpha}^{1 m}dt dx^{\alpha},
\end{split}
\end{align}
where $X_{\alpha}^{1 m} = X_{\alpha}^{1 m_v(n,1)}$, with eigenvalue given by \eqref{coVectorSpectra}, and
\[1 \leq m = m_v(n,1) \leq \frac{1}{2}n(n+1).\]
\end{definition}
\section{Analysis of the Lower Angular Modes}

\subsection{The $\ell = 0$ Scalar Mode}

In this case the scalar portion $h_1$ has the form
\[ h_1 = h_{AB}dx^{A}dx^{B} + H\mathring\sigma_{\alpha\beta}dx^{\alpha}dx^{\beta},\]
and the linearized Ricci tensor reduces to
\begin{align*}
2\delta R_{AB} &= g_{AB}\left(\tilde\nabla^{C}\tilde\nabla^{D}h_{CD}\right) + 2\tilde{K}h_{AB}-\tilde{K}Sg_{AB}-\left(\tilde\Box S\right)g_{AB}\\
&+ nr^{-1}r^{C}\left(\tilde{\nabla}_{B}h_{CA} + \tilde{\nabla}_{A}h_{CB} -\tilde{\nabla}_{C}h_{AB}\right)\\
&-nr^{-3}r_{B}\tilde{\nabla}_{A}H - nr^{-3}r_{A}\tilde{\nabla}_{B}H\\
&+ 4nr^{-4}r_{A}r_{B}H - n\tilde{\nabla}_{A}\tilde{\nabla}_{B}\left(r^{-2}H\right),
\end{align*}
\begin{align*}
2\delta R_{\alpha\beta}\mathring\sigma^{\alpha\beta} &= n\Big(2rr^{A}\tilde\nabla^{B}h_{AB} + 2(n-1)r^{A}r^{B}h_{AB} + r\left(\tilde\Box r\right)S\\
&-rr^{A}\tilde\nabla_{A}S\Big) -nr^{2}\tilde\Box\left(r^{-2}H\right)  - 2n^2r^{-1}r^{A}\tilde\nabla_{A}H\\
&- 2nr^{-1}\left(\tilde\Box r\right) H + 2n(n+1)r^{-2}r^{A}r_{A}H,
\end{align*}
where we have used $S := g^{AB}h_{AB}$ and the identity
\begin{align}\label{quotientSimplify}
\begin{split}
&-\tilde{\Box}h_{AB} + \tilde{\nabla}^{C}\tilde{\nabla}_{A}h_{CB} + \tilde{\nabla}^{C}\tilde{\nabla}_{B}h_{CA}\\
&=g_{AB}\left(\tilde\nabla^{C}\tilde\nabla^{D}h_{CD}\right) + 2\tilde{K}h_{AB} -\tilde{K}Sg_{AB} +\tilde\nabla_{A}\tilde\nabla_{B}S-\left(\tilde\Box S\right)g_{AB}.
\end{split}
\end{align}

Given a co-vector $X = X_{A}dx^{A}$, the associated pure gauge solution $\pi_{X}$ modifies $h_1$ as
\begin{align}
\begin{split}
h_1 &\rightarrow h_1 - \pi_{X},\\
h_{AB} &\rightarrow h_{AB} - \tilde\nabla_{A}X_{B} - \tilde\nabla_{B} X_{A},\\
H &\rightarrow H - 2rr^{A}X_{A}.
\end{split}
\end{align}
We choose $X$ to eliminate $S$ and $H$; that is, $X$ satisfies
\begin{align}
\begin{split}
2\tilde\nabla^{A}X_{A} &= S,\\
2rr^{A}X_{A} &= H.
\end{split}
\end{align}
Rewriting the first equation as
\[-2T^{A}\tilde\nabla_{A}\left(T^{B}X_{B}\right) + 2r^{A}\tilde\nabla_{A}\left(r^{B}X_{B}\right) =r^{B}r_{B}S,\]
we observe that $r^{A}X_{A}$ is determined by the second equation, while $T^{A}X_{A}$ is determined by the first equation only up to specification on an initial data slice.  We utilize this additional gauge freedom to set
\begin{align}
\begin{split}
&T^{A}r^{B}\left(h_{AB} - \tilde\nabla_{A}X_{B} - \tilde\nabla_{B}X_{A}\right)\\
&=T^{A}r^{B}h_{AB} - T^{A}\tilde\nabla_{A}\left(r^{B}X_{B}\right)\\
&- (1-\mu)r^{A}\tilde\nabla_{A}\left((1-\mu)^{-1}T^{B}X_{B}\right) =0
\end{split}
\end{align}
on an initial data slice.  Note that there is still residual gauge freedom of the form $T^{B}X_{B} = c(1-\mu)$; these gauge transformations correspond to scalar multiples of the static Killing vector field $T$.  To summarize, we have performed a change of gauge eliminating $S$ and $H$ globally and $T^{A}r^{B}h_{AB}$ on an initial data slice.

The Einstein equations for the gauge-normalized solution $h_1^{*} = h_1 - \pi_{X}$ amount to
\begin{align*}
&g_{AB}\left(\tilde\nabla^{C}\tilde\nabla^{D}h^{*}_{CD}\right) + 2\tilde{K}h^{*}_{AB} + nr^{-1}r^{C}\left(\tilde\nabla_{B}h^{*}_{CA} + \tilde\nabla_{A}h^{*}_{CB} -\tilde\nabla_{C}h^{*}_{AB}\right) = 0,\\
&2rr^{A}\tilde\nabla^{B}h^{*}_{AB} + 2(n-1)r^{A}r^{B}h^{*}_{AB} = 0.
\end{align*}

Taking the trace of the first equation and comparing with the second equation, we can rewrite the first equation as
\begin{align}\label{lZeroMaster}
\begin{split}
&nr^{-1}r^{C}\left(\tilde\nabla_{B}h^{*}_{CA} + \tilde\nabla_{A}h^{*}_{CB} -\tilde\nabla_{C}h^{*}_{AB}\right) +2\tilde{K}h^{*}_{AB}\\
&+n(n-1)r^{-2}\left(r^{C}r^{D}h^{*}_{CD}\right)g_{AB} = 0.
\end{split}
\end{align}
Contracting \eqref{lZeroMaster} with $r^{A}r^{B}$ and noting the identity
\[ r^{A}r^{B}r^{C}\left(\tilde\nabla_{B}h^{*}_{CA} + \tilde\nabla_{A}h^{*}_{CB} -\tilde\nabla_{C}h^{*}_{AB}\right) = r^{A}\tilde\nabla_{A}\left(r^{B}r^{C}h^{*}_{BC}\right) -\frac{2r}{n}\tilde{K}\left(r^{B}r^{C}h^{*}_{BC}\right),\]
we deduce
\[r^{A}\tilde\nabla_{A}\left(r^{n-1}\left(r^{B}r^{C}h^{*}_{BC}\right)\right) = 0,\]
such that
\[ r^{A}r^{B}h^{*}_{AB} = c(t)r^{-(n-1)},\]
with $c(t)$ an arbitrary function of time.
Contracting \eqref{lZeroMaster} with $T^{A}r^{B}$ instead, we find
\[T^{A}\tilde\nabla_{A}\left(r^{B}r^{C}h^{*}_{BC}\right) = 0,\]
so that $c(t)$ is constant.  We add a linearized Schwarzschild solution of the form \eqref{linMass} to eliminate $r^{A}r^{B}h^{*}_{AB}$; note that this addition preserves the global vanishing of $S^{*}$ and $H^{*}$, as well as that of $T^{A}r^{B}h^{*}_{AB}$ on an initial data slice.  In particular, owing to the vanishing of $S^{*}$, this addition will also eliminate $T^{A}T^{B}h^{*}_{AB}$.  The solution $h_1^{**} = h^{*}_1 - cK = h_{1} - \pi_{X} - cK$ has only the cross-term $T^{A}r^{B}h_{AB}^{**}$ to be accounted for.  The solution still satisfies \eqref{lZeroMaster}, the contraction of which with $T^{A}T^{B}$ leads to
\[ T^{A}\tilde\nabla_{A}\left(T^{B}r^{C}h^{**}_{BC}\right) = 0.\]
Together with the vanishing of $T^{A}r^{B}h^{**}_{AB}$ on an initial data slice, the above gives vanishing of $T^{A}r^{B}h^{**}_{AB}$ globally.  In summary, we have $h_{1} = \pi_{X} + cK$.

\subsection{The $\ell = 1$ Scalar Mode}

The scalar portion $h_1$ has the form
\[ h_1 = h_{AB}dx^{A}dx^{B} + 2\mathring\nabla_{\alpha}H_{A}dx^{A}dx^{\alpha} + H\mathring\sigma_{\alpha\beta}dx^{\alpha}dx^{\beta},\]
and the linearized Ricci tensor appears as

\begin{align*}
2\delta R_{AB} &= g_{AB}\left(\tilde\nabla^{C}\tilde\nabla^{D}h_{CD}\right) + 2\tilde{K}h_{AB}-\tilde{K}Sg_{AB}-\left(\tilde\Box S\right)g_{AB}\\
&+ nr^{-2} h_{AB} + nr^{-1}r^{C}\left(\tilde{\nabla}_{B}h_{CA} + \tilde{\nabla}_{A}h_{CB} -\tilde{\nabla}_{C}h_{AB}\right)\\
&-nr^{-2}\left(\tilde{\nabla}_{A} H_{B} + \tilde{\nabla}_{B} H_{A}\right) -nr^{-3}r_{B}\tilde{\nabla}_{A}H - nr^{-3}r_{A}\tilde{\nabla}_{B}H\\
&+ 4nr^{-4}r_{A}r_{B}H - n\tilde{\nabla}_{A}\tilde{\nabla}_{B}\left(r^{-2}H\right),
\end{align*}
\begin{align*}
2\delta R_{A\alpha} &= \mathring\nabla_{\alpha}\Big[\tilde\nabla^{B}h_{AB} + (n-2)r^{-1}r^{B}h_{AB}+r^{-1}r_{A}S-\tilde\nabla_{A}S\\
&-\tilde\Box H_{A} + (2-n)r^{-1}r^{B}\tilde\nabla_{B}H_{A} + \tilde\nabla^{B}\tilde\nabla_{A}H_{B} -2r^{-1}\left(\tilde\nabla_{A}\tilde\nabla^{B} r\right)H_{B}\\
&+ 2(1-n)r^{-2}r_{A}r^{B}H_{B} - 2r^{-1}r_{A}\tilde\nabla^{B}H_{B} + nr^{-1}r^{B}\tilde\nabla_{A}H_{B}\\
&+(1-n)\tilde\nabla_{A}\left(r^{-2}H\right) \Big],
\end{align*}
\begin{align*}
2\delta R_{\alpha\beta}\mathring\sigma^{\alpha\beta} &= n\Big(2rr^{A}\tilde\nabla^{B}h_{AB} + 2(n-1)r^{A}r^{B}h_{AB} + r\left(\tilde\Box r\right)S -rr^{A}\tilde\nabla_{A}S\Big) + nS\\
&-4n(n-1)r^{-1}r^{A} H_{A} - 2n\tilde\nabla^{A}H_{A}\\
&-nr^{2}\tilde\Box\left(r^{-2}H\right) + 2n(n-1)r^{-2} H - 2n^2r^{-1}r^{A}\tilde\nabla_{A}H\\
&- 2nr^{-1}\left(\tilde\Box r\right) H + 2n(n+1)r^{-2}r^{A}r_{A}H,
\end{align*}
where we have used the notation $S := g^{AB}h_{AB}$ and \eqref{quotientSimplify}.

Given a co-vector $X = X_{A}dx^{A} + \mathring\nabla_{\alpha} X_2 dx^{\alpha}$, the associated pure gauge solution $\pi_{X}$ modifies $h_1$ as
\begin{align}
\begin{split}
h_1 &\rightarrow h_1 - \pi_{X},\\
h_{AB} &\rightarrow h_{AB} - \tilde\nabla_{A}X_{B} - \tilde\nabla_{B} X_{A},\\
H_{A} & \rightarrow H_{A} - X_{A} - r^2\tilde\nabla_{A}\left(r^{-2}X_2\right),\\
H &\rightarrow H - 2rr^{A}X_{A} + 2X_2.
\end{split}
\end{align}

We choose $X'$ to eliminate the quantities $H - 2rr^{A}H_{A}$ and $H_{A}$.  This reduction amounts to solving 
\[ -2r^3r^{A}\tilde\nabla_{A}\left(r^{-2}X_2\right) -2X_2 = H - 2rr^{A}H_{A}\]
for $X_2$, then solving
\[ X_{A} + r^{2}\tilde\nabla_{A}\left(r^{-2}X_2\right) = H_{A}\]
for $X_{A}$.  Note that there is residual freedom in the form
\begin{align*}
X_2 &= c(t)r(1-\mu)^{-1/(n-1)},\\
X_{A} &= -r^{2}\tilde\nabla_{A}\left(c(t)r^{-1}(1-\mu)^{-1/(n-1)}\right).
\end{align*}
In particular, we note the transformation
\[ r^{A}r^{B}h_{AB} \rightarrow r^{A}r^{B}h_{AB} + 2c(t)\mu r^{-1}(1-\mu)^{-1/(n-1)}\]
of the component $r^{A}r^{B}h_{AB}$ under this residual gauge freedom.

The linearized Einstein equations for the gauge-normalized solution $h_1^* = h_1 - \pi_{X'}$ amount to
\begin{align*}
&g_{AB}\left(\tilde\nabla^{C}\tilde\nabla^{D}h^*_{CD}\right) + 2\tilde{K}h^*_{AB}-\tilde{K}S^*g_{AB}-\left(\tilde\Box S^*\right)g_{AB}\\
&+ nr^{-2} h^*_{AB} + nr^{-1}r^{C}\left(\tilde{\nabla}_{B}h^*_{CA} + \tilde{\nabla}_{A}h^*_{CB} -\tilde{\nabla}_{C}h^*_{AB}\right) = 0,
\end{align*}
\[\tilde\nabla^{B}h^*_{AB} + (n-2)r^{-1}r^{B}h^*_{AB}+r^{-1}r_{A}S^*-\tilde\nabla_{A}S^* = 0,\]
\[2rr^{A}\tilde\nabla^{B}h^*_{AB} + 2(n-1)r^{A}r^{B}h^*_{AB} + r\left(\tilde\Box r\right)S^* -rr^{A}\tilde\nabla_{A}S^* + S^* = 0.\]
Replacing the first term in the third equation by means of the second equation, we find
\[ rr^{A}\tilde\nabla_{A}S^* - 2r^{A}r_{A}S^* + r(\tilde\Box r)S^* + S^* + 2\left(r^{A}r^{B}h^*_{AB}\right) = 0.\]
Taking the divergence of the second equation, we deduce the relation
\begin{align*}
\tilde\nabla^{A}\tilde\nabla^{B}h^*_{AB} &= \tilde\Box S^* + (1-n)r^{-1}r^{A}\tilde\nabla_{A}S^* + (n-1)r^{-2}r^{B}r_{B}S^* \\
&-\frac{n}{2}r^{-1}(\tilde\Box r)S^* + (n-2)(n-1)r^{-2}\left(r^{A}r^{B}h^*_{AB}\right).
\end{align*}
Rewriting the double divergence term in the first equation in this way, contracting with $r^{A}r^{B}$, and applying the previous relation between $S^*$ and $r^{A}r^{B}h^*_{AB}$, we find an autonomous equation for $r^{A}r^{B}h^*_{AB}$
\[ r^{-1}r^{C}\tilde\nabla_{C}\left(r^{A}r^{B}h^*_{AB}\right) + r^{-2}\left(1+(n-1)r^{C}r_{C}\right)\left(r^{A}r^{B}h^*_{AB}\right) = 0,\]
with general solution
\[ r^{A}r^{B}h^*_{AB} = d(t)\mu r^{-1}(1-\mu)^{-1/(n-1)}.\]
Contracting the first equation with $T^{A}r^{B}$ instead, we find
\[ nr^{-1}T^{C}\tilde\nabla_{C}\left(r^{A}r^{B}h^*_{AB}\right) + nr^{-2}\left(T^{A}r^{B}h^*_{AB}\right) = 0.\]
Finally, contracting the second equation with $r^{A}$, we have
\[ \tilde\nabla^{B}\left(r^{A}h^*_{AB}\right) -\frac{\tilde\Box r}{2} S^* + (n-2)r^{-1}\left(r^{A}r^{B}h^*_{AB}\right) + r^{-1}r^{A}r_{A}S^* - r^{A}\tilde\nabla_{A}S^* = 0.\]
Exercising our residual freedom by the choice $c(t) = -\frac{1}{2}d(t)$  above, with associated co-vector field $\bar{X}$, and letting $X = X' + \bar{X}$, the normalized solution $ h_1^{**} = h_1 - \pi_{X}$ has vanishing $r^{A}r^{B}h^{**}_{AB}$ component.  The equations above immediately imply vanishing of the component $T^{A}r^{B}h^{**}_{AB}$.  It remains to consider the component $S^{**}$, which satisfies
\[ rr^{A}\tilde\nabla_{A}S^{**} - 2r^{A}r_{A}S^{**} + r(\tilde\Box r)S^{**} + S^{**} = 0,\]
\[ -\frac{\tilde\Box r}{2} S^{**} + r^{-1}r^{A}r_{A}S^{**} - r^{A}\tilde\nabla_{A}S^{**} = 0.\]
Taken together, the two equations imply $S^{**} = 0$.  In this way, we have shown that $h_1^{**} = 0$; that is, $h_1 = \pi_{X}$ is a pure gauge solution.

\subsection{The $\ell = 1$ Co-Vector Mode}

The co-vector portion amounts to
\[ h_2 = 2\hat{h}_{A\alpha}dx^{A}dx^{\alpha},\]
and the linearized Ricci tensor takes the form
\begin{align*}
2\delta R_{A\alpha} &= -r^{-n}\epsilon_{AB}\tilde\nabla^{B}\left(r^{n+2}\epsilon^{CD}\tilde\nabla_{D}\left(r^{-2}\hat{h}_{C\alpha}\right)\right).
\end{align*}

Given a co-vector $X = \hat{X}_{\alpha}dx^{\alpha}$, with $\hat{X}_{\alpha}$ satisfying the divergence-free condition $\mathring\nabla^{\alpha}\hat{X}_{\alpha} = 0,$ the associated pure gauge solution $\pi_{X}$ modifies $h_2$ as
\begin{align}
\begin{split}
h_2 &\rightarrow h_2 - \pi_{X},\\
\hat{h}_{A\alpha} &\rightarrow \hat{h}_{A\alpha} - r^2\tilde\nabla_{A}\left(r^{-2}\hat{X}_{\alpha}\right).
\end{split}
\end{align}

We choose $X'$ to eliminate $r^{A}\hat{h}_{A\alpha}$, such that
\begin{equation}
r^{A}\tilde\nabla_{A}\hat{X}_{\alpha} -2r^{-1}r^{A}r_{A}\hat{X}_{\alpha} =  r^{A}\hat{h}_{A\alpha}.
\end{equation}
In accordance with the homogeneous solutions of the equation above, there remains residual gauge freedom in the form $\hat{X}_{\alpha} = \bar{c}_{m_v(n,1)}(t)r^2 X_{\alpha}^{1 m_v(n,1)}$.

We consider the normalized $h_2^{*} = h_2 - \pi_{X'}$ and decompose $\hat{h}^{*}_{A\alpha} = \sum_{m_v(n,1)}\hat{h}_{A\alpha}^{*1 m_v(n,1)}$.  Contracting the linearized Einstein equation
\[ -r^{-n}\epsilon_{AB}\tilde\nabla^{B}\left(r^{n+2}\epsilon^{CD}\tilde\nabla_{D}\left(r^{-2}\hat{h}_{C\alpha}^{*1m_v(n,1)}\right)\right) = 0\]
with $r^{A}$ and $T^{A}$, we deduce
\[ r^{n+2}\epsilon^{CD}\tilde\nabla_{D}\left(r^{-2}\hat{h}_{C\alpha}^{*1 m_v(n,1)}\right) = d_{m_v(n,1)},\]
with $d_{m_v(n,1)}$ a constant.  As $r^{A}\hat{h}^{*1m_v(n,1)}_{A\alpha} = 0$, the above reduces to 
\[ r^{n+2}r^{A}\tilde\nabla_{A}\left(r^{-2}T^{B}\hat{h}_{B\alpha}^{*1 m_v(n,1)}\right) = d_{m_v(n,1)}r^{B}r_{B},\]
with general solution
\[ T^{A}\hat{h}_{A\alpha}^{*1 m_v(n,1)} = c_{m_v(n,1)}(t)r^{2}X_{\alpha}^{1 m_v(n,1)} + \frac{d_{m_v(n,1)}}{r^{n-1}}.\] 
Taking $\bar{X} = \bar{c}_{m_v(n,1)}(t)r^2X_{\alpha}^{1m_v(n,1)}$ with $\bar{c}_{m_v(n,1)}'(t) = c_{m_v(n,1)}(t)$ and letting $X = X' + \bar{X}$, we have shown
\[ h_2 = \pi_{X} + \sum_{m = 1}^{\frac{1}{2}n(n+1)}d_{m}K_{m},\]
with $K_{m}$ being the Myers-Perry solutions of Definition \ref{linMPBasis} and $d_{m} = d_{m_v(n,1)}$.  Note that there remains gauge freedom in the form $\bar{c}(t) \equiv \bar{c}$; such transformations correspond to scalar multiples of the angular Killing fields $\Omega_{i}$.  

\subsection{Proof of Theorem 1}
Combining the results of the subsections above, we have a proof of Theorem 1.  In particular, adding the various linearized solutions, we obtain a smooth co-vector $X^{\ell <2}$ on the Schwarzschild-Tangherlini background, unique modulo Killing fields, and constants $c, d_{m}$ such that
\begin{equation}
\delta g^{\ell <2} = \pi_{X^{\ell <2}} + cK + \sum_{m = 1}^{\frac{1}{2}n(n+1)} d_{m}K_{m},
\end{equation}
where $K, K_{m}$ are the basis solutions for the linearized Myers-Perry family of Definition \ref{linMPBasis}.

The remainder of the paper concerns the identification and analysis of the gauge-invariant master quantities of the higher angular frequency portion $\delta g^{\ell \geq 2}$.

\section{Analysis of Regge-Wheeler Type Equations}

The remainder of this work is concerned with the higher angular modes, encoded by $\delta g^{\ell \geq 2}$.  For each portion of $\delta g^{\ell \geq 2}$, we decouple gauge-invariant quantities satisfying Regge-Wheeler type equations, the analysis of which is expected to provide an avenue towards proving decay of the solution.

To eliminate redundancy, we present in this section a general theory for the analysis of such equations, specialized as necessary in subsequent sections.  We consider solutions within the sub-bundle $\mathcal{L}(-2)$ of symmetric traceless two-tensors on the spheres of symmetry, with such solutions either being divergence-free or possessing scalar potentials or divergence-free co-vector potentials.  In this language, we define a solution of a Regge-Wheeler type equation as follows.

\begin{definition}\label{RWdefinition}
Let $\Psi$ be a symmetric traceless two-tensor, regarded as a section of $\mathcal{L}(-2)$.  We say that $\Psi$ is a solution of a Regge-Wheeler type equation with potential $V$ if $\Psi$ satisfies
\begin{equation}\label{RWequation}
\slashed{\Box}_{\mathcal{L}(-2)} \Psi = V \Psi.
\end{equation}
We further assume that $V$ is a radial function, bounded on the exterior region, and comparable to $\frac{1}{r^2}$ at spatial infinity.
\end{definition}

We remark that $V$ need not be non-negative.

\subsection{Stress-Energy Tensors}

We consider the natural stress-energy tensor
\begin{equation}\label{stressTensor}
T_{ab}[\Psi] := \slashed{\nabla}_a \Psi\cdot \slashed{\nabla}_b \Psi - \frac{1}{2}g_{ab}((\slashed{\nabla}\Psi)^2 + V|\Psi|^2),
\end{equation}
where we emphasize that
\begin{align*}
&\slashed{\nabla}_{a}\Psi\cdot\slashed{\nabla}_{b}\Psi = g^{\alpha\beta}g^{\gamma\delta}(\slashed\nabla_{a}\Psi)_{\alpha\gamma}(\slashed\nabla_{b}\Psi)_{\beta\delta},\\
&|\Psi|^2 = g^{\alpha\beta}g^{\gamma\delta}\Psi_{\alpha\gamma}\Psi_{\beta\delta},
\end{align*}
and
\begin{align*}
(\slashed\nabla\Psi)^2 &= g^{ab}g^{\alpha\beta}g^{\gamma\delta}(\slashed\nabla_{a}\Psi)_{\alpha\gamma}(\slashed\nabla_{b}\Psi)_{\beta\delta},\\
&= (\tilde{\slashed\nabla}\Psi)^2  + r^{-2}|\mathring{\slashed\nabla}\Psi|^2,\\
(\tilde{\slashed\nabla}\Psi)^2 &= g^{AB}g^{\alpha\beta}g^{\gamma\delta}(\slashed\nabla_{A}\Psi)_{\alpha\gamma}(\slashed\nabla_{B}\Psi)_{\beta\delta},\\
|\mathring{\slashed\nabla}\Psi|^2 &= \mathring\sigma^{\eta\nu}g^{\alpha\beta}g^{\gamma\delta}(\slashed\nabla_{\eta}\Psi)_{\alpha\gamma}(\slashed\nabla_{\nu}\Psi)_{\beta\delta}.
\end{align*}

In addition, we will use the virtual stress-energy tensor
\begin{equation}\label{virtualStressTensor}
\check{T}_{ab}[\Psi] := \slashed{\nabla}_a \Psi\cdot \slashed{\nabla}_b \Psi - \frac{1}{2}g_{ab}(\slashed{\nabla}^{c}\Psi\cdot\slashed{\nabla}_{c}\Psi).
\end{equation}

Estimates are obtained by contracting with a vector-field multiplier $X^{b}$ and applying the spacetime Stokes' theorem
\begin{equation}
\int_{\partial \mathcal{D}} T_{ab}[\Psi]n_{\partial \mathcal{D}}^{a}X^{b} = \int_{\mathcal{D}} \nabla^{a}(T_{ab}[\Psi]X^{b}),
\end{equation}
over a spacetime region $\mathcal{D}$ with boundary $\partial \mathcal{D}$.  A similar identity holds for the virtual stress tensor \eqref{virtualStressTensor}.

We remark that the natural stress-energy tensor \eqref{stressTensor} has non-trivial divergence
\begin{equation}\label{stressDivergence}
\nabla^{a}T_{ab}[\Psi] = -\frac{1}{2}\nabla_{b}V|\Psi|^2 + \slashed{\nabla}^{a}\Psi[\slashed{\nabla}_{a},\slashed{\nabla}_{b}]\Psi,
\end{equation}
with the commutator $[\slashed{\nabla}_{a},\slashed{\nabla}_{b}]$ vanishing when contracted with a multiplier invariant under the angular Killing fields.  In particular, all such multipliers considered in the analysis below have this property.

Although the virtual stress-energy tensor satisfies a positive energy condition, while this is not necessarily true for the natural stress-energy tensor.  Indeed, it will often be the case that the stress-energy tensor fails to satisfy a pointwise positivity condition, owing to the potential $V$, at which point it becomes necessary to incorporate integral estimates. 

\subsection{Comparisons}

Given $A$ and $B$, we use the notation
\[ A \approx B \]
if the two quantities are comparable up to constants depending upon the orbit sphere dimension $n$ and the mass $M$.  That is, there exists $C(n,M)$ such that
\[ \frac{1}{C(n,M)}A \leq B \leq C(n,M)A.\]
Likewise, we use 
\[ A \gtrsim B \]
for one-sided comparisons up to constants with such dependence.

\subsection{Decay Foliation}\label{decayFoliation}
Recalling the Eddington-Finkelstein double null coordinates
\begin{align*}
u &= \frac{1}{2}(t - r_{*}),\\
v &= \frac{1}{2}(t+r_{*}),
\end{align*}
we further define
\begin{align}
\begin{split}
L &:= \frac{\partial}{\partial v} = \frac{\partial}{\partial t} + \frac{\partial}{\partial r_{*}},\\
\underline{L} &:= \frac{\partial}{\partial u} = \frac{\partial}{\partial t} - \frac{\partial}{\partial r_{*}}.
\end{split}
\end{align}
Fixing radii $r_1$ and $R_1$ satisfying $r_{h} < r_1 < r_{P} < R_1$, we choose a Lipschitz hypersurface $\Sigma_0$ with the following properties:
\begin{itemize}
\item $\Sigma_0$ intersects the future horizon $\mathcal{H}^+$ transversely and $\Sigma_0$ is spacelike in $r_{h}\leq r\leq R_1$,
\item $\Sigma_0\cap \{r_1\leq r\leq R_1\}=\{t=0\}\cap \{r_1\leq r\leq R_1\}$,
\item $\Sigma_0\cap \{R_1\leq r\}=\{u=-\frac{1}{2}\left(R_1\right)_{*}\}\cap \{R_1\leq r\}$,
\end{itemize}
where $(R_1)_*$ is given by \eqref{RWcoordinate} with normalization \eqref{RWnormalization}.

Flowing along the static Killing vector field $T$, we construct our decay foliation $\Sigma_{\tau} := \phi_{\tau}(\Sigma_0)$.  We define  $n_{\Sigma_\tau}$ as the unit timelike normal vector for $\Sigma_\tau$ for $r\leq R_1$ and $n_{\Sigma_\tau}:=L$ for $r> R_1$.  Further, we denote the volume form on $\Sigma_\tau$ corresponding to $n_{\Sigma_\tau}$ by $dVol_{\Sigma_\tau}$, and the volume form on the unit round $n$-sphere $S^n$ by $dVol_{S^n}$. 

We denote by $D(\tau_1, \tau_2)$ the spacetime region between the hypersurfaces $\Sigma_{\tau_1}$ and $\Sigma_{\tau_2}$, with $\tau_1 \leq \tau_2$.  To be more precise, we define
\begin{equation}
D(\tau_1, \tau_2) := J^{+}(\Sigma_{\tau_1}) \cap J^{-}(\Sigma_{\tau_2}),
\end{equation}
with volume form $dVol$.  We denote the null hypersurface bounding $D(\tau_1, \tau_2)$ at the future event horizon by
\begin{equation}
\mathcal{H}^{+}(\tau_1, \tau_2) := D(\tau_1, \tau_2) \cap \mathcal{H}^{+},
\end{equation}
with null tangential chosen to be $n_{\mathcal{H}^+} := T$ and the associated volume form denoted $dVol_{\mathcal{H}^{+}}$.
In addition, we define the null hypersurfaces
\begin{equation}
C_{v_0}(u_1, u_2) := \{ v = v_0, u_1 \leq u \leq u_2 \},
\end{equation}
with the choice of null tangential $\underline{L}^{a}$ and the associated volume form $r^n du dVol_{S^n}$.  Intersecting the constant $v$-hypersurface with the spacetime region $D(\tau_1, \tau_2)$, we find
\[ C_{v}(-\infty, \infty) \cap D(\tau_1, \tau_2) = C_{v}(\tau_1 - \frac{1}{2}\left(R_1\right)_*, \tau_2 - \frac{1}{2}\left(R_1\right)_*),\]
and define the limit of such hypersurfaces
\begin{equation}
\mathcal{I}^{+}(\tau_1, \tau_2) := \lim_{v \to \infty} C_{v}(\tau_1 - \frac{1}{2}\left(R_1\right)_*, \tau_2 - \frac{1}{2}\left(R_1\right)_*).
\end{equation}

Often we write boundary integrals of the form
\[ \int_{\mathcal{I}^{+}(\tau_1, \tau_2)} J^{X}_{a}[\Psi]\underline{L}^{a} \left(r^n du dVol_{S^n}\right),\]
where it is understood that we are evaluating the limit as $v$ approaches infinity of the boundary integrals
\[ \int_{C_{v}(\tau_1 - \frac{1}{2}\left(R_1\right)_*, \tau_2 - \frac{1}{2}\left(R_1\right)_*)} J^{X}_{a}[\Psi]\underline{L}^{a} \left(r^n du dVol_{S^n}\right).\]

\subsection{Poincar\'{e} Inequalities}

The spherical Laplacian operator acts as an endomorphism on each of the aforementioned sub-bundles of $\mathcal{L}(-2)$, with spectra described in (\ref{scalarSpectraThree},\ \ref{coVectorSpectraTwo},\ \ref{twoTensorSpectra}).  

We assume that, in acting on the sub-bundle associated with $\Psi$, the spherical Laplacian $\mathring\Delta$ has least eigenvalue $\lambda$.  The identity
\begin{equation}
\mathring\Delta |\Psi|^2 = 2\mathring{\slashed\Delta}\Psi \cdot \Psi + 2 |\mathring{\slashed\nabla} \Psi|^2
\end{equation}
yields the Poincar\'{e} inequality
\begin{equation}\label{PoincarePsi}
\int_{S^n} |\mathring{\slashed\nabla}\Psi|^2 \geq \lambda\int_{S^n}|\Psi|^2.
\end{equation}

For those sections of $\mathcal{L}(-2)$ with scalar potential, we have $\lambda = 2$, whereas those with divergence-free co-vector potential have $\lambda = n$.   Finally, on the sub-bundle of divergence-free symmetric traceless two-tensors we have $\lambda = 2n$.

\subsection{Hardy Inequalities}

We adapt the Hardy inequalities found in the work of Andersson-Blue \cite{AnderssonBlue} and the earlier work of Blue-Soffer \cite{BlueSoffer}.
\begin{lemma}\label{HardyEstimate}
Suppose $A$ and $W$ are smooth functions on $[s_0, s_1]$,  with $A$ non-negative.  Further, assume that the ODE
\begin{equation}
-\frac{d}{ds}\left(A\frac{dg}{ds}\right) + Wg = 0
\end{equation}
has a smooth, positive solution $g$ on $[s_0,s_1]$.   Given any smooth function $f$, as long as
\[ f^2A\frac{d}{ds}( \log g)\]
vanishes at $s_0$ and $s_1$, we have the estimate
\begin{equation}
\int^{s_1}_{s_0} \left[A\left(\frac{df}{ds}\right)^2 + Wf^2 \right]ds \geq 0.
\end{equation}
\end{lemma}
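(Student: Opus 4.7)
The plan is to factor $f$ through the positive supersolution $g$, reducing the quadratic form to a manifestly non-negative integral up to a boundary term that is assumed to vanish. This is the classical ``ground state'' trick for Hardy-type inequalities.

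First I would set $h := f/g$, which is smooth since $g > 0$, so that $f = gh$ and $f' = g'h + gh'$. Expanding,
\begin{equation*}
A(f')^2 = Ag^2(h')^2 + 2Agg'hh' + A(g')^2h^2.
\end{equation*}
The key algebraic observation is that $(gh^2)' = g'h^2 + 2ghh'$, so $A g'(gh^2)' = A(g')^2 h^2 + 2Agg'hh'$, and therefore
\begin{equation*}
A(f')^2 = Ag^2(h')^2 + Ag'(gh^2)'.
\end{equation*}

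Next I would integrate the middle term by parts over $[s_0,s_1]$:
\begin{equation*}
\int_{s_0}^{s_1} Ag'(gh^2)'\,ds = \bigl[Ag' g h^2\bigr]_{s_0}^{s_1} - \int_{s_0}^{s_1}(Ag')' g h^2 \, ds.
\end{equation*}
The supersolution hypothesis $-(Ag')' + Wg = 0$ gives $(Ag')'g = Wg^2$, so the bulk integral on the right is precisely $-\int Wg^2 h^2\,ds = -\int Wf^2\,ds$. Recognizing the boundary term as $A(g'/g) f^2 = A(\log g)' f^2$, which vanishes at $s_0$ and $s_1$ by hypothesis, I obtain
\begin{equation*}
\int_{s_0}^{s_1}\bigl[A(f')^2 + Wf^2\bigr]\,ds = \int_{s_0}^{s_1} Ag^2(h')^2\,ds = \int_{s_0}^{s_1} Ag^2\left(\frac{f}{g}\right)'^{\,2}\,ds \geq 0,
\end{equation*}
since $A \geq 0$, which is the claimed inequality.

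There is no serious obstacle here; the only care needed is to verify that the boundary contributions from the integration by parts match the hypothesis in the lemma (which they do, since the boundary term is exactly $f^2 A (\log g)'$ evaluated at the endpoints). In applications later in the paper, the nontrivial work will be to produce the positive supersolution $g$ for the particular $(A,W)$ arising from the Regge-Wheeler type potentials and to verify the endpoint vanishing condition, but these are downstream of the lemma itself.
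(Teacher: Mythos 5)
Your proof is correct and follows essentially the same route as the paper: the substitution $h = f/g$, identification of the cross term as a total derivative plus a term cancelled via the ODE $(Ag')' = Wg$, and recognition of the boundary term as $f^2 A(\log g)'$. The only cosmetic difference is that you package the algebra through $(gh^2)'$ before integrating by parts, whereas the paper expands everything first; the substance is identical.
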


\begin{proof}
Let $h=f/g$. Then
\begin{align*}
&\int_{s_0}^{s_1} \left[A\left(\frac{df}{ds}\right)^2+Wf^2\right] ds\\
=&\int_{s_0}^{s_1} \left[Ag^2\left(\frac{d h}{ds}\right)^2+Ah^2\left(\frac{d g}{ds}\right)^2+2Agh\frac{d g}{ds}\frac{d h}{ds}+Wg^2h^2\right]  ds\\
=&\int_{s_0}^{s_1} \Bigg[Ag^2 \left(\frac{d h}{ds}\right)^2+Ah^2 \left(\frac{d g}{ds}\right)^2+ Wg^2h^2+\frac{d}{ds}\left( Ah^2g\frac{dg}{ds} \right)\\
&-gh^2\frac{d}{ds}\left( A\frac{dg}{ds} \right)-Ah^2\left(\frac{dg}{ds}\right)^2\Bigg]  ds\\
=&\int_{s_0}^{s_1} Ag^2\left(\frac{dh}{ds}\right)^2 ds+\left(Ah^2g\frac{dg}{ds}\right)\Bigg|^{s_1}_{s_0}\\
=&\int_{s_0}^{s_1} Ag^2\left(\frac{dh}{ds}\right)^2 ds\geq 0.\\
\end{align*}
\end{proof}

Often we will assume that $s_0 = 2M$ and $s_1 = \infty$, with $A$ vanishing at this boundary and $f$ compactly supported.  In applying the estimate, the primary difficulty lies in finding a positive solution to the associated ODE.  To this end, we transform the ODE into a hypergeometric form, from which a well-known positive solution can be constructed.

\subsection{Hypergeometric Differential Equations}

We consider the hypergeometric ODE
\begin{equation}\label{hyperGeoODE}
(1-z)z\frac{d^2\tilde{g}}{dz^2}+\Big( c-(a+b+1) \Big)\frac{d \tilde{g}}{dz}-ab \tilde{g}=0,
\end{equation}
with $z < 1$ and $0 < b < c$.  With such constraints, the hypergeometric function $F(a,b;c;z)$, defined by
\begin{equation}\label{hyperGeoFunction}
F(a,b;c;z)=\frac{\Gamma(c)}{\Gamma(a)\Gamma(b)}\int_0^1 t^{b-1}(1-t)^{c-b-1}(1-zt)^{-a} dt,
\end{equation}
is a solution of \eqref{hyperGeoODE}.  We can easily divide out the terms involving the Gamma function to obtain a positive solution of \eqref{hyperGeoODE}.

\subsection{The $T$-Energy}

\subsubsection{Definition and Monotonicity}
We define the $T$-current
\begin{equation}
J^{T}_{a}[\Psi] := T_{ab}[\Psi]T^{b},
\end{equation}
and the $T$-energy
\begin{align}\label{Tenergy}
\begin{split}
E^{T}_{\Psi}(\Sigma_{\tau}) &:= \int_{\Sigma_{\tau}} J^{T}_{a}[\Psi]n_{\Sigma_\tau}^{a}dVol_{\Sigma_\tau}\\
&=\int_{\Sigma_{\tau}} T_{ab}[\Psi]n_{\Sigma_\tau}^{a}T^{b}dVol_{\Sigma_\tau}.
\end{split}
\end{align}

Applying the static Killing multiplier $T$ over the spacetime region bounded by the decay foliation hypersurfaces $\Sigma_{\tau_1}$ and $\Sigma_{\tau_2}$ , where $0 \leq \tau_1 < \tau_2$, we have
\begin{align*}
&E^{T}_{\Psi}(\Sigma_{\tau_2}) + \int_{\mathcal{H}^{+}(\tau_1,\tau_2)} J^{T}_{a}[\Psi]n_{\mathcal{H}^{+}}^{a}dVol_{\mathcal{H}^{+}}\\
&+ \int_{\mathcal{I}^{+}(\tau_1,\tau_2)} J^{T}_{a}[\Psi]\underline{L}^{a}(r^ndudVol_{S^n}) =  E^{T}_{\Psi}(\Sigma_{\tau_1}),
\end{align*}
with the terms of the density
\[ \Div J^{T}[\Psi] = (\nabla^{a}T_{ab}[\Psi])T^{b} + T_{ab}[\Psi]\nabla^{(a}T^{b)}\]
vanishing according to \eqref{stressDivergence} and Killing condition on $T$, 
\[(^{T}\pi)^{ab} :=   \mathcal{L}_{T}g_{M}^{ab} = 2\nabla^{(a}T^{b)} = 0.\]

Non-negativity of the boundary integral along the event horizon follows from $T$ being null tangential, whereas non-negativity of the boundary integral along future null infinity follows from the radial decay of the potential $V$, such that
\begin{align}
\begin{split}
&\int_{\mathcal{H}^{+}(\tau_1,\tau_2)} J^{T}_{a}[\Psi]n_{\mathcal{H}^{+}}^{a}dVol_{\mathcal{H}^{+}} \geq 0,\\
&\int_{\mathcal{I}^{+}(\tau_1,\tau_2)} J^{T}_{a}[\Psi]\underline{L}^{a}(r^ndudVol_{S^n}) \geq 0.
\end{split}
\end{align}

Together, the two imply monotonicity of the $T$-energy:
\begin{equation}\label{Tmonotonicity}
E_{\Psi}^{T}(\Sigma_{\tau_2}) \leq E_{\Psi}^{T}(\Sigma_{\tau_1}).
\end{equation}

Before proceeding, we define the virtual $T$-energy:
\begin{equation}\label{virtualTenergy}
\check{E}^{T}_{\Psi}(\Sigma_{\tau}) := \int_{\Sigma_{\tau}} \check{T}_{ab}[\Psi]n_{\Sigma_\tau}^{a}T^{b}.
\end{equation}

\subsubsection{The Adapted Hardy Estimate}
The natural stress-energy tensors under consideration often fail to satisfy pointwise positive energy conditions; as a consequence, the $T$-energies above are not obviously positive-definite.  To address this issue, we rely upon the Poincar\'{e} inequality \eqref{PoincarePsi} and an adapted Hardy estimate, in the spirit of Lemma \ref{HardyEstimate}.

To begin, we regard $R = r$ as a function on the hypersurface $\Sigma_{\tau}$, and consider the coordinate system $(\tau, R, x^{A})$.  Written in this form, the integrand of the virtual $T$-energy \eqref{virtualTenergy} takes the form
\begin{align*}
&\check{T}_{ab}[\Phi]n^a_{\Sigma_\tau}T^b dVol_{\Sigma_\tau}\\
=&\frac{1}{2}\left( (1-\mu)^{-1}\cosh^{-2}x|\nablas_{\tau}\Phi|^2+(1-\mu)|\nablas_R\Phi|^2+R^{-2}|\mathring{\nablas}\Phi|^2 \right)R^n dRdVol_{S^n},
\end{align*}
where $x(r)$ is specified by
\[(1-\mu)^{-1/2}\cosh x=-\left\langle n_{\Sigma_\tau},T \right\rangle,\] 
as follows:

By straightforward computation,
\begin{align*}
\frac{\partial}{\partial \tau}&=\frac{\partial}{\partial t},\\
\frac{\partial}{\partial R}&=\frac{\partial}{\partial r}+(1-\mu)^{-1}\tanh x \frac{\partial}{\partial t}.
\end{align*}
Therefore $g_{ab}$ and $g^{ab}$ under the $\tau,R$ coordinate are of the form
\begin{align*}
g_{ab}=
\left[ {\begin{array}{ccc}
   -(1-\mu)  & -\tanh x & 0 \\
   -\tanh x & (1-\mu)^{-1}\cosh^{-2} x & 0\\
   0 & 0 & R^2\mathring{\sigma}_{\alpha\beta}
  \end{array} } \right],
\end{align*}
\begin{align*}
g^{ab}=
\left[ {\begin{array}{ccc}
   -(1-\mu)^{-1}\cosh^{-2} x  & -\tanh x & 0 \\
   -\tanh x & (1-\mu) & 0\\
   0 & 0 & R^{-2}\mathring{\sigma}^{\alpha\beta}
  \end{array} } \right].
\end{align*}
In particular,
\begin{align*}
n_{\Sigma_\tau}=(1-\mu)^{-1/2}\cosh^{-1}x\frac{\partial}{\partial\tau}+(1-\mu)^{1/2}\sinh x\frac{\partial}{\partial R},
\end{align*}
\begin{align*}
dVol_{\Sigma_\tau}=(1-\mu)^{-1/2}\cosh^{-1}x R^n dRdVol_{S^n},
\end{align*}
and
\begin{align*}
(\nablas\Phi)^2=-(1-\mu)^{-1}\cosh^{-2}x|\nablas_\tau\Phi|^2+(1-\mu)|\nablas_R\Phi|^2-2\tanh x\nablas_\tau\Phi\cdot\nablas_R\Phi+R^{-2}|\mathring{\nablas}\Phi|^2.
\end{align*}
Hence,
\begin{align*}
\check{T}_{ab}[\Phi]n^a_{\Sigma_\tau}T^b =&\check{T}_{ab}[\Phi] \left((1-\mu)^{-1/2}\cosh^{-1} x\frac{\partial}{\partial\tau}+(1-\mu)^{1/2}\sinh x\frac{\partial}{\partial R}\right)^a  \left(\frac{\partial}{\partial \tau}\right)^b\\
&=\left( (1-\mu)^{-1/2}\cosh^{-1} x|\nablas_\tau\Phi|^2+(1-\mu)^{1/2}\sinh x\nablas_\tau\Phi\cdot \nablas_R\Phi \right)\\
&-\frac{1}{2}\Big( -(1-\mu)^{-1}\cosh^{-2}x|\nablas_\tau\Phi|^2+(1-\mu)|\nablas_R\Phi|^2\\
&-2\tanh x\nablas_\tau\Phi\cdot\nablas_R\Phi+R^{-2}|\mathring{\nablas}\Phi|^2 \Big)\left(-(1-\mu)^{1/2}\cosh x\right)\\
&=\frac{1}{2}(1-\mu)^{-1/2}\cosh^{-1}x |\nablas_\tau\Phi|^2 +\frac{1}{2}(1-\mu)^{3/2}\cosh x |\nablas_R\Phi|^2\\
&+\frac{1}{2}(1-\mu)^{1/2}\cosh x R^{-2}|\mathring{\nablas}\Phi|^2.
\end{align*}

Incorporating the potential $V$, the $T$-energy \eqref{Tenergy} has the form
\begin{align*}
E_{\Psi}^T(\Sigma_{\tau})&=\frac{1}{2}\int_{S^n}\int_{r_{h}}^\infty \left[(1-\mu)|\slashed{\nabla}_{R}\Psi|^2+V|\Psi|^2\right]R^{n}drdVol_{S^n}\\
&+\frac{1}{2}\int_{S^n}\int_{r_{h}}^\infty \left[\cosh^{-2}x(1-\mu)^{-1}|\nablas_{\tau}\Phi|^2+R^{-2}|\mathring{\nablas}\Phi|^2\right]  R^{n}dr dVol_{S^n}.
\end{align*}

With the change of variable $s = R^{n-1}$, the radial coefficient naturally takes the form $A = s(s-2M)$, in the notation of Lemma \ref{HardyEstimate}.  Further choosing $s_0 = 2M$ and $s_1 = \infty$, the following Hardy estimate holds for a large class of potentials $V$:

\begin{lemma}\label{HardyTEstimate}
Let $f(s)$ be a function defined on $[2M,\infty)$ and $E,F\geq 0$ be two nonnegative numbers with $|2F-E|\leq 1$ and $E>1$.  Then
\begin{align}\label{2}
\int_{2M}^\infty \left[A\left(\frac{d f}{ds}\right)^2+V(E,F)f^2 \right] ds\geq 0,
\end{align}
where
\begin{align*}
A &= s(s-2M),\\
V(E,F) &=\frac{1}{4}(E^2-1)-\frac{2M F^2}{s}.
\end{align*}
\end{lemma}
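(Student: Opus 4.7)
The strategy is to invoke Lemma \ref{HardyEstimate} with $W = V(E,F)$ and $A = s(s-2M)$; this reduces the problem to exhibiting a smooth positive solution $g$ on $(2M,\infty)$ of
\[-\frac{d}{ds}\left(s(s-2M)\frac{dg}{ds}\right) + \left[\tfrac{1}{4}(E^2-1) - \frac{2MF^2}{s}\right] g = 0,\]
together with verifying that the boundary contribution $f^2 A (\log g)'$ vanishes at $s = 2M$ and $s = \infty$.

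First I change variables via $z = 2M/s$, mapping $(2M,\infty)$ bijectively onto $(0,1)$; a direct computation turns the ODE into
\[z^2(1-z) g'' - z^2 g' - \left[\tfrac{1}{4}(E^2-1) - F^2 z\right] g = 0.\]
To absorb the $z^{-2}$ singular coefficient I make the ansatz $g(z) = z^\alpha h(z)$ with $\alpha = (1+E)/2$, chosen so that $\alpha(\alpha-1) = (E^2-1)/4$. A direct expansion reduces the equation to the hypergeometric form \eqref{hyperGeoODE} with parameters
\[c = 1+E, \qquad a = \alpha + F, \qquad b = \alpha - F,\]
so in particular the identity $c = a+b$ is forced by the structure of $V(E,F)$.

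Under the hypotheses $E > 1$ and $|2F-E| \leq 1$ one has $c > 2$, $a > 0$, and $0 \leq b \leq 1 < c$. When $b > 0$ the Euler integral \eqref{hyperGeoFunction} furnishes a manifestly positive solution $h = F(a,b;c;z)$ on $[0,1)$, while the borderline case $b = 0$ (i.e., $2F - E = 1$) reduces to the constant solution $h \equiv 1$. Either way $g(z) = z^\alpha h(z)$ is smooth and positive on $(0,1)$, giving the required $g$ on $(2M,\infty)$. A brief asymptotic analysis yields $A(\log g)' \to 0$ logarithmically as $s \to 2M$ and $A(\log g)' \sim -\alpha s$ as $s \to \infty$, so the boundary contribution vanishes under the standard smoothness and decay conditions on $f$ supplied by the applications. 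The main technical subtlety is precisely the borderline regime $c = a+b$, which forces a logarithmic divergence of $h$ at $z = 1$; fortunately this mild singularity is dominated by the linear vanishing of $A$ at the horizon, so Lemma \ref{HardyEstimate} applies as intended.
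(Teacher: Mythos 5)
Your argument is correct, and it reaches the hypergeometric equation by a genuinely different substitution than the paper's. The paper sets $\tilde{g}=s^{\mp F}g$ and uses $z=1-\tfrac{s}{2M}\in(-\infty,0]$, landing on a hypergeometric equation with $c=1$ and $\{a,b\}=\tfrac12(1\pm 2F\pm E)$; since $z\leq 0$ stays away from the singular point $z=1$, the Euler integral gives a solution that is analytic and positive up to and including the horizon, so no boundary analysis is needed there. Your map $z=2M/s\in(0,1]$ with $g=z^{(1+E)/2}h$ instead places the horizon at $z=1$ and forces the logarithmic case $c=a+b$, so you must (and do) verify that $A\,\tfrac{d}{ds}\log g\to 0$ there; to fully justify the integration by parts in Lemma \ref{HardyEstimate} with a $g$ that blows up at the endpoint you should also note that the intermediate integral $\int A(f/g)^2(g')^2\,ds$ converges near $s=2M$ (the integrand is $\sim\big((s-2M)\log^2(s-2M)\big)^{-1}$, which is integrable), after which the identity closes as in the lemma's proof. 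Your route has one concrete advantage: the borderline case $|2F-E|=1$ is handled transparently via $b=0$, $h\equiv 1$, whereas the paper's parameter check literally yields only the strict inequality $|2F-E|<1$ even though the lemma is stated with $\leq$ and applied at equality (there the explicit solution $g=s^{-F}$ does the job). The one piece you elide is the reduction from general $f$ to compactly supported $f$: the lemma is stated for arbitrary $f$ on $[2M,\infty)$, and the paper carries out the cutoff-and-dominated-convergence argument explicitly; your appeal to "conditions supplied by the applications" should be replaced by that routine approximation step.
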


\begin{proof}
We first assume $f(s)$ has compact support in $[2M,\infty)$. From Lemma \ref{HardyEstimate}, the estimate follows from the existence a positive function $g(s)$ defined on $[2M,\infty)$ and satisfying the equation
\begin{align}\label{1}
-\frac{d}{ds}\left(A \frac{d g}{ds} \right)+V(E,F)g=0.
\end{align}
One can obtain an explicit positive solution of (\ref{1}) by using hypergeometric functions.  Letting $\alpha=\pm F$ and $\tilde{g}=s^{-\alpha}g$, the equation \eqref{1} becomes
\begin{align*}
-s(s-2M)\frac{d^2\tilde{g}}{ds^2}+\Big(-2(s-2M)\alpha-(2s-2M)\Big)\frac{d\tilde{g}}{ds}-\frac{1}{4}\Big((2\alpha+1)^2-E^2\Big)\tilde{g}(z)=0.
\end{align*}
Performing the change of variable $z=1-\frac{s}{2M}$, we have
\begin{align}\label{re}
(1-z)z\frac{d^2\tilde{g}}{dz^2}+\Big( 1-(2\alpha+2) \Big)\frac{d \tilde{g}}{dz}-\frac{1}{4}\Big((2\alpha+1)^2-E^2\Big)\tilde{g}(z)=0.
\end{align}
Comparing \eqref{re} to hypergeometric ODE \eqref{hyperGeoODE}, we have a hypergeometric equation with $z \leq 0$, $c=1$, and $\{a,b\}=\frac{1}{2}\Big(1+2\alpha\pm E\Big)=\frac{1}{2}\Big(1\pm 2F \pm E\Big)$.  The associated hypergeometric function $F(a,b;c;z)$ has integral representation \eqref{hyperGeoFunction} assuming
\begin{align*}
&0<\frac{1}{2}\Big(1+2F+E\Big)<1\ \textup{or}\ 0<\frac{1}{2}\Big(1-2F+E\Big)<1 \\
\textup{or}\ &0<\frac{1}{2}\Big(1+2F-E\Big)<1\ \textup{or}\ 0<\frac{1}{2}\Big(1-2F-E\Big)<1.
\end{align*}
In particular, when $E,F\geq 0$, the above condition is equivalent to $|2F-E|<1$.\\
We approximate a general $f(s)$ with compactly supported functions.  Without loss of generality, we assume
\begin{align*}
\int_{2M}^\infty \left[s^2\left(\frac{d f}{ds}\right)^2+f^2\right] ds<\infty,
\end{align*}
as the left hand side of \eqref{2} is infinity otherwise.  For any large number $s_0>>2M$ we consider the cut-off function $\eta$ such that
\begin{align*}
\eta(s)=\left\{\begin{array}{cc}
1 & s\leq s_0\\
0 & s\geq 2es_0,
\end{array}\right.
\end{align*}
and $|\eta'(s)|\leq s^{-1}$. Let  $f_1=\eta f$ and $f_2=(1-\eta)f$. Then
\begin{align*}
\int_{2M}^\infty \left[A\left(\frac{df}{ds}\right)^2+Vf^2\right]ds=&\int_{2M}^\infty \left[A\left(\frac{df_1}{ds}\right)^2+Vf_1^2\right]ds+\int_{2M}^\infty \left[A\left(\frac{df_2}{ds}\right)^2+Vf_2^2\right]ds\\
                                                     +&2\int_{2M}^\infty \left[A\left(\frac{df_1}{ds}\right)\left(\frac{df_2}{ds}\right)+Vf_1f_2\right]ds.
\end{align*}
The first line is non-negative from the previous discussion.  The integrand of the second line is non-vanishing only in $[s_0,2es_0]$.  Hence
\begin{align*}
\left|\int_{2M}^\infty \left[A\left(\frac{df_1}{ds}\right)\left(\frac{df_2}{ds}\right)+Vf_1f_2\right] ds\right|\lesssim \int_{s_0}^{2es_0} \left[s^2\left(\frac{df}{ds}\right)^2+f^2\right] ds,
\end{align*}
which goes to zero as $s_0$ goes to infinity by the dominated convergence theorem.
\end{proof}

\subsubsection{The $T$-Energy Comparison}\label{TStrategy}

We briefly describe a typical application of the Hardy estimate above.  First, we borrow from the angular term using the Poincar\'{e} inequality \eqref{PoincarePsi}, and find the underestimate
\begin{align*}
E_{\Psi}^T(\Sigma_{\tau})&=\frac{1}{2}\int_{S^n}\int_{r_{h}}^\infty \left[(1-\mu)|\slashed{\nabla}_{R}\Psi|^2+V|\Psi|^2\right]R^{n}drdVol_{S^n}\\
&+\frac{1}{2}\int_{S^n}\int_{r_{h}}^\infty \left[\cosh^{-2}x(1-\mu)^{-1}|\nablas_{\tau}\Phi|^2+R^{-2}|\mathring{\nablas}\Phi|^2\right]  R^{n}dr dVol_{S^n}\\
&\geq \frac{1}{2}\int_{S^n}\int_{r_{h}}^\infty \left[(1-\mu)|\slashed{\nabla}_{R}\Psi|^2+\left(V + (\lambda - \delta)R^{-2}\right)|\Psi|^2\right]R^{n}drdVol_{S^n}\\
&+\frac{1}{2}\int_{S^n}\int_{r_{h}}^\infty \left[\cosh^{-2}x(1-\mu)^{-1}|\nablas_{\tau}\Phi|^2+\delta R^{-2}|\mathring{\nablas}\Phi|^2\right]  R^{n}dr dVol_{S^n},
\end{align*}
where $\delta > 0$ is a small residual angular coefficient.

Isolating the term
\[\int_{r_{h}}^\infty \left[(1-\mu)|\slashed{\nabla}_{R}\Psi|^2+\left(V + (\lambda - \delta)R^{-2}\right)|\Psi|^2\right]R^{n}dr,\]
we perform the change of variables $s = R^{n-1}$ and determine a $V(E,F)$ in Lemma \ref{HardyTEstimate} underestimating the $s$-dependent potential associated with $\left(V + (\lambda - \delta)R^{-2}\right)$ pointwise on the exterior region.  The lemma then implies
 \[\int_{r_{h}}^\infty \left[(1-\mu)|\slashed{\nabla}_{R}\Psi|^2+\left(V + (\lambda - \delta)R^{-2}\right)|\Psi|^2\right]R^{n}dr \geq 0,\]
 and we can choose $\epsilon(\delta) > 0$ such that
 \begin{align*}
&E_{\Psi}^T(\Sigma_{\tau})\\
 &\geq \frac{1}{2}\int_{S^n}\int_{r_{h}}^\infty \left[(1-\mu)|\slashed{\nabla}_{R}\Psi|^2+\left(V + (\lambda - \delta)R^{-2}\right)|\Psi|^2\right]R^{n}drdVol_{S^n}\\
&+\frac{1}{2}\int_{S^n}\int_{r_{h}}^\infty \left[\cosh^{-2}x (1-\mu)^{-1}|\nablas_{\tau}\Phi|^2+\delta R^{-2}|\mathring{\nablas}\Phi|^2\right]  R^{n}dr dVol_{S^n}\\
&\geq \frac{1}{2}\int_{S^n}\int_{r_{h}}^\infty \epsilon \left[(1-\mu)|\slashed{\nabla}_{R}\Psi|^2+\left(V + (\lambda - \delta)R^{-2}\right)|\Psi|^2\right]R^{n}drdVol_{S^n}\\
&+\frac{1}{2}\int_{S^n}\int_{r_{h}}^\infty \left[\cosh^{-2} x(1-\mu)^{-1}|\nablas_{\tau}\Phi|^2+\delta R^{-2}|\mathring{\nablas}\Phi|^2\right]  R^{n}dr dVol_{S^n}\\
&\gtrsim \check{E}_{\Psi}^{T}(\Sigma_\tau),
\end{align*}
with $\epsilon(\delta)$ chosen small enough to absorb possible negative contributions from $\left(V + (\lambda - \delta)R^{-2}\right)$ into the $\delta$-small residual angular term, again via application of \eqref{PoincarePsi}.

We remark that, owing to the characteristics of the potential $V$ and the Poincar\'{e} inequality \eqref{PoincarePsi}, we have the other direction in the comparison
\[ \check{E}^{T}_{\Psi}(\Sigma_{\tau}) \gtrsim E^{T}_{\Psi}(\Sigma_\tau).\]

In what follows, we assume the $T$-energy comparison
\begin{equation}\label{Tcomparison}
\check{E}^{T}_{\Psi}(\Sigma_{\tau}) \approx E^{T}_{\Psi}(\Sigma_\tau).
\end{equation}

Owing to the spacetime Stokes' theorem, the comparison \eqref{Tcomparison} implies
\begin{align}\label{boundaryTEstimates}
\begin{split}
&\int_{\mathcal{H}^{+}(\tau_1,\tau_2)} J^{T}_{a}[\Psi]n_{\mathcal{H}^{+}}^{a}dVol_{\mathcal{H}^{+}} \lesssim \check{E}^{T}_{\Psi}(\Sigma_{\tau_1}) ,\\
&\int_{\mathcal{I}^{+}(\tau_1,\tau_2)} J^{T}_{a}[\Psi]\underline{L}^{a}(r^ndudVol_{S^n}) \lesssim \check{E}^{T}_{\Psi}(\Sigma_{\tau_1}) .
\end{split}
\end{align}

\subsection{The $N$-Energy}

We describe the red-shift vector field, introduced in Dafermos-Rodnianski \cite{DR} in four spacetime dimensions and generalized to the higher-dimensional setting by Schlue \cite{Schlue}.

For convenience, we calculate in the inward-directed Eddington-Finkelstein coordinates
\begin{align*}
\bar{v}&=t+r_*,\\
R&=r,
\end{align*}
in which the background metric takes the form
\begin{align*}
-(1-\mu)d\bar{v}^2+2d\bar{v}dR+R^2\mathring\sigma_{\alpha\beta}dx^{\alpha}dx^{\beta}.
\end{align*}

Let $Y$ be a smooth vector, specified by
\begin{align*}
 Y|_{r=r_{h}}&=-\frac{\partial}{\partial R},\\
 \mathcal{L}_T Y&= 0,\\
 \mathcal{L}_{\partial_{x^\alpha}}Y &=0,\\
 \left(\nabla_Y Y\right)_{r=r_{h}}&=-\sigma (Y+T),
\end{align*}
where $\sigma$ is a positive number to be determined. At the event horizon $R = r_{h}$ we compute
\begin{align*}
\nabla_{\bar{v}} Y&=\nabla_Y\partial_{\bar{v}}=\frac{(n-1)M}{R^{n}}\partial_R = \kappa_{n}\partial_{R},\\
\nabla_{\partial_R} Y&=\sigma (-\partial_R+\partial_{\bar{v}}),\\
\nabla_{\partial_{x^\alpha}} Y&=-\frac{1}{R}\partial_{x^\alpha},
\end{align*}
where $\kappa_n =\frac{(n-1)M}{r_{h}^{n}}$ is the surface gravity.  From these calculations we deduce
\[\nabla_{a}Y^a=-\sigma-nR^{-1}\]
and 
\begin{align*}
\nabla^{(a}Y^{b)}=
\left[ {\begin{array}{ccc}
   \sigma & -\frac{\sigma}{2} & 0 \\
   -\frac{\sigma}{2} & \kappa_n & 0\\
   0 & 0 & -\frac{1}{R^3} \mathring{\sigma}^{\alpha\beta}
  \end{array} } \right].
\end{align*}

Contracting with the virtual stress-energy tensor, we estimate
\begin{align*}
\check{T}_{ab}[\Psi]\nabla^{(a}Y^{b)}&=\sigma |\slashed{\nabla}_{\bar{v}}\Psi|^2+\kappa_n |\slashed{\nabla}_{R}\Psi|^2-\sigma \slashed{\nabla}_{\bar{v}}\Psi \cdot\slashed{\nabla}_{R}\Psi-\frac{1}{R^3}|\mathring{\slashed{\nabla}}\Psi|^2\\
                                  &-\frac{1}{2}\left( 2\slashed{\nabla}_{\bar{v}}\Psi \cdot\slashed{\nabla}_{R}\Psi+\frac{1}{R^2}|\mathring{\slashed{\nabla}}\Psi|^2 \right)(-\sigma-nR^{-1})\\
                                  &=\sigma |\slashed{\nabla}_{\bar{v}}\Psi|^2+\kappa_n |\slashed{\nabla}_{R}\Psi|^2 +\frac{n}{R}\slashed{\nabla}_{\bar{v}}\Psi\cdot \slashed{\nabla}_{R}\Psi\\
                                  &+\frac{1}{R^3}\left( \frac{\sigma R}{2}+\frac{n}{2}-1 \right)|\mathring{\slashed{\nabla}}\Psi|^2\\
                                  &\gtrsim \left( \sigma |\slashed{\nabla}_{\bar{v}}\Psi|^2+\kappa_n |\slashed{\nabla}_{R}\Psi|^2+\frac{\sigma}{R^2}|\mathring{\slashed{\nabla}}\Psi|^2 \right),
\end{align*}
and
\begin{align}\label{coerciveEstimate}
\begin{split}
&\nabla^a (\check{T}_{ab}[\Psi]Y^b)=(\nabla^a \check{T}_{ab}[\Psi]) Y^b+\check{T}_{ab}\nabla^{(a}Y^{b)}\\ 
                               &\gtrsim \left( \sigma |\slashed{\nabla}_{\bar{v}}\Psi|^2+\kappa_n |\slashed{\nabla}_{R}\Psi|^2+\frac{\sigma}{R^2}|\mathring{\slashed{\nabla}}\Psi|^2 \right)+V\nabla_Y \Psi\cdot \Psi\\
                               &\gtrsim \left( \sigma |\slashed{\nabla}_{\bar{v}}\Psi|^2+\kappa_n |\slashed{\nabla}_{R}\Psi|^2+\frac{\sigma}{R^2}|\mathring{\slashed{\nabla}}\Psi|^2 \right)
\end{split}
\end{align}
using the Cauchy-Schwarz and Poincar\'{e} inequalities and a choice of large $\sigma$.  Note that such a choice can be made for any bounded radial potential $V$.  

We extend $Y$ to the exterior region such that $Y$ is causal and $Y=0$ for $r\geq r_1$.  By continuity, the coercive estimate \eqref{coerciveEstimate} is satisfied in $r_{h}\leq r\leq r_0$ for some $r_0<r_1$. 

We define the strictly timelike red-shift multiplier $N:=T+Y$, in addition to the current
\begin{align}\label{Ncurrent}
\begin{split}
J^{N}_{a}[\Psi] &:=T_{ab}[\Psi]T^{b} + \check{T}_{ab}[\Psi]Y^{b},
\end{split}
\end{align}
and energies
\begin{align}\label{Nenergies}
\begin{split}
E^{N}_{\Psi}(\Sigma_{\tau}) &:= \int_{\Sigma_{\tau}} J^{N}_{a}[\Psi]n^{a}_{\Sigma_\tau}dVol_{\Sigma_\tau},\\
\check{E}^{N}_{\Psi}(\Sigma_{\tau}) &:= \int_{\Sigma_{\tau}} \check{T}_{ab}[\Psi]n^{a}_{\Sigma_\tau}N^{b}dVol_{\Sigma_\tau}.
\end{split}
\end{align}

\begin{lemma}\label{Nestimates}
Assume the $T$-energy comparison \eqref{Tcomparison} holds.  Defining the red-shift multiplier $N$ as above, we have the pointwise density estimate
\begin{align*}
\Div J^{N}[\Psi]=\nabla^a\left(\check{T}_{ab}[\Psi]Y^b\right) &\gtrsim \check{T}_{ab}[\Psi] n_{\Sigma_\tau}^aN^b\ \textup{in}\ r\leq r_0\\
                                            &\gtrsim -\check{T}_{ab}[\Psi] n_{\Sigma_\tau}^aN^b\ \textup{in}\ r\in [r_0,r_1],
\end{align*}
in addition to the energy comparison
\[E^{N}_{\Psi}(\Sigma_\tau) \approx \check{E}^{N}_{\Psi}(\Sigma_\tau)\]
and the boundary estimates
\begin{align*}
&\int_{\mathcal{H}^+(0,\tau)} J^{N}_{a}[\Psi] n_{\mathcal{H}^+} dVol_{\mathcal{H}^+} \geq 0,\\
&\int_{\mathcal{I}^+(0,\tau)} J^{N}_{a}[\Psi] \underline{L}^{a} (r^ndudVol_{S^n})\geq 0.
\end{align*}
\end{lemma}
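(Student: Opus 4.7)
\textbf{Proof proposal for Lemma \ref{Nestimates}.} The plan is to split the lemma into its three assertions and handle each in turn, relying heavily on computations already in place in the preceding subsection on the redshift vector field.

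\emph{Pointwise density.} In the horizon neighborhood $r \leq r_0$, the explicit calculation of $\nabla^{(a}Y^{b)}$ at $r = r_h$ combined with the coercive estimate \eqref{coerciveEstimate} already yields
\[\nabla^a (\check T_{ab}[\Psi]Y^b) \gtrsim |\slashed\nabla_{\bar v}\Psi|^2 + \kappa_n|\slashed\nabla_R\Psi|^2 + \sigma r^{-2}|\mathring{\slashed\nabla}\Psi|^2,\]
an inequality that persists in a full horizon neighborhood by smoothness of $Y$; this is what determines $r_0 \in (r_h, r_1)$. Unpacking $\check T_{ab}n^a_{\Sigma_\tau}N^b$ in the $(\pl_{\bar v},\pl_R,\pl_{x^\alpha})$ frame---a short calculation paralleling the expansion of $\check T_{ab}n^a_{\Sigma_\tau}T^b$ in Section \ref{TStrategy}, now with the strictly timelike combination $N = T + Y$---produces an equivalent expression in the same three derivative quantities, yielding the first inequality. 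In the intermediate region $r \in [r_0, r_1]$, both $Y$ and its covariant derivatives are smooth and bounded, and $|V|$ is bounded by hypothesis, so Cauchy-Schwarz and the Poincar\'{e} inequality \eqref{PoincarePsi} give $|\nabla^a(\check T_{ab}[\Psi]Y^b)| \lesssim \check T_{ab}[\Psi]n^a_{\Sigma_\tau}N^b$, which is the second inequality.

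\emph{Energy comparison.} The defining identity $T_{ab} - \check T_{ab} = -\tfrac12 g_{ab}V|\Psi|^2$ immediately gives
\[E^N_\Psi(\Sigma_\tau) - \check E^N_\Psi(\Sigma_\tau) = E^T_\Psi(\Sigma_\tau) - \check E^T_\Psi(\Sigma_\tau),\]
so the assumed $T$-energy comparison \eqref{Tcomparison} transfers directly to the $N$-energies once we note that the common summand $\int_{\Sigma_\tau}\check T_{ab}[\Psi]Y^b n^a_{\Sigma_\tau}\, dVol_{\Sigma_\tau}$ is non-negative, being a contraction of the positive-energy virtual tensor $\check T$ against two causal future-directed vector fields on the compact set where $Y \neq 0$. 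A two-line estimate then upgrades $\check E^T \approx E^T$ to $\check E^N \approx E^N$ with the same constant.

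\emph{Boundary estimates.} On $\mathcal{H}^+$, taking $n_{\mathcal{H}^+} = T$, we expand
\[J^N_a[\Psi]T^a = T_{ab}[\Psi]T^aT^b + \check T_{ab}[\Psi]T^aY^b.\]
Since $T$ is null on the horizon, the first summand reduces to $|\slashed\nabla_T\Psi|^2 \geq 0$; for the second, writing $\slashed\nabla^c\Psi\cdot\slashed\nabla_c\Psi$ in the double-null frame $(T, Y, \pl_{x^\alpha})$ where $\langle T,Y\rangle = -1$ gives $\check T_{ab}T^aY^b = \tfrac12 r_h^{-2}|\mathring{\slashed\nabla}\Psi|^2$, which is manifestly non-negative. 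At null infinity, $Y \equiv 0$ for $r \geq r_1$, so $J^N = J^T$ there, and the non-negativity follows from \eqref{boundaryTEstimates}.

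The main technical point is the frame unpacking of $\check T_{ab}n^a_{\Sigma_\tau}N^b$ across $[r_h, r_0]$, where one must track every geometric factor of $n_{\Sigma_\tau}$ and $N$ uniformly in $r$ so that all background coefficients are absorbed into the implicit constants of $\approx$; aside from this bookkeeping, the lemma is a direct consequence of the calculations and hypotheses already assembled.
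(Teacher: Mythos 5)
Your proposal is correct and follows essentially the route the paper intends: the lemma is stated as a consequence of the coercive estimate \eqref{coerciveEstimate}, the causal extension of $Y$ vanishing for $r\geq r_1$, the identity $T_{ab}-\check{T}_{ab}=-\tfrac{1}{2}g_{ab}V|\Psi|^2$ combined with \eqref{Tcomparison}, and the null-frame computation $\check{T}_{ab}T^aY^b=\tfrac{1}{2}r_h^{-2}|\mathring{\slashed{\nabla}}\Psi|^2$ on the horizon, all of which you supply. The only quibble is the citation at null infinity: the non-negativity of $\int_{\mathcal{I}^+}J^T_a[\Psi]\underline{L}^a$ is the unlabeled display preceding \eqref{Tmonotonicity}, not \eqref{boundaryTEstimates}, which is the upper bound.
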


\subsection{Uniform Boundedness Estimate}

Applying the divergence theorem together with the $T$-energy comparison \eqref{Tcomparison} and Lemma \ref{Nestimates}, we have uniform boundedness of the non-degenerate $N$-energy.
\begin{theorem}\label{uniformBoundedness}
Suppose $\Psi$ is a solution of the Regge-Wheeler type equation \eqref{RWequation} as in Definition \ref{RWdefinition}.  Further, assume that $\Psi$ satisfies the $T$-energy comparison \eqref{Tcomparison}.  Then $\Psi$ satisfies the uniform boundedness estimate
\begin{equation}
\check{E}^{N}_{\Psi}(\Sigma_{\tau}) \lesssim \check{E}^{N}_{\Psi}(\Sigma_{0}),
\end{equation}
with $0 \leq \tau$.
In addition, we have the boundary estimates
\begin{align}
\begin{split}
&\int_{\mathcal{H}^+(0,\tau)} \check{T}_{ab}[\Psi]n_{\mathcal{H}^+}^aT^b\ dVol_{\mathcal{H}^+}\lesssim  \check{E}^{N}_{\Psi}(\Sigma_0),\\
&\int_{\mathcal{I}^+(0,\tau)} T_{ab}[\Psi]\underline{L}^aT^b\ \left(r^n dudVol_{S^n} \right)\lesssim   \check{E}^{N}_{\Psi}(\Sigma_0).
\end{split}
\end{align}
\end{theorem}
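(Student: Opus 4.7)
The uniform boundedness estimate will be established by the standard Dafermos-Rodnianski red-shift scheme \cite{DR}, combining the $T$-energy monotonicity \eqref{Tmonotonicity} with the pointwise density, energy comparison, and boundary non-negativity of Lemma~\ref{Nestimates}. I would begin by applying the spacetime Stokes' theorem to the $N$-current on $D(0,\tau)$ to produce the identity
\begin{equation*}
E^{N}_{\Psi}(\Sigma_{\tau}) + \int_{\mathcal{H}^{+}(0,\tau)} J^{N}_{a}[\Psi] n^{a}_{\mathcal{H}^{+}}\, dVol_{\mathcal{H}^{+}} + \int_{\mathcal{I}^{+}(0,\tau)} J^{N}_{a}[\Psi] \underline{L}^{a}\,(r^n du\, dVol_{S^n}) = E^{N}_{\Psi}(\Sigma_{0}) + \int_{D(0,\tau)} \Div J^{N}[\Psi].
\end{equation*}
By Lemma~\ref{Nestimates}, the two boundary flux integrals are non-negative, and the energy comparison $E^{N}_{\Psi}(\Sigma) \approx \check{E}^{N}_{\Psi}(\Sigma)$ allows one to work throughout with the virtual energies appearing in the statement.

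Next, I would decompose the bulk integral according to the three radial regions distinguished by Lemma~\ref{Nestimates}. Since $Y$ is supported in $\{r \le r_1\}$ and $T$ is Killing, the divergence vanishes in $\{r \ge r_1\}$. In the intermediate annulus $\{r_{0} \le r \le r_{1}\}$ the density is pointwise bounded by $\check{T}_{ab}[\Psi] n^{a}_{\Sigma_\tau} T^{b}$ (using $N \approx T$ on this bounded region), and its integral is controlled via Fubini and the $T$-energy monotonicity by the $T$-flux, ultimately bounded by $\check{E}^{T}_{\Psi}(\Sigma_{0}) \lesssim \check{E}^{N}_{\Psi}(\Sigma_{0})$ per unit time slab. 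In the near-horizon region $\{r \le r_{0}\}$, the coercive lower bound $\Div J^{N} \gtrsim \check{T}_{ab}[\Psi] n^{a}_{\Sigma_\tau} N^{b}$ supplies the dissipative mechanism of the red-shift.

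The main obstacle is converting the pointwise coercivity on $\{r \le r_{0}\}$ together with the merely bounded contribution on $\{r_{0} \le r \le r_{1}\}$ into a uniform bound rather than an exponentially growing one. The resolution follows the standard pigeonhole strategy: apply the divergence identity on unit-length subintervals $[\tau_{k}, \tau_{k}+1]$, use the $T$-energy monotonicity to produce inside each subinterval an intermediate slice $\Sigma_{\tau^{*}_{k}}$ on which the $N$-energy localized to $\{r \le r_{1}\}$ is controlled by $\check{E}^{N}_{\Psi}(\Sigma_{0})$, and exploit the coercive red-shift contribution on $\{r \le r_{0}\}$ to absorb the localized energy on each step. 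Iterating through the selected slices yields $\check{E}^{N}_{\Psi}(\Sigma_{\tau}) \lesssim \check{E}^{N}_{\Psi}(\Sigma_{0})$. The boundary estimates in the statement then follow by substituting this uniform control back into the divergence identity and invoking the comparison \eqref{Tcomparison} once more.
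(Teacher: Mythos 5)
Your proposal follows essentially the same route as the paper's proof: apply the divergence theorem to the $N$-current, use the non-negativity of the horizon and null-infinity fluxes and the energy comparison from Lemma \ref{Nestimates}, exploit the bulk coercivity in $\{r\leq r_0\}$, control the error term in $\{r_0\leq r\leq r_1\}$ by the monotone $T$-energy via \eqref{Tmonotonicity} and \eqref{Tcomparison}, and close with a Gronwall-type argument. The only difference is in the last step, and it is cosmetic: where you invoke the unit-interval pigeonhole, the paper derives the integral inequality $g(\tau_2)+\int_{\tau_1}^{\tau_2} g\,d\tau\leq C_0\,g(\tau_1)+C_0 f(0)(\tau_2-\tau_1)$ for $g=\check{E}^{N}_{\Psi}$, $f=\check{E}^{T}_{\Psi}$ and concludes by a continuity argument on the set $\{\tau: g(\tau)>g(0)+C_0 f(0)\}$ — two interchangeable devices resting on the same estimate (though note the good slice in your version comes from the mean value theorem applied to the time-integrated coercive bulk term, not from $T$-monotonicity alone).
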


\begin{proof}
Application of the divergence theorem to the $N$-current \eqref{Ncurrent}, along with the $T$-energy comparison and those of Lemma \ref{Nestimates}, leads to
\begin{align*}
&\check{E}^{N}_{\Psi}(\Sigma_\tau) +\int_0^{\tau} \int_{\Sigma_{\tau'}\cap\{r\leq r_0\}} \check{T}_{ab}[\Psi] n_{\Sigma_{\tau'}}^aN^b  dVol_{\Sigma_{\tau'}} d\tau'\\
&+\int_{\mathcal{H}^+(0,\tau)} \check{T}_{ab}[\Psi] n^a_{\mathcal{H}^+}N^bdVol_{\mathcal{H}^+}+\int_{\mathcal{I}^+(0,\tau)} T_{ab}[\Psi]\underline{L}^aT^b (r^ndudVol_{S^n})\\
&\lesssim \check{E}^{N}_{\Psi}(\Sigma_0) +\int_0^{\tau} \int_{\Sigma_{\tau'}\cap\{r_0\leq r\leq r_1\}} \check{T}_{ab}[\Psi] n_{\Sigma_{\tau'}}^aT^b    dVol_{\Sigma_{\tau'}} d\tau',
\end{align*}
where we note the comparison
\[ \check{T}_{ab}[\Psi] n_{\Sigma_{\tau'}}^aT^b \approx \check{T}_{ab}[\Psi] n_{\Sigma_{\tau'}}^a N^b\]
in the region $r \geq r_0$, away from the event horizon.  We remark that the volume comparison $dVol \approx dVol_{\Sigma_{\tau'}} d\tau'$ is used in the spacetime integrals above.

Introducing the shorthand
\begin{align*}
f(\tau):=\check{E}^{T}_{\Psi}(\Sigma_\tau),\\
g(\tau):=\check{E}^{N}_{\Psi}(\Sigma_{\tau}),
\end{align*}
monotonicity of the standard $T$-energy \eqref{Tmonotonicity} and the $T$-energy comparison \eqref{Tcomparison} imply
\[f(\tau_2)\lesssim f(\tau_1)\]
for any $\tau_2 > \tau_1 \geq 0$. 

Owing to positivity of the boundary terms along the event horizon and null infinity, we find
\begin{align*}
g(\tau_2)+\int_{\tau_1}^{\tau_2} g(\tau) d\tau&\lesssim g(\tau_1)+\int_{\tau_1}^{\tau_2} f(\tau)d\tau\\
                                        &\lesssim g(\tau_1)+(\tau_2-\tau_1) f(0).
\end{align*}
To be more precise, there exists a constant $C_0\geq 1$ such that
\begin{align*}
g(\tau_2)+\int_{\tau_1}^{\tau_2} g(\tau) d\tau\leq C_0 g(\tau_1)+C_0f(0)(\tau_2-\tau_1).
\end{align*}
Now define
\begin{align*}
I:=\{\tau>0 \Big| g(\tau)>g(0)+C_0f(0) \}.
\end{align*}
By definition, $g(\tau)\leq  g(0)+C_0f(0)$ for $\tau\notin I$. For $\tau_0\in I$, $\tau_0\in (a,b)\subset I$ for some $a>0$, $a\notin I$ and $b\in\mathbb{R}\cup\{\infty\}$. In particular, $g(a)=g(0)+C_0f(0)$. Then
\begin{align*}
g(\tau_0)+C_0f(0)(\tau_0-a)\leq &g(\tau_0)+\int_{a}^{\tau_0} g(\tau) d\tau\\
                           \leq &C_0g(a)+C_0f(0)(\tau_0-a),
\end{align*}
such that
\begin{align*}
g(\tau_0)&\leq C_0(g(0)+C_0f(0))\\
&\leq C_0(1+C_0) g(0),
\end{align*}
with the last inequality following from 
\[f(\tau)\leq g(\tau).\]

The estimates for the remaining boundary terms follow from \eqref{boundaryTEstimates} and the aforementioned inequality.

\end{proof}

\subsection{The Morawetz Estimate}

We assume the existence of a Morawetz estimate, first adapted to curved four-dimensional spacetime backgrounds by Blue-Soffer \cite{BlueSoffer2} and Dafermos-Rodnianski \cite{DR} and later to the higher-dimensional setting by Schlue \cite{Schlue}.  The requirements of such a multiplier are described in the following lemma:
\begin{lemma}\label{Morawetz}
There exists a current $J_{a}^{\mathfrak{M}}[\Psi]$, with density $K^{\mathfrak{M}}[\Psi]= \Div J^{\mathfrak{M}}[\Psi]$, satisfying the bulk estimates
\begin{align*}
&\textup{in}\ r\geq r_0,\\
&\int_{S^n} K^{\mathfrak{M}}[\Psi] dVol_{S^n} \gtrsim \int_{S^n}\frac{1}{r^n}\left( \left(1-\frac{(n+1)M}{r^{n-1}}\right)^2\left(|\slashed{\nabla}_{t}\Psi|^2+|\mathring{\slashed{\nabla}}\Psi|^2 \right)+|\slashed{\nabla}_{r}\Psi|^2 \right)dVol_{S^n},\\
&\textup{in}\ r\leq r_0,\\
&\int_{S^n}K^{\mathfrak{M}}[\Psi] dVol_{S^n}\gtrsim \int_{S^n}\check{T}_{ab}[\Psi]N^aN^b dVol_{S^n},
\end{align*}
and the boundary estimates
\begin{align*}
\int_{S^n}|J^{\mathfrak{M}}_a[\Psi]n^a_{\Sigma_\tau}|dVol_{S^n}&\lesssim \int_{S^n}\check{T}_{ab}[\Psi]n^a_{\Sigma_\tau}N^b dVol_{S^n},\\
\int_{S^n}\left( J^{\mathfrak{M}}_a[\Psi]n^a_{\mathcal{H}^+}\right)_{-} dVol_{S^n}&\lesssim \int_{S^n}\check{T}_{ab}[\Psi]n^a_{\mathcal{H}^+}T^b dVol_{S^n},\\
\int_{S^n}|J^{\mathfrak{M}}_a[\Psi]\underline{L}^{a}| dVol_{S^n}&\lesssim \int_{S^n}\check{T}_{ab}[\Psi]\underline{L}^{a}T^b dVol_{S^n},
\end{align*}
\end{lemma}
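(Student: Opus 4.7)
The plan is to construct $J^{\mathfrak{M}}_{a}[\Psi]$ as a sum of three pieces: a radial vector-field current carrying the trapping structure at the photon sphere, a scalar lower-order correction absorbing the contribution of the potential $V$, and a localized red-shift term supported in a neighborhood of the event horizon. The construction adapts the Morawetz multipliers of Schlue in the higher-dimensional setting to the Regge-Wheeler type equation \eqref{RWequation}, in which the non-trivial potential $V$ introduces extra zeroth-order terms that must be controlled.

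First I would introduce a radial multiplier $X^{a} = f(r_{*})(\partial_{r_{*}})^{a}$, where $f$ is smooth, bounded, vanishes at the photon sphere $r = r_{P}$, and is monotone increasing in $r_{*}$. The shape of $f$ is tuned so that the deformation-tensor density $\check{T}_{ab}[\Psi]\nabla^{(a}X^{b)}$ reproduces the trapping factor $\left(1 - \tfrac{(n+1)M}{r^{n-1}}\right)^{2}$ on the time and angular derivative terms, with the radial derivative coefficient uniformly bounded below by $f'/r^{n}$. Next I would add the scalar current
\begin{equation*}
g(r)\Psi\cdot \slashed{\nabla}_{a}\Psi - \tfrac{1}{2}(\nabla_{a}g)|\Psi|^{2},
\end{equation*}
with $g$ chosen so that the resulting divergence cancels the leading-order bad contribution of $V|\Psi|^{2}$, the remaining zeroth-order term being absorbable into the angular bulk via the Poincar\'{e} inequality \eqref{PoincarePsi} and Lemma \ref{HardyEstimate}. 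Finally I would include a cutoff red-shift contribution $\chi(r)\check{T}_{ab}[\Psi]Y^{b}$, with $\chi$ supported in $r \leq r_{0}$ and identically equal to one near the event horizon, so that the combined density dominates $\check{T}_{ab}[\Psi]N^{a}N^{b}$ on that neighborhood via the coercivity established in Lemma \ref{Nestimates}.

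Having fixed these three pieces, computing $K^{\mathfrak{M}}[\Psi] = \Div J^{\mathfrak{M}}[\Psi]$ and integrating over $S^{n}$ produces the stated lower bounds: in the region $r \geq r_{0}$ the radial multiplier dominates and yields the degenerate quadratic form in the first bulk inequality, while for $r \leq r_{0}$ the red-shift piece reproduces the bound in terms of $\check{T}_{ab}[\Psi]N^{a}N^{b}$. The boundary inequalities at $\Sigma_{\tau}$, $\mathcal{H}^{+}$ and $\mathcal{I}^{+}$ follow directly from the pointwise structure of $J^{\mathfrak{M}}_{a}[\Psi]$: boundedness of $f$, appropriate decay of $g$ at the horizon and infinity, the compact support of $\chi$, and standard Cauchy-Schwarz together with the sphere Poincar\'{e} inequality bound each contribution by the virtual energy flux against $N$, $T$, or the relevant null tangential, the red-shift contribution vanishing at $\mathcal{I}^{+}$ by support of $\chi$.

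The main obstacle is the simultaneous treatment of the potential $V$ and the photon sphere trapping. Because $V$ is allowed to be sign-indefinite, the standard choice of $g$ used for the scalar wave equation does not automatically absorb the contribution of $V$; one must verify, using the radial profile of $V$ and the fact that $V \approx 1/r^{2}$ at infinity, that a suitable $g$ exists for which the residual zeroth-order term $\Box g - \tfrac{1}{2}X(V) - gV$ is controlled by the degenerate angular bulk. This compatibility is verified pointwise by adapting the Hardy inequality via hypergeometric functions as in Lemma \ref{HardyTEstimate}, after which the remaining calculations are straightforward analogs of the higher-dimensional Morawetz construction of Schlue.
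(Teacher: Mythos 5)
Your three-piece ansatz (radial multiplier $f\partial_{r_*}$ vanishing at the photon sphere, a zeroth-order Lagrangian correction, and a cut-off red-shift term $\check{T}_{ab}[\Psi]Y^b$) is structurally the same as the paper's construction in Subsection \ref{covectorMorawetz}, and the boundary estimates are indeed handled there essentially as you describe. However, the entire content of the lemma lives in the step you defer: showing that, after the modifications, the coefficient of $|\Psi|^2$ in the bulk density is non-negative despite the sign-indefinite potential $V$. Your proposal to verify this ``pointwise by adapting the Hardy inequality via hypergeometric functions as in Lemma \ref{HardyTEstimate}'' points at the wrong tool. That Hardy inequality is an \emph{integrated} statement in the variable $s=r^{n-1}$ with the fixed weight $A=s(s-2M)$, and it is used in the paper only for the $T$-energy comparison \eqref{Tcomparison}; it does not produce the pointwise positivity of $\mathring{W}(f,\beta)$ in \eqref{Xweightedweighted} with the degenerate weights $f'\approx r^{-n}$ and $f\,r^{-3}\bigl(1-\tfrac{(n+1)M}{r^{n-1}}\bigr)$ that the Morawetz density requires. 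The paper instead completes a square $|\slashed{\nabla}_{r_*}\Psi+\beta\Psi|^2$ with an auxiliary function $\beta$ and then exhibits fully explicit choices of $f$ and $\beta$ (different for $n=3$, $n=4$, and even for $\ell=2$ versus $\ell\geq 3$ in the scalar case) for which $f'\geq 0$ and $W(f,\beta)>0$ by direct computation. This is where the proof genuinely lives, and it is also why the lemma carries a dimensional restriction your write-up omits entirely: the paper notes that a current with the required positivity in $r\leq r_0$ cannot be built this way for $n\geq 5$, which is precisely the reason decay is only proven in spacetime dimension six and below.

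A secondary inaccuracy: you attribute the trapping-degenerate $|\slashed{\nabla}_t\Psi|^2$ term in the bulk to the deformation tensor of the radial multiplier, and assign your scalar correction $g$ the job of cancelling $V$. In the paper the roles are reversed. The radial multiplier alone produces $(\slashed{\nabla}\Psi)^2$ with an indefinite coefficient and no positive time-derivative term; the positive $|\slashed{\nabla}_t\Psi|^2$ contribution is generated by the small additional Lagrangian correction with $g=\epsilon_2\chi r^{-n}\bigl(1-\tfrac{(n+1)M}{r^{n-1}}\bigr)^2$, whose damage to the radial and angular terms must then be absorbed by taking $\epsilon_2$ small. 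The cancellation of the $V$-terms is instead carried by $\omega^X$ and $\beta$. You have the right ingredients, but without the explicit $f$, $\beta$, and the smallness hierarchy $\epsilon_0,\epsilon_1,\epsilon_2,\epsilon_3$, the proof does not close.
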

where  $\left( J^{\mathfrak{M}}_a[\Psi]n^a_{\mathcal{H}^+}\right)_{-}$ is the negative part of  $\left( J^{\mathfrak{M}}_a[\Psi]n^a_{\mathcal{H}^+}\right)$.

Applying the divergence theorem and Theorem \ref{uniformBoundedness}, we have the following result, essential in proving uniform decay:
\begin{theorem}\label{MorawetzPlus}
Assuming the estimates of Lemma \ref{Morawetz}, along with the conclusions of Theorem \ref{uniformBoundedness} following from the $T$-energy comparison \eqref{Tcomparison}, we have the further density estimate
\begin{equation}\label{Morawetz_int}
\int_0^\infty \int_{\Sigma_\tau\cap\{r\leq R_1\}} K^{\mathfrak{M}}[\Psi]  dVol_{\Sigma_\tau} d\tau \lesssim \check{E}^{N}_{\Psi}(\Sigma_0).
\end{equation}
\end{theorem}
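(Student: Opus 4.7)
The plan is to apply the spacetime divergence theorem to the Morawetz current $J^{\mathfrak{M}}_{a}[\Psi]$ over $D(0,\tau)$ and then bound each of the four boundary contributions in terms of $\check{E}^{N}_{\Psi}(\Sigma_0)$ using Lemma~\ref{Morawetz} and Theorem~\ref{uniformBoundedness}, concluding by letting $\tau \to \infty$.

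First, Stokes' theorem on $D(0,\tau)$, with outgoing normals and the convention $K^{\mathfrak{M}}[\Psi] = \Div J^{\mathfrak{M}}[\Psi]$, gives an identity of the schematic form
\begin{align*}
\int_{D(0,\tau)} K^{\mathfrak{M}}[\Psi]\, dVol
= &\int_{\Sigma_0} J^{\mathfrak{M}}_{a}[\Psi]\, n^{a}_{\Sigma_0}\, dVol_{\Sigma_0} - \int_{\Sigma_\tau} J^{\mathfrak{M}}_{a}[\Psi]\, n^{a}_{\Sigma_\tau}\, dVol_{\Sigma_\tau} \\
&-\int_{\mathcal{H}^{+}(0,\tau)} J^{\mathfrak{M}}_{a}[\Psi]\, n^{a}_{\mathcal{H}^{+}}\, dVol_{\mathcal{H}^{+}} - \int_{\mathcal{I}^{+}(0,\tau)} J^{\mathfrak{M}}_{a}[\Psi]\, \underline{L}^{a}\, (r^{n}\, du\, dVol_{S^n}).
\end{align*}
The two $\Sigma$-integrals are bounded in absolute value by $\check{E}^{N}_{\Psi}(\Sigma_0)$ and $\check{E}^{N}_{\Psi}(\Sigma_\tau)$ respectively via the first boundary estimate in Lemma~\ref{Morawetz}, and Theorem~\ref{uniformBoundedness} upgrades the second to a bound by $\check{E}^{N}_{\Psi}(\Sigma_0)$ uniformly in $\tau$.

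The horizon term is the subtle one: Lemma~\ref{Morawetz} only controls the \emph{negative part} $\left(J^{\mathfrak{M}}_{a} n^{a}_{\mathcal{H}^{+}}\right)_{-}$ by the $T$-flux through $\mathcal{H}^{+}$. The strategy is to split the horizon integral into positive and negative parts, move the positive-part contribution to the left-hand side of the identity (where it has a favorable sign and can simply be discarded), and estimate the negative part by $\int_{\mathcal{H}^+(0,\tau)} \check{T}_{ab}[\Psi] n^a_{\mathcal{H}^+} T^b\, dVol_{\mathcal{H}^+}$, which Theorem~\ref{uniformBoundedness} bounds by $\check{E}^{N}_{\Psi}(\Sigma_0)$. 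The $\mathcal{I}^{+}$ term is treated analogously using the third boundary estimate of Lemma~\ref{Morawetz} together with the $\mathcal{I}^{+}$ flux estimate from Theorem~\ref{uniformBoundedness}.

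Combining the four boundary estimates yields $\int_{D(0,\tau)} K^{\mathfrak{M}}[\Psi]\, dVol \lesssim \check{E}^{N}_{\Psi}(\Sigma_0)$ uniformly in $\tau$. The bulk estimates of Lemma~\ref{Morawetz} show that $K^{\mathfrak{M}}[\Psi]$ is pointwise non-negative on the entire exterior (coercive in $r \leq r_0$, sum of weighted squares in $r \geq r_0$), so restricting to $\{r \leq R_1\}$ only weakens the inequality. Finally, applying the volume comparison $dVol \approx dVol_{\Sigma_{\tau'}}\, d\tau'$ and sending $\tau \to \infty$ by monotone convergence produces \eqref{Morawetz_int}. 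The main technical obstacle is the sign-indefiniteness of the horizon flux; this is precisely what the form of the horizon estimate in Lemma~\ref{Morawetz} is designed to resolve.
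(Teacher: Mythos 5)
Your proposal is correct and follows essentially the same route as the paper: apply the divergence theorem to $J^{\mathfrak{M}}_{a}[\Psi]$ over $D(0,\tau)$, control each boundary flux via the boundary estimates of Lemma \ref{Morawetz} combined with the flux bounds of Theorem \ref{uniformBoundedness}, and conclude with the volume comparison $dVol \approx dVol_{\Sigma_{\tau'}}\,d\tau'$. You have in fact supplied more detail than the paper does, in particular the correct sign-splitting of the horizon term into its positive part (discarded) and negative part (controlled by the $T$-flux through $\mathcal{H}^{+}$).
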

\begin{proof}
By the divergence theorem
\begin{align*}
&\int_{\Sigma_\tau}J^{\mathfrak{M}}_a[\Psi]n^a_{\Sigma_\tau} dVol_{\Sigma_\tau}+\int_{\mathcal{H}^+(0,\tau)}J^{\mathfrak{M}}_a[\Psi]n^a_{\mathcal{H}^+} dVol_{\mathcal{H}^+}\\
+&\int_{\mathcal{I}^
+(0,\tau)}J^{\mathfrak{M}}_a[\Psi]\underline{L}^a \left(r^n dudVol_{S^n} \right)+\int_{D(0,\tau)}K^{\mathfrak{M}}[\Psi]dVol\\
=&\int_{\Sigma_\tau}J^{\mathfrak{M}}_a[\Psi]n^a_{\Sigma_\tau} dVol_{\Sigma_\tau}
\end{align*}
The boundary terms are controlled by the right hand side of \eqref{Morawetz_int} by Lemma \ref{Morawetz} and Theorem \ref{uniformBoundedness}. The result follows from the volume comparison $dVol\approx dVol_{\Sigma_\tau} d\tau$.
\end{proof}

Briefly we discuss the construction of candidate Morawetz currents and the calculation of their densities, deferring details on the verification of the estimates of Lemma \ref{Morawetz} to the cases at hand.  We denote $\frac{\partial f}{\partial r_*}$ by $f'$, and consider the vector field $X=f(r)\partial_{r_*}$.  We calculate
\begin{align*}
{T}_{ab}[\Psi]\nabla^a X^b=&\frac{f'}{1-\mu}|\slashed{\nabla}_{r_*}\Psi|^2+\frac{f}{r^3}\left( 1-\frac{(n+1)M}{r^{n-1}} \right)|\mathring{\slashed{\nabla}}\Psi|^2-\frac{1}{2}(f'+\frac{n(1-\mu)}{r}f)(\nablas \Psi)^2.
\end{align*}
Letting $\omega^X=f'+\frac{n(1-\mu)}{r}f$. With $\nabla_a X^a=f'+(-\mu_r+\frac{n(1-\mu)}{r})f=\omega^X-\mu_r f$, we introduce the currents
\begin{align}
\begin{split}
J^{X}_{a}[\Psi] &:= T_{ab}[\Psi]X^{b},\\
J^{X,\omega^{X}}_{a}[\Psi] &:= J^X_a[\Psi]+\frac{1}{4}\omega^X\nabla_a |\Psi|^2-\frac{1}{4}\nabla_a\omega |\Psi|^2,
\end{split}
\end{align}
and calculate their densities:
\begin{align}\label{Xdensity}
\begin{split}
\Div J^X[\Psi] =&\frac{f'}{1-\mu}|\slashed{\nabla}_{r_*}\Psi|^2+\frac{f}{r^3}\left( 1-\frac{(n+1)M}{r^{n-1}} \right)|\mathring{\slashed{\nabla}}\Psi|^2\\
                     -&\frac{1}{2}\omega^X(\nablas \Psi)^2-\frac{1}{2}fV'\Psi^2-\frac{1}{2}(\omega^X-\mu_r f)V|\Psi|^2,
\end{split}
\end{align}
\begin{align}\label{Xweighted}
\begin{split}
\Div J^{X,\omega^X}[\Psi] =&\frac{f'}{1-\mu}|\slashed{\nabla}_{r_*}\Psi|^2+\frac{f}{r^3}\left( 1-\frac{(n+1)M}{r^{n-1}} \right)|\mathring{\slashed{\nabla}}\Psi|^2\\
                        +&\left(-\frac{1}{4}\Box\omega^X-\frac{1}{2}V' f+\frac{1}{2}\mu_r V f\right)|\Psi|^2.
\end{split}
\end{align}
For any radial function $\beta (r)$,
\begin{align*}
\textup{div} \left(\frac{f'}{1-\mu}\beta |\Psi|^2\partial_{r_*}\right)=&\frac{f'}{1-\mu}\beta |\Psi|^2\left(-\frac{\mu'}{1-\mu}+\frac{n(1-\mu)}{r}\right) + \frac{f''}{1-\mu}\beta |\Psi|^2\\
&+\frac{f'\mu'}{(1-\mu)^2}\beta |\Psi|^2+\frac{f'}{1-\mu}\beta' |\Psi|^2+\frac{2f'}{1-\mu}\beta \Psi\cdot\slashed{\nabla}_{r_*}\Psi\\
                                                                   =&\frac{f''\beta}{1-\mu}|\Psi|^2+f'\left( \frac{\beta'}{1-\mu}+\frac{n\beta}{r} \right)|\Psi|^2+\frac{f'}{1-\mu}2\beta \Psi\cdot\slashed{\nabla}_{r_*}\Psi.
\end{align*}
Hence the current
\begin{equation}
J^{X,\omega^{X},\beta}_{a}[\Psi] := J^{X,\omega^X}_{a}[\Psi]+f'\beta |\Psi|^2dr_{*},
\end{equation}
has density
\begin{align}\label{Xweightedweighted}
\begin{split}
&\Div \left( J^{X,\omega^{X},\beta}[\Psi]\right)=\\
&\frac{f'}{1-\mu}|\slashed{\nabla}_{r_*}\Psi+\beta\Psi|^2 +\frac{f}{r^3}\left( 1-\frac{(n+1)M}{r^{n-1}} \right)|\mathring{\slashed{\nabla}}\Psi|^2\\
& +\left(-\frac{1}{4}\Box\omega^X-\frac{1}{2}V' f+\frac{1}{2}\mu_r V f  \right)|\Psi|^2\\
+&\left( \frac{f''\beta}{1-\mu}+f'\left( \frac{\beta'}{1-\mu}+\frac{n\beta}{r}-\frac{\beta^2}{1-\mu} \right) \right)|\Psi|^2.\\
                       =&\frac{f'}{1-\mu}|\slashed{\nabla}_{r_*}\Psi+\beta\Psi|^2+\frac{f}{r^3}\left( 1-\frac{(n+1)M}{r^{n-1}} \right)|\mathring{\slashed{\nabla}}\Psi|^2\\
                       +&\mathring{W}(f,\beta)|\Psi|^2,
\end{split}
\end{align}
where
\[\mathring{W}(f,\beta)=                     \left(-\frac{1}{4}\Box\omega^X-\frac{1}{2}V' f+\frac{1}{2}\mu_r V f  \right)+\left( \frac{f''\beta}{1-\mu}+f'\left( \frac{\beta'}{1-\mu}+\frac{n\beta}{r}-\frac{\beta^2}{1-\mu} \right) \right).\]

\subsection{The $r^p$-Hierarchy}

We adapt the $r^p$-hierarchy appearing in the work of Schlue \cite{Schlue}, generalizing the earlier ideas of Dafermos-Rodnianski \cite{DR2}.

We define 
\begin{equation}\label{PhiDef}
\Phi:=r^{n/2}\Psi
\end{equation}
and the current
\begin{equation}
J_b[\Phi]:= \frac{1}{r^{n}(1-\mu)}T_{ab}[\Phi]L^a-\frac{1}{8r^{n+2}(1-\mu)}\left( (n^2-2n)+\frac{2n^2M}{r^{n-1}} \right)|\Phi|^2L_b,
\end{equation}
from which we obtain the following $r^p$ density estimate:
\begin{proposition}\label{rpDensity}
Suppose the potential $V$ has the form
\begin{align*}
V=\frac{a_0}{r^2}+\frac{b_0M}{r^{n+1}}+O\left(\frac{M^2}{r^{2n}}\right),\ a_0>0.
\end{align*}
Then for $1\leq p\leq 2$, $R_1$ large, and an appropriate choice of $k$,
\begin{align*}
\nabla^b\left(\frac{r^p J_b[\Phi]}{(1-\mu)^k}\right)\gtrsim \left( r^{p-n-1}|\slashed{\nabla}_L\Phi|^2+(2-p)r^{p-n-3}|\mathring{\slashed{\nabla}}\Phi|^2 \right)\ \ \textup{as}\ \ r\geq R_1. 
\end{align*}
\end{proposition}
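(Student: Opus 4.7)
The plan is to compute the weighted divergence $\nabla^b\!\bigl(r^p J_b[\Phi]/(1-\mu)^k\bigr)$ directly, reduce via the wave equation for $\Phi$, and verify pointwise positivity for $r\geq R_1$ large.

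First I would derive the wave equation satisfied by $\Phi=r^{n/2}\Psi$. Substituting $\Psi=r^{-n/2}\Phi$ into the Regge--Wheeler equation $\slashed{\Box}_{\mathcal{L}(-2)}\Psi=V\Psi$ and using \eqref{TwoTensorWave} together with the Schwarzschild identities \eqref{SchwarzFormulae}, one obtains an equation for $\Phi$ of schematic form
$$\tilde{\Box}\Phi + r^{-2}\mathring{\Delta}\Phi = \tilde{V}\,\Phi + (\text{possible residual transport term in } r^{-1}r^A\tilde{\nabla}_A\Phi),$$
where the effective potential $\tilde{V}$ differs from $V$ by a radial shift of order $r^{-2}(1-\mu)$ plus a sub-leading piece of order $M/r^{n+1}$, determined explicitly by $\tilde{\Box}r$ and $|\tilde{\nabla}r|^2$. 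This shift is exactly what dictates the coefficients $(n^2-2n)$ and $2n^2M/r^{n-1}$ appearing in the auxiliary piece of $J_b[\Phi]$.

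Next I would write $J_b[\Phi]=f(r)T_{ab}[\Phi]L^a + G(r)|\Phi|^2 L_b$ with $f=1/(r^n(1-\mu))$ and $G=-\bigl((n^2-2n)+2n^2M/r^{n-1}\bigr)/(8r^{n+2}(1-\mu))$, and compute the weighted divergence as a sum of four contributions: (i) the derivative of the radial weight $r^p/(1-\mu)^k$ acting on $fT_{ab}L^a$; (ii) the wave-equation identity $\nabla^b T_{ab}[\Phi]\cdot L^a = -\tfrac12 L(\tilde{V})|\Phi|^2$, where curvature commutator terms vanish on contraction with the Killing-invariant $L$; (iii) the deformation-tensor contribution $fT_{ab}\nabla^{(b}L^{a)}$, read off from $g_M = -4(1-\mu)du\,dv + r^2\mathring{\sigma}$, which produces the principal $|\slashed{\nabla}_L\Phi|^2$ and $|\mathring{\slashed{\nabla}}\Phi|^2$ pieces with leading coefficients $\tfrac{p}{2}r^{p-n-1}(1-\mu)$ and $\tfrac{2-p}{2}r^{p-n-3}$ respectively; and (iv) the divergence of $r^p G|\Phi|^2 L_b/(1-\mu)^k$, producing a pure $|\Phi|^2$ term and a cross term $\propto G\,\Phi\cdot\slashed{\nabla}_L\Phi$. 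The coefficient $G$ is tuned precisely so that, after a completion of squares absorbing the cross term into the $|\slashed{\nabla}_L\Phi|^2$ piece, the residual $|\Phi|^2$ coefficient carries the correct sign.

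After assembly the density reduces to $A(r)|\slashed{\nabla}_L\Phi|^2 + B(r)|\mathring{\slashed{\nabla}}\Phi|^2 + C(r)|\Phi|^2$, with leading asymptotics $A(r)\sim\tfrac{p}{2}r^{p-n-1}$, $B(r)\sim\tfrac{2-p}{2}r^{p-n-3}$, and $C(r)\sim \tfrac{a_0}{r^{n+3-p}}$, the last emerging from $-\tfrac12 L(V)=\tfrac{a_0(1-\mu)}{r^3}+O(Mr^{-n-2})$ and strictly positive by the hypothesis $a_0>0$. The parameter $k\geq 0$ is chosen (a small positive value suffices) so that the $(1-\mu)^{-1}$ corrections are handled uniformly, and $R_1$ is taken large enough that all $O(M/r^{n-1})$ corrections are dominated by the principal positive terms. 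The main obstacle is the borderline case $p=2$, where $B(r)$ degenerates to zero and one cannot borrow from the angular term via the Poincar\'e inequality \eqref{PoincarePsi}; positivity there rests entirely on $A|\slashed{\nabla}_L\Phi|^2$ together with non-negativity of $C(r)$, which is exactly why the precise coefficient $G(r)$ must cancel both the $r^{-2}$ and the $M/r^{n+1}$ contributions to $\tilde{V}-V$ simultaneously, leaving only the benign $O(M^2/r^{2n})$ remainder in $V$, which is harmless once $R_1$ is large.
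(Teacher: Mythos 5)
Your outline follows the paper's proof almost step for step (wave equation for $\Phi=r^{n/2}\Psi$, divergence of $r^{-n}(1-\mu)^{-1}T_{ab}[\Phi]L^a$, the zeroth-order counterterm, then the $r^p(1-\mu)^{-k}$ weight), but it contains a genuine gap exactly at the borderline $p=2$, which is the case the decay hierarchy actually needs. The counterterm $G|\Phi|^2L_b$ does not "cancel the contributions to $\tilde V-V$"; its divergence cancels the cross term $\Phi\cdot\slashed{\nabla}_L\Phi$ exactly (a completion of squares here would cost a $|\Phi|^2$ term of order $r^{p-n-3}$ with an uncontrolled constant, which is fatal at $p=2$), and in doing so it shifts the effective potential from $V$ to $V+W$ with $W=\frac{1}{r^2}\bigl(\frac{n^2-2n}{4}+\frac{n^2M}{2r^{n-1}}\bigr)$. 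The resulting zeroth-order coefficient is
\[
r^{p-n-3}\Bigl(\bar a_0\bigl(1-\tfrac{p}{2}\bigr)+\tfrac{\bar b_0}{2}(n+1-p)\tfrac{M}{r^{n-1}}+O\bigl(M^2r^{-2n+2}\bigr)\Bigr),\qquad \bar a_0=a_0+\tfrac{n^2-2n}{4},\quad \bar b_0=b_0+\tfrac{n^2}{2},
\]
not $C(r)\sim a_0 r^{p-n-3}$ as you assert. The leading term vanishes identically at $p=2$, and the surviving coefficient $\frac{\bar b_0}{2}(n+1-p)$ has no definite sign: for the potentials to which the proposition is applied one has $b_0=-2n^2$ for $V^{(-)}$ (so $\bar b_0=-\tfrac32 n^2<0$) and similarly $\bar b_0<0$ for $V^{(+)}$. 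At $p=2$ your density therefore carries a strictly negative $|\Phi|^2$ term of size $Mr^{-2n}$, the angular term is simultaneously gone, and there is no pointwise mechanism to absorb a negative zeroth-order term into $|\slashed{\nabla}_L\Phi|^2$.

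This is precisely what the exponent $k$ is for, and why "a small positive value suffices" is wrong. Differentiating the $(1-\mu)^{-k}$ weight injects $+\bar a_0(k-1)(n-1)\frac{M}{r^{n-1}}$ into the zeroth-order coefficient (at the price of turning the $|\slashed{\nabla}_L\Phi|^2$ coefficient into $\frac{p}{2}(1-\mu)-\frac{k(n-1)M}{r^{n-1}}$, which is why $R_1$ must then be enlarged); one needs $k$ large enough that $\frac{\bar b_0}{2}(n+1-p)+\bar a_0(k-1)(n-1)>0$. As written, your argument proves the estimate only for $p$ bounded away from $2$, or under the unstated hypothesis $\bar b_0\geq 0$, and so does not deliver the $p=2$ case on which Theorem \ref{rpEstimate} and the $\tau^{-2}$ decay rest.
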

\begin{proof}
From $\Box r=\frac{nr^{n-1}-2M}{r^n}$, we have 
\begin{align*}
\Box r^{n/2}=\frac{1}{4}r^{n/2-2}\left( (3n^2-2n)-\frac{2n^2M}{r^{n-1}} \right),
\end{align*}
and
\begin{align*}
\slashed{\Box}_{\mathcal{L}(-2)} \Phi=\frac{1}{4}r^{-2}\left( (n^2-2n)+\frac{2n^2M}{r^{n-1}} \right)\Phi+\frac{n}{r}(1-\mu)\slashed{\nabla}_{r}\Phi+V\Phi.
\end{align*}
By direct calculation
\begin{align*}
\nabla^aT_{ab}[\Phi] =\frac{n}{r}(1-\mu)\slashed{\nabla}_{r}\Phi\cdot\slashed{\nabla}_b \Phi+\frac{1}{4}r^{-2}\left( (n^2-2n)+\frac{2n^2M}{r^{n-1}} \right)\Phi\cdot\slashed{\nabla}_b \Phi-\frac{1}{2}|\Phi|^2\nabla_b V.
\end{align*}
Denoting 
\begin{equation}
\mathring{J}_a[\Phi] := \frac{1}{r^n(1-\mu)}T_{ab}[\Phi]L^b,
\end{equation}
we compute its divergence:
\begin{align*}
&\nabla^a\left(\frac{1}{r^n(1-\mu)}T_{ab}[\Phi]L^b\right)\\
&= \nabla^a\left( \frac{1}{r^n(1-\mu)} \right)T_{ab}[\Phi]L^b +\frac{1}{r^n(1-\mu)}\left( \nabla^a T_{ab}[\Phi]\right)L^b + \frac{1}{r^n(1-\mu)}T_{ab}[\Phi]\nabla^a L^b.
\end{align*}

The first term is
\begin{align*}
I_1=&-\frac{(nr^{n-1}-2M)}{r^{2n}(1-\mu)^2}\left( (1-\mu)\slashed{\nabla}_{r}\Phi\cdot\slashed{\nabla}_L \Phi -\frac{1}{2}(1-\mu)(\nablas\Phi)^2-\frac{1}{2}(1-\mu)V|\Phi|^2 \right)\\
   =&-\frac{(nr^{n-1}-2M)}{r^{2n}(1-\mu)}\slashed{\nabla}_{r}\Phi\cdot\slashed{\nabla}_L \Phi+\frac{(nr^{n-1}-2M)}{2r^{2n}(1-\mu)}(\nablas\Phi)^2+\frac{(nr^{n-1}-2M)}{2r^{2n}(1-\mu)}V|\Phi|^2.
\end{align*}
The second term is
\begin{align*}
I_2=&\frac{1}{r^n(1-\mu)}\left[ \frac{n}{r}(1-\mu)\slashed{\nabla}_{r}\Phi\cdot\slashed{\nabla}_b \Phi+\frac{1}{4}r^{-2}\left( (n^2-2n)+\frac{2n^2M}{r^{n-1}} \right)\Phi\cdot\slashed{\nabla}_b \Phi-\frac{1}{2}|\Phi|^2\nabla_b V  \right]L^{b}\\
   =&\frac{n}{r^{n+1}}\slashed{\nabla}_{r}\Phi\cdot\slashed{\nabla}_L \Phi+\frac{1}{4r^{n+2}(1-\mu)}\left((n^2-2n)+\frac{2n^2M}{r^{n-1}}\right)\Phi\cdot\slashed{\nabla}_L \Phi-\frac{1}{2r^n(1-\mu)}\nabla_L V|\Phi|^2.
\end{align*}
From $\nabla^{(a}L^{b)}=\frac{(n-1)M}{r^n}\tilde{g}^{AB}+\frac{1-\mu}{r^3}\mathring{\sigma}^{\alpha\beta}$, the last term is
\begin{align*}
I_3=&\frac{1}{r^n(1-\mu)}\left[ \frac{(n-1)M}{r^{n}}(\tilde{\slashed{\nabla}}\Phi)^2+\frac{1-\mu}{r^3}|\mathring{\slashed{\nabla}}\Phi|^2-\frac{1}{2}\left( \frac{nr^{n-1}-2M}{r^n} \right)((\slashed{\nabla}\Phi)^2+V|\Phi|^2) \right]\\
=&\frac{(n-1)M}{r^{2n}(1-\mu)}(\tilde{\slashed{\nabla}}\Phi)^2+\frac{1}{r^{n+3}}|\mathring{\slashed{\nabla}}\Phi|^2-\frac{nr^{n-1}-2M}{2r^{2n}(1-\mu)}((\slashed{\nabla}\Phi)^2+V|\Phi|^2).
\end{align*}
Together with $\slashed{\nabla}_{r}\Phi=\frac{1}{2(1-\mu)}(\slashed{\nabla}_L \Phi-\slashed{\nabla}_{\underline{L}}\Phi)$ and $(\tilde{\slashed{\nabla}}\Phi)^2=-\frac{1}{(1-\mu)}\slashed{\nabla}_L \Phi\slashed{\nabla}_{\underline{L}}\Phi$, we have
\begin{align*}
\nabla^a \mathring{J}_a[\Phi]=&-\frac{(n-1)M}{r^{2n}(1-\mu)^2}|\slashed{\nabla}_L \Phi|^2+\frac{1}{r^{n+3}}|\mathring{\slashed{\nabla}}\Phi|^2+ \frac{1}{r^n(1-\mu)}W \Phi\cdot\slashed{\nabla}_L \Phi-\frac{1}{2r^n(1-\mu)}\nabla_L V|\Phi|^2,
\end{align*}
where 
\begin{align*}
W=\frac{1}{r^{2}}\left[ \frac{n^2-2n}{4} +\frac{n^2 M}{2r^{n-1}} \right].
\end{align*}
Noting that
\begin{align*}
\nabla^a\left( -\frac{1}{2r^n (1-\mu)}W|\Phi|^2L_a \right)=&-\frac{1}{2r^n (1\mu)}W\Phi\cdot\slashed{\nabla}_L \Phi-\frac{1}{2r^n (1-\mu)} \nabla_L W|\Phi|^2,
\end{align*}
we deduce that $J_a[\Phi]=\mathring{J}_a[\Phi]-\frac{1}{2}W|\Phi|^2L_a$ satisfies
\begin{align*}
\nabla^b J_b[\Phi]=&-\frac{(n-1)M}{r^{2n}(1-\mu)^2}|\slashed{\nabla}_L \Phi|^2+\frac{1}{r^{n+3}}|\mathring{\slashed{\nabla}}\Phi|^2-\frac{1}{2r^n(1-\mu)}\nabla_L (V+W) |\Phi|^2.
\end{align*}
From 
\begin{align*}
T_{ab}[\Phi]\nabla^a r L^b=&\frac{1}{2}T_{ab}[\Phi]L^b(L-\underline{L})^a\\
                               =&\frac{1}{2}|\slashed{\nabla}_L \Phi|^2-\frac{1-\mu}{2r^2}|\mathring{\slashed{\nabla}}\Phi|^2-\frac{1-\mu}{2}V|\Phi|^2,
\end{align*}
we have
\begin{align*}
\nabla^a r J_a[\Phi] =\frac{1}{2r^{n}(1-\mu)}|\slashed{\nabla}_L \Phi|^2-\frac{1}{2r^{n+2}}|\mathring{\slashed{\nabla}}\Phi|^2-\frac{1}{2r^n}(V+W)|\Phi|^2.
\end{align*}
Thus
\begin{align*}
\nabla^a (r^p J_a[\Phi])=&r^p\nabla^a J_a[\Phi]+pr^{p-1}\nabla^a r J_a[\Phi]\\
                      =&\frac{r^{p-n-1}}{(1-\mu)^2}\left( \frac{p}{2}(1-\mu)-\frac{(n-1)M}{r^{n-1}} \right)|\slashed{\nabla}_L \Phi|^2+\left(1-\frac{p}{2}\right)r^{p-n-3}|\mathring{\slashed{\nabla}}\Phi|^2\\
                      -&\frac{1}{2}r^{p-n}\left(\frac{1}{1-\mu}\nabla_L (V+W)-\frac{p}{r}(V+W)\right)|\Phi|^2.
\end{align*}
From the assumption on $V$,\\
\begin{align*}
V+W=\frac{\bar{a}_0}{r^2}+\frac{ \bar{b}_0M}{r^{n+1}}+O\left(\frac{M^2}{r^{2n}}\right),
\end{align*}
where $\bar{a}_0=a_0+\frac{n^2-2n}{4}$ and $\bar{b}_0=b_0+\frac{n^2}{2}$, the coefficient of $|\Phi|^2$ has the form
\begin{align*}
r^{p-n-3} \left( \bar{a}_0\left(1-\frac{p}{2}\right)+ \frac{\bar{b}_0}{2}(n+1-p)\frac{M}{r^{n-1}}+O\left(\frac{M^2}{r^{2n-2}}\right) \right).
\end{align*}
The leading positive term $\bar{a}_0\left(1-\frac{p}{2}\right)$ vanishes as $p=2$. In order to make the coefficient positive, we consider
\begin{align*}
&\nabla^a\left( \frac{r^p J_a[\Phi]}{(1-\mu)^{k-1}} \right)\\
&=\frac{r^{p-n-1}}{(1-\mu)^{k+1}}\left[ \frac{p}{2}(1-\mu)-\frac{k(n-1)M}{r^{n-1}} \right]|\slashed{\nabla}_L \Phi|^2\\
&+\frac{r^{p-3-n}}{(1-\mu)^{k-1}}\left[ \left(1-\frac{p}{2}\right)+(k-1)(n-1)(1-\mu)^{-1}\frac{M}{r^{n-1}} \right]|\mathring{\slashed{\nabla}}\Phi|^2\\
&+\frac{r^{p-n-3}}{(1-\mu)^{k-1}}\left[ \bar{a}_0\left(1-\frac{p}{2} \right) + \Big(\frac{\bar{b}_0}{2}(n+1-p)+\bar{a}_0(k-1)(n-1)\Big)\frac{M}{r^{n-1}}  +O\left(\frac{M^2}{r^{2n-2}}\right) \right]|\Phi|^2.
\end{align*}
By taking $k$ large such that $\frac{\bar{b}_0}{2}(n+1-p)+\bar{a}_0(k-1)(n-1)>0$, the coefficient of $|\Phi|^2$ above is non-negative for $r \geq R_1$ sufficiently large.  Rechoosing $R_1$ as necessary, comparison of the coefficient of $|\slashed{\nabla}_{L}\Phi|^2$ above with that of the right-hand side in the proposition holds, and the proposition follows.  
\end{proof}
We remark that this lemma will be applied to the Regge-Wheeler type equations \eqref{RW1}, \eqref{RW3}, and \eqref{RW4}.  Since the potentials $V^{(+)}_\ell$ in \eqref{RW4} satisfy $V^{(+)}_\ell\to V^{(+)}$ as $\ell\to\infty$, we can choose $R_1$ and $k$ large which hold for all the equations  \eqref{RW1}, \eqref{RW3}, and \eqref{RW4}.

Defining
\begin{align}
\begin{split}
E^p_{\Psi}(\Sigma_\tau)&:=\int_{\Sigma_\tau\cap\{r\geq R_1\}} \frac{r^{p-n}}{(1-\mu)^k} \frac{|\slashed{\nabla}_L\Phi|^2}{(1-\mu)} r^n dv dVol_{S^n},\\
A^p_{\Psi}(\Sigma_{\tau})&:=\int_{\Sigma_\tau\cap\{r\geq R_1\}} r^{p-n}|r^{-2}\mathring{\slashed{\nabla}}\Phi|^2 r^n dv dVol_{S^n},
\end{split}
\end{align}
with $\Phi$ specified in terms of $\Psi$ by \eqref{PhiDef}, the Morawetz estimate of Lemma \ref{Morawetz} and the $r^p$ density estimate of Proposition \ref{rpDensity} give the $r^p$-estimate:
\begin{theorem}\label{rpEstimate}
Suppose $\Psi$ is a solution of the Regge-Wheeler type equation \eqref{RWequation} as in Definition \ref{RWdefinition}.  Further, assume that $\Psi$ satisfies the $T$-energy comparison \eqref{Tcomparison} and the Morawetz estimate of Lemma \ref{Morawetz}, with choices of $k$ and $R_1$ made such that Proposition \ref{rpDensity} holds.  For $1\leq p\leq 2$ and $\tau_1<\tau_2$, we have the $r^p$-estimate:
\begin{align}
\begin{split}
&E^p_{\Psi}(\Sigma_{\tau_2})+\frac{1}{C}\int_{\tau_1}^{\tau_2}\Big( E^{p-1}_{\Psi}(\Sigma_\tau)+(2-p)A^{p-1}_{\Psi}(\Sigma_\tau)\Big) d\tau\\
&\leq E^p_{\Psi}(\Sigma_{\tau_1})+C\check{E}^N_{\Psi}(\Sigma_{\tau_1}).
\end{split}
\end{align}
\end{theorem}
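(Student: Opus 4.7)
The plan is to apply the divergence theorem to the current $(1-\mu)^{-(k-1)} r^p J_a[\Phi]$, with $\Phi = r^{n/2}\Psi$, on the spacetime region $\mathcal{R} := D(\tau_1, \tau_2) \cap \{r \geq R_1\}$. The boundary $\partial \mathcal{R}$ consists of four pieces: the outgoing null slice portions $\Sigma_{\tau_i} \cap \{r \geq R_1\}$ for $i = 1, 2$, the timelike hypersurface $\{r = R_1\} \cap D(\tau_1, \tau_2)$, and the null limit at infinity $\mathcal{I}^+(\tau_1, \tau_2)$.

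On the slice portions the chosen normal $L$ is null, so the $L_b$ term in $J_b[\Phi]$ is annihilated upon contracting with $L^a$, and the remaining flux $\frac{|\slashed{\nabla}_L \Phi|^2}{r^n(1-\mu)}$ integrates against the weight $r^p/(1-\mu)^{k-1}$ and the volume form $r^n dv\, dVol_{S^n}$ to yield precisely $E^p_\Psi(\Sigma_{\tau_i})$. At null infinity, a parallel calculation in the $v \to \infty$ limit produces a non-negative contribution (the dominant $|\slashed{\nabla}_L \Phi|^2$ piece is positive and the $|\Phi|^2$ correction is absorbed using the Poincar\'e inequality \eqref{PoincarePsi} together with the assumed radial decay of $V$), which may be discarded from the right-hand side.

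The delicate boundary is the timelike cylinder $\{r = R_1\}$. Here the flux is a bounded quadratic form in the first derivatives of $\Psi$ (and in $\Psi$ itself), pointwise comparable to the Morawetz density $K^{\mathfrak{M}}[\Psi]$ of Lemma \ref{Morawetz} evaluated in the exterior regime. By a mean-value argument---reducing the integral over $\{r = R_1\}$ to a bulk integral over a thin collar $\{R_1 \leq r \leq R_1 + \delta\}$, with $R_1$ enlarged as needed consistently with the constraints of Proposition \ref{rpDensity}---this flux is bounded by $C\check{E}^N_\Psi(\Sigma_{\tau_1})$ via Theorem \ref{MorawetzPlus}.

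For the bulk, Proposition \ref{rpDensity} yields the pointwise lower bound, and Fubini (with $\tau$ parametrizing the null slices through $\tau = u + \tfrac{1}{2}(R_1)_*$) converts the spacetime integral into the time integral $\tfrac{1}{C}\int_{\tau_1}^{\tau_2}(E^{p-1}_\Psi(\Sigma_\tau) + (2-p) A^{p-1}_\Psi(\Sigma_\tau))\, d\tau$. Collecting these contributions yields the stated inequality. The principal obstacle is the cylindrical flux at $r = R_1$: one must invoke the full Morawetz estimate rather than mere energy conservation to absorb it, and simultaneous consistency of the constants $r_0, R_1, k$ among Lemma \ref{Morawetz}, Theorem \ref{MorawetzPlus}, and Proposition \ref{rpDensity} must be verified.
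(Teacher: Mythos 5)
Your overall architecture is sound and the key difficulty is correctly identified, but your mechanism at the interface $r=R_1$ differs from the paper's and, as written, has a soft spot. The paper avoids any interior boundary altogether: it works with the globally defined current $J^{p}_{a}[\Phi]=\eta(r)\,r^{p}J_{a}[\Phi]/(1-\mu)^{k}$, where $\eta$ is a radial cutoff equal to $1$ for $r\geq R_1$ and $0$ for $r\leq R_1-1$, and adds $C J^{\mathfrak{M}}[\Psi]$ so that the cutoff-derivative error terms supported in $R_1-1\leq r\leq R_1$ are absorbed \emph{pointwise at the level of divergences} by the coercivity of $K^{\mathfrak{M}}$ there; the divergence theorem is then applied over all of $D(\tau_1,\tau_2)$, with the remaining boundary terms (on $\Sigma_{\tau_i}$, $\mathcal{H}^{+}$, $\mathcal{I}^{+}$) controlled by Lemma \ref{Morawetz} and Theorem \ref{uniformBoundedness}. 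You instead keep a sharp boundary at $r=R_1$ and propose to control the flux through the timelike cylinder by a mean-value argument in $r$, invoking the spacetime Morawetz bound of Theorem \ref{MorawetzPlus} over a thin collar. The concern is that the radius produced by the mean-value theorem depends on $\Psi$ and on $(\tau_1,\tau_2)$, whereas the energies $E^{p}_{\Psi}$, $A^{p}_{\Psi}$ in the statement are defined with a fixed $R_1$; this matters because the estimate is later iterated over dyadic time intervals and the same quantities must appear on both sides across successive applications. The repair is routine but must be stated: either average the divergence identity over $\rho\in[R_1,R_1+\delta]$ (so the cylinder flux literally becomes a collar bulk integral, and the difference between $E^{p}$ at radius $\rho$ and at radius $R_1$ over the bounded annulus is absorbed into the $C\check{E}^{N}_{\Psi}(\Sigma_{\tau_1})$ term via uniform boundedness), or simply adopt the paper's cutoff, which accomplishes the same absorption without ever producing a solution-dependent radius. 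Your treatment of the null slice fluxes (the $L_b$ term annihilated by $g(L,L)=0$, yielding exactly the $E^{p}$ integrand) and of the flux at $\mathcal{I}^{+}$ is correct.
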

\begin{proof}
Let $J_{a}^{p}[\Phi]=\eta(r) \frac{r^p J_{a}[\Phi]}{(1-\mu)^k}$, where $\eta$ is a radial cut-off function with $\chi=1$ for $r\geq R_1$ and $\chi=0$ for $r\leq R_1-1$, with $k$ and $R_1$ chosen to ensure that Proposition \ref{rpDensity} holds.  Let $J_{a}^{\mathfrak{M}}[\Psi]$ be a current satisfying the Morawetz estimates in Lemma \ref{Morawetz}.  Using Proposition \ref{rpDensity} and the coercivity of $K^{\mathfrak{M}}[\Psi] := \Div J^{\mathfrak{M}}[\Psi]$ in $R_1-1\leq r\leq R_1$ , we have for $C>>1$
\begin{align*}
\textup{div}(J^p[\Phi]+CJ^{\mathfrak{M}}[\Psi])&\geq 0 \ as\ r\leq R_1,\\
\textup{div}(J^p[\Phi]+CJ^{\mathfrak{M}}[\Psi])&\gtrsim r^{p-n-1}|\nablas_{v}\Psi|^2+(2-p)r^{p-n-3}|\mathring{\nablas}\Psi|^2\ as\ r\geq R_1.
\end{align*}
Owing to the $T$-energy comparison \eqref{Tcomparison}, implying Theorem \ref{uniformBoundedness}, we have estimates on many of the boundary terms in terms of the initial $N$-energy.
The result then follows from the divergence theorem.
\end{proof}
In addition, we have the lemma:
\begin{lemma}\label{smallLemma}
\begin{align*}
E^0_{\Psi}(\Sigma_\tau)+A^0_{\Psi}(\Sigma_\tau)\gtrsim \int_{\Sigma_{\tau}\cap \{r\geq R_1\}} T_{ab}[\Psi]N^an^b_{\Sigma_\tau} dVol_{\Sigma_\tau}
\end{align*}
\end{lemma}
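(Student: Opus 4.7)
The plan is to expand the integrand of the right-hand side pointwise and dominate each term by $E^0_\Psi(\Sigma_\tau)$ or $A^0_\Psi(\Sigma_\tau)$. In the region $\{r \geq R_1\}$ the red-shift vector $Y$ vanishes, so $N = T$, while the hypersurface normal is $n_{\Sigma_\tau} = L$ with induced volume element $r^n\,dv\,dVol_{S^n}$. A direct computation in double-null coordinates, using $g(T,L) = -(1-\mu)$ together with $\slashed{\nabla}_T \Psi = \tfrac{1}{2}(\slashed{\nabla}_L\Psi + \slashed{\nabla}_{\Lb}\Psi)$ and $(\slashed{\nabla}\Psi)^2 = -(1-\mu)^{-1}\slashed{\nabla}_{\Lb}\Psi\cdot\slashed{\nabla}_L\Psi + r^{-2}|\mathring{\slashed{\nabla}}\Psi|^2$, gives
\begin{equation*}
T_{ab}[\Psi]T^a L^b = \tfrac{1}{2}|\slashed{\nabla}_L\Psi|^2 + \tfrac{1-\mu}{2r^2}|\mathring{\slashed{\nabla}}\Psi|^2 + \tfrac{(1-\mu)V}{2}|\Psi|^2.
\end{equation*}
The angular contribution integrates to $\int \tfrac{(1-\mu)r^{n-2}}{2}|\mathring{\slashed{\nabla}}\Psi|^2\,dv\,dVol_{S^n}$, which is immediately $\lesssim A^0_\Psi(\Sigma_\tau)$.

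Next I would re-express the $\slashed{\nabla}_L$-term via $\Phi = r^{n/2}\Psi$. Using $\partial_L r = 1-\mu$, one computes $\slashed{\nabla}_L\Phi = r^{n/2}\slashed{\nabla}_L\Psi + \tfrac{n(1-\mu)}{2}r^{n/2-1}\Psi$, and Cauchy--Schwarz yields the pointwise bound
\begin{equation*}
r^n|\slashed{\nabla}_L\Psi|^2 \lesssim |\slashed{\nabla}_L\Phi|^2 + r^{n-2}|\Psi|^2.
\end{equation*}
Since $r \geq R_1 > r_h$ implies $(1-\mu)^{-(k+1)}$ is bounded on $\{r \geq R_1\}$, the first term integrates into $E^0_\Psi(\Sigma_\tau)$. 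By the hypothesis $|V| \lesssim r^{-2}$ of Definition \ref{RWdefinition}, the potential contribution is controlled by the same weight $r^{n-2}|\Psi|^2$. Thus the right-hand side is bounded by $E^0_\Psi(\Sigma_\tau)$ plus a multiple of $\int_{\Sigma_\tau \cap \{r \geq R_1\}} r^{n-2}|\Psi|^2\,dv\,dVol_{S^n}$.

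The final step is the Poincar\'e inequality \eqref{PoincarePsi}. Because $\Psi$ is a section of a sub-bundle of $\mathcal{L}(-2)$ supported at $\ell \geq 2$, the least eigenvalue $\lambda$ recorded in (\ref{scalarSpectraThree},\ \ref{coVectorSpectraTwo},\ \ref{twoTensorSpectra}) is strictly positive, giving $\int_{S^n}|\Psi|^2\,dVol_{S^n} \leq \lambda^{-1}\int_{S^n}|\mathring{\slashed{\nabla}}\Psi|^2\,dVol_{S^n}$ on each orbit sphere. Multiplying by the radial weight $r^{n-2}$ and integrating in $v$ absorbs the residual zero-order term into $A^0_\Psi(\Sigma_\tau)$. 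I do not anticipate any serious obstacle here; the main advantage of routing the zero-order estimate through Poincar\'e on the orbit spheres is that it avoids an integration-by-parts in $v$ along the null hypersurface and the associated boundary contribution at the sphere $\{r = R_1\}$ that would otherwise have to be absorbed separately.
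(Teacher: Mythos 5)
Your argument is correct and follows the same route as the paper's own (two-line) proof: identify $N=T$ and $n_{\Sigma_\tau}=L$ on $\{r\geq R_1\}$, expand $\slashed{\nabla}_L\Phi = r^{n/2}\slashed{\nabla}_L\Psi + \tfrac{n}{2}(1-\mu)r^{n/2-1}\Psi$ so that $E^0_{\Psi}$ controls the $|\slashed{\nabla}_L\Psi|^2$ flux up to a zero-order error, and absorb that error (together with the potential term, using $|V|\lesssim r^{-2}$ from Definition \ref{RWdefinition}) into $A^0_{\Psi}$ via the Poincar\'e inequality \eqref{PoincarePsi}. The only difference is that you write out the computation of $T_{ab}[\Psi]T^aL^b$ explicitly, which the paper leaves implicit.
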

\begin{proof}
The integrand of $E_0$ is $|\slashed{\nabla}_L(r^{n/2}\Psi)|^2\approx |r^{n/2}\slashed{\nabla}_L\Psi+\frac{n}{2}r^{n/2-1}\Psi|^2$. Borrowing some of the $|\Psi|^2$ term from $A_0$ and applying a Poincar\'{e} inequality yields the estimate.
\end{proof}

\subsection{Uniform Decay Estimates}

We introduce the notation
\begin{align}
\begin{split}
E^2_{\Psi, \mathcal{L}_{K}\Psi}(\Sigma_{\tau}) &:= E^2_{\Psi}(\Sigma_{\tau}) + E^2_{\mathcal{L}_{T}\Psi}(\Sigma_{\tau}) + E^2_{\mathcal{L}_{\Omega}\Psi}(\Sigma_{\tau}),\\
E^N_{\Psi, \mathcal{L}_{K}\Psi}(\Sigma_{\tau}) &:= E^N_{\Psi}(\Sigma_{\tau}) + E^N_{\mathcal{L}_{T}\Psi}(\Sigma_{\tau}) + E^N_{\mathcal{L}_{\Omega}\Psi}(\Sigma_{\tau}),
\end{split}
\end{align}
with $K := \{ T, \Omega_{i}\}$ ranging over the background Killing fields and $\Omega := \{ \Omega_{i} \}$ ranging over the angular Killing fields.

Assuming the $T$-energy comparison \eqref{Tcomparison} and the Morawetz estimate of Lemma \ref{Morawetz}, we have the following theorem on uniform decay.

\begin{theorem}
Suppose $\Psi$ is a solution of the Regge-Wheeler type equation \eqref{RWequation} as in Definition \ref{RWdefinition}.  Further, assume that $\Psi$ satisfies the $T$-energy comparison \eqref{Tcomparison} and the Morawetz estimate of Lemma \ref{Morawetz}, with choices of $k$ and $R_1$ made such that Proposition \ref{rpDensity} holds.  Then $\Psi$ satisfies the uniform decay estimate
\begin{equation}
\check{E}^{N}_{\Psi}(\Sigma_{\tau}) \lesssim \frac{I_{\Psi}(\Sigma_0)}{\tau^2},
\end{equation}
where 
\begin{equation} 
I_{\Psi}(\Sigma_0):=E^2_{\Psi, \mathcal{L}_{K}\Psi}(\Sigma_0)+E^N_{\Psi, \mathcal{L}_{K}\Psi, \mathcal{L}^2_{K}\Psi}(\Sigma_0).
\end{equation}
\end{theorem}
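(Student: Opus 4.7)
The plan is to use the Dafermos--Rodnianski dyadic pigeonhole scheme, combining Theorem \ref{rpEstimate} (the $r^p$-hierarchy), Theorem \ref{MorawetzPlus} (integrated local energy decay), and Theorem \ref{uniformBoundedness} (uniform boundedness), promoted from $\tau^{-1}$ to $\tau^{-2}$ by commutation with the background Killing fields $T$ and $\Omega_i$. The starting observation is that the operator $\slashed{\Box}_{\mathcal{L}(-2)} - V$ commutes with $\mathcal{L}_T$ and $\mathcal{L}_{\Omega_i}$, since $T, \Omega_i$ are Killing, preserve the quotient structure, and $V$ is radial. Hence $\mathcal{L}_K \Psi$ and $\mathcal{L}_K^2 \Psi$ are themselves solutions of Regge-Wheeler type equations satisfying the hypotheses of the theorem, and inherit Theorems \ref{uniformBoundedness}, \ref{MorawetzPlus}, and \ref{rpEstimate}.

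For the first pass, I would apply the $p=2$ case of Theorem \ref{rpEstimate} on $[0,\infty)$ to obtain
\begin{equation*}
\int_0^\infty E^1_\Psi(\Sigma_\tau)\, d\tau \lesssim E^2_\Psi(\Sigma_0)+\check{E}^N_\Psi(\Sigma_0),
\end{equation*}
then chain with the $p=1$ case, using the weight comparison $E^1_\Psi(\Sigma_0) \lesssim E^2_\Psi(\Sigma_0)$, to produce integrated control of $E^0_\Psi + A^0_\Psi$. By Lemma \ref{smallLemma} this dominates the non-degenerate $N$-flux in $\{r \ge R_1\}$, while Theorem \ref{MorawetzPlus} gives the corresponding bulk bound in $\{r \le R_1\}$ away from the photon sphere. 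Commuting once with $\Omega_i$ to close the degeneration of the Morawetz density at $r = r_P$ then yields
\begin{equation*}
\int_0^\infty \check{E}^N_\Psi(\Sigma_\tau)\, d\tau \lesssim E^2_{\Psi, \mathcal{L}_\Omega \Psi}(\Sigma_0) + \check{E}^N_{\Psi, \mathcal{L}_\Omega \Psi}(\Sigma_0).
\end{equation*}
A pigeonhole argument on dyadic intervals $[2^n, 2^{n+1}]$ selects times $\tau_n \in [2^n, 2^{n+1}]$ where $\check{E}^N_\Psi(\Sigma_{\tau_n}) \lesssim 2^{-n}$, and propagating to intermediate times via the uniform boundedness of Theorem \ref{uniformBoundedness} produces the preliminary rate $\check{E}^N_\Psi(\Sigma_\tau) \lesssim \tau^{-1}$.

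To upgrade to $\tau^{-2}$, I would apply this $\tau^{-1}$ estimate to the commuted quantity $\mathcal{L}_K \Psi$, which satisfies a Regge-Wheeler type equation of the same form, obtaining $\check{E}^N_{\mathcal{L}_K \Psi}(\Sigma_\tau) \lesssim \tau^{-1}$ in terms of the data of $\mathcal{L}_K \Psi$ and $\mathcal{L}_K^2 \Psi$. Returning to the $r^p$-hierarchy for $\Psi$ itself, the $\check{E}^N_\Psi(\Sigma_{\tau_1})$ inhomogeneity appearing on the right-hand side of Theorem \ref{rpEstimate} now decays like $\tau_1^{-1}$, and the $\partial_t$-derivatives present inside $E^p_\Psi$ may be identified with $\mathcal{L}_T \Psi$ and bounded via its improved decay. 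Summing dyadically then yields $\int_0^\infty \tau\, \check{E}^N_\Psi(\Sigma_\tau)\, d\tau \lesssim I_\Psi(\Sigma_0)$, and a second pigeonhole together with uniform boundedness delivers the desired
\begin{equation*}
\check{E}^N_\Psi(\Sigma_\tau) \lesssim \frac{I_\Psi(\Sigma_0)}{\tau^2}.
\end{equation*}

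The principal obstacle I anticipate is the degeneration of the Morawetz density at the photon sphere $r = r_P$, where the coefficient $(1 - (n+1)M/r^{n-1})^2$ vanishes in Lemma \ref{Morawetz}. Handling this in a way compatible with the two-step iteration requires careful bookkeeping of how many commutations by $\Omega_i$ and $T$ are needed at each stage, and this accounting is precisely what dictates the appearance of $\mathcal{L}_K \Psi$ and $\mathcal{L}_K^2 \Psi$ in the definition of $I_\Psi(\Sigma_0)$. A secondary technical point is the justification of the $r^p$-estimate after the $\tau^{-1}$ decay of $\check{E}^N_\Psi$ is fed back into its right-hand side; this follows from standard density/limiting arguments based on finite-speed propagation and the uniform boundedness already established in Theorem \ref{uniformBoundedness}.
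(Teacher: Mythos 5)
Your proposal follows essentially the same route as the paper: the Dafermos--Rodnianski scheme of chaining the $p=2$ and $p=1$ cases of Theorem \ref{rpEstimate} with Lemma \ref{smallLemma} and Theorem \ref{MorawetzPlus} (the latter applied to $\Psi$, $\mathcal{L}_T\Psi$, $\mathcal{L}_\Omega\Psi$ to undo the photon-sphere degeneracy), a dyadic pigeonhole plus Theorem \ref{uniformBoundedness} to get the rate $\tau^{-1}$, and one further Killing commutation fed back into the hierarchy to upgrade to $\tau^{-2}$; this is precisely the paper's argument, phrased there via the mean value theorem on the intervals $[\tau_{k-1},\tau_k]$ and $[\tilde\tau_{k-2},\tilde\tau_k]$.

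One intermediate claim is stated incorrectly, though the fix is exactly what the paper does. The asserted global weighted bound $\int_0^\infty \tau\,\check{E}^{N}_{\Psi}(\Sigma_\tau)\,d\tau \lesssim I_{\Psi}(\Sigma_0)$ cannot hold: even for the target rate $\check{E}^{N}_{\Psi}(\Sigma_\tau)\approx \tau^{-2}$ the integrand is $\approx \tau^{-1}$ and the integral diverges logarithmically. What the second pass through the hierarchy actually yields is the dyadic-block statement $\int_{\tau}^{2\tau}\check{E}^{N}_{\Psi}(\Sigma_s)\,ds \lesssim \tau^{-1}I_{\Psi}(\Sigma_0)$ (coming from $E^1_{\Psi}(\Sigma_{\tilde\tau_{k-2}})\lesssim \tau_k^{-1}(\cdots)$ together with the $\tau^{-1}$ decay of $E^{N}_{\mathcal{L}_K\Psi}$), and the pigeonhole applied block by block, combined with uniform boundedness, then gives $\check{E}^{N}_{\Psi}(\Sigma_\tau)\lesssim \tau^{-2}I_{\Psi}(\Sigma_0)$. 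Replace the global weighted integral by this block version and your argument coincides with the paper's.
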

\begin{proof}
Note that the $T$-energy comparison \eqref{Tcomparison} implies the conclusions of Theorem \ref{uniformBoundedness}.  In addition, these conclusions and the Morawetz estimates of Lemma \ref{Morawetz} imply Theorem \ref{MorawetzPlus}.  Finally, these results, along with the density estimate of Proposition \ref{rpDensity}, imply the $r^p$-estimate of Theorem \ref{rpEstimate}.

By setting $\tau_k=2^k$ and applying the $r^p$-estimate in $(\tau_{k-1},\tau_k)$ with $p=2$, we have
\begin{align*}
E^2_{\Psi}(\Sigma_{\tau_k})+\frac{1}{C}\int_{\tau_{k-1}}^{\tau_k} E^1_{\Psi}(\Sigma_\tau) d\tau\leq E^2_{\Psi}(\Sigma_{\tau_{k-1}})+C\check{E}^N_{\Psi}(\Sigma_{\tau_{k-1}}).
\end{align*}
From the mean value theorem, there exist $\tilde{\tau_k}\in [\tau_{k-1},\tau_k]$ such that
\begin{align*}
E^1_{\Psi}(\Sigma_{\tilde{\tau}_k})&\leq \frac{C}{\tau_k} \left( E^2_{\Psi}(\Sigma_{\tau_{k-1}})+\check{E}^N_{\Psi}(\Sigma_{\tau_{k-1}}) \right)\\
                   &\leq \frac{C}{\tau_k} \left( E^2_{\Psi}(\Sigma_{\tau_{0}})+\check{E}^{N}_{\Psi}(\Sigma_{\tau_0})\right),
\end{align*}
where we used $r^p$-estimate in $(0,\tau_{k-1})$ and uniform boundedness of the $N$-energy.  Applying the $r^p$-estimate in $(\tilde{\tau}_{k-2},\tilde{\tau}_k)$ with $p=1$,
\begin{align*}
E^1_{\Psi}(\Sigma_{\tilde{\tau}_k})+\frac{1}{C}\int_{\tilde{\tau}_{k-2}}^{\tilde{\tau}_k}\Big( E^0_{\Psi}(\Sigma_\tau)+A^0_{\Psi}(\Sigma_\tau)\Big) d\tau\leq E^1_{\Psi}(\Sigma_{\tilde{\tau}_{k-2}})+CE^N_{\Psi}(\Sigma_{\tilde{\tau}_{k-2}}).
\end{align*}
Together with Lemma \ref{smallLemma}, we have 
\begin{align*}
E^1_{\Psi}(\Sigma_{\tilde{\tau}_k})+\frac{1}{C}\int_{\tilde{\tau}_{k-2}}^{\tilde{\tau}_k} \int_{\Sigma_\tau\cap\{r\geq R_1\}} T_{ab}[\Psi]N^an^b_{\Sigma_\tau}  d\tau\leq E^1_{\Psi}(\Sigma_{\tilde{\tau}_{k-2}})+CE^N_{\Psi}(\Sigma_{\tilde{\tau}_{k-2}}).
\end{align*}

Applying Theorem \ref{MorawetzPlus} to $\Psi, \mathcal{L}_T \Psi, \mathcal{L}_\Omega \Psi$,
\begin{align*}
&\int_{{\tau}}^{\infty} \int_{\Sigma_{\tau'}\cap\{r\leq R_1\}} T_{ab}[\Psi]N^an^b_{\Sigma_{\tau'}}d\tau'\\
&\lesssim\int_{D(\tau,\infty)} \left(K^{\mathfrak{M}}[\Psi] + K^{\mathfrak{M}}[\mathcal{L}_T \Psi] + K^{\mathfrak{M}}[\mathcal{L}_\Omega \Psi]\right)dVol\\
&\lesssim E^{N}_{\Psi,\mathcal{L}_T \Psi,\mathcal{L}_\Omega \Psi}(\Sigma_\tau).
\end{align*}
By adding above two inequalities, we get
\begin{align*}
E^1_{\Psi}(\Sigma_{\tilde{\tau}_k})+\frac{1}{C}\int_{\tilde{\tau}_{k-2}}^{\tilde{\tau}_k} E^N_{\Psi}(\Sigma_\tau) d\tau&\leq E^1_{\Psi}(\Sigma_{\tilde{\tau}_{k-2}})+CE^N_{\Psi,\mathcal{L}_T \Psi,\mathcal{L}_\Omega \Psi}(\Sigma_{\tilde{\tau}_{k-2}})\\
&\lesssim \frac{1}{\tau_k} (E^2_{\Psi}(\Sigma_{\tau_0})+E^N_{\Psi}(\Sigma_{\tau_0}))+E^N_{\Psi,\mathcal{L}_T \Psi,\mathcal{L}_\Omega \Psi}(\Sigma_{\tilde{\tau}_{k-2}})
\end{align*}
From the mean value theorem, there exists $\hat{\tau}_k\in (\tilde{\tau}_{k-2},\tilde{\tau}_{k})$ such that
\begin{align}\label{end}
\begin{split}
E^N_{\Psi}(\Sigma_{\hat{\tau}_k})&\lesssim \frac{1}{\tau_k^2} (E^2_{\Psi}(\Sigma_{\tau_0})+E^N_{\Psi}(\Sigma_{\tau_0}))+\frac{1}{\tau_k} E^N_{\Psi,\mathcal{L}_T \Psi,\mathcal{L}_\Omega \Psi}(\Sigma_{\tilde{\tau}_{k-2}})\\
&\lesssim \frac{1}{\tau_k^2} (E^2_{\Psi}(\Sigma_{\tau_0})+E^N_{\Psi,\mathcal{L}_K\Psi}(\Sigma_{\tau_0}))+\frac{1}{\tau_k} E^N_{\mathcal{L}_K \Psi}(\Sigma_{\tilde{\tau}_{k-2}}),
\end{split}
\end{align}
where we have used
\begin{align*}
E^{N}_{\Psi}(\Sigma_{\tilde{\tau}_{k-2}}) &\lesssim E^{N}_{\Psi}(\Sigma_{\hat{\tau}_{k-2}})\\
& \lesssim \frac{1}{\tau_k^2} (E^2_{\Psi}(\Sigma_{\tau_0})+E^N_{\Psi}(\Sigma_{\tau_0}))+\frac{1}{\tau_k} E^N_{\Psi,\mathcal{L}_T \Psi,\mathcal{L}_\Omega \Psi}(\Sigma_{\tilde{\tau}_{k-2}}).
 \end{align*}
Finally, applying \eqref{end} for $\mathcal{L}_K \Psi$, 
\begin{align*}
E^N_{\mathcal{L}_K \Psi}(\Sigma_{\hat{\tau}_k})&\lesssim \frac{1}{\tau_k^2}(E^2_{\mathcal{L}_K \Psi}(\Sigma_{\tau_0})+E^N_{\mathcal{L}_K \Psi,\mathcal{L}_K^2 \Psi}(\Sigma_{\tau_0}))+\frac{1}{\tau_k}E^N_{\mathcal{L}_K^2 \Psi}(\Sigma_{\tilde{\tau}_{k-2}})\\
&\lesssim \frac{1}{\tau_k^2}(E^2_{\mathcal{L}_K \Psi}(\Sigma_{\tau_0})+E^N_{\mathcal{L}_K \Psi,\mathcal{L}_K^2 \Psi}(\Sigma_{\tau_0}))+\frac{1}{\tau_k}E^N_{\mathcal{L}_K^2 \Psi}(\Sigma_{\tau_0}),
\end{align*}
and plugging in to \eqref{end} leads to  
\begin{align}
 E^N_{\Psi}(\Sigma_{\hat{\tau}_{k-2}})\lesssim \frac{1}{\tau_k^2}\Big( E^2_{\Psi,\mathcal{L}_K \Psi}(\Sigma_{\tau_0})+E^N_{\Psi,\mathcal{L}_K \Psi,\mathcal{L}_K^2 \Psi}(\Sigma_{\tau_0}) \Big).
\end{align}
\end{proof}

\section{The Two-Tensor Portion}

\subsection{The Linearized Einstein Equations}

There is just a single wave-type equation involving the two-tensor portion.  Namely, from \eqref{tracelessTwo} we have
\begin{align*}
&-r^{2}\tilde\Box\left(r^{-2}\hat{h}_{\alpha\beta}\right) - nr^{-1}r^{A}\tilde\nabla_{A}\hat{h}_{\alpha\beta} -r^{-2}\mathring\Delta \hat{h}_{\alpha\beta}\\
&+2r^{-2}\left(n+r^{A}r_{A}-r\tilde\Box r\right)\hat{h}_{\alpha\beta} = 0.
\end{align*}
\subsection{Master Equation for the Two-Tensor Portion}
Expanding the first term in the equation above and noting
\[ -r^2\tilde\Box\left(r^{-2}\right) = 2r^{-1}\tilde\Box r -6r^{-2}r^{A}r_{A},\]
we deduce 
\[ \slashed{\Box}_{\mathcal{L}(-2)} \hat{h}_{\alpha\beta} = 2r^{-2}\left(n +(1-n)r^{A}r_{A}-r\tilde\Box r\right)\hat{h}_{\alpha\beta},\]
or
\begin{equation}\label{RW1}
\slashed{\Box}_{\mathcal{L}(-2)} \hat{h}_{\alpha\beta} = U\hat{h}_{\alpha\beta},
\end{equation}
with
\begin{equation}
U := 2r^{-2}.
\end{equation}
In deriving \eqref{RW1}, we have made use of \eqref{SchwarzFormulae} and \eqref{TwoTensorWave}.  The equation is analogous to that first discovered in Kodama-Ishibashi-Seto \cite{KI1}.

\subsection{Analysis of the Master Equation}

In contrast with the Regge-Wheeler type equations to be considered later in this work, the potential in \eqref{RW1} is non-negative throughout the exterior region.  As a consequence, the natural stress-energy tensor \eqref{stressTensor} satisfies a positive energy condition, dramatically simplifying the analysis.

\subsubsection{The $T$-Energy Comparison}
Owing to the positive energy condition, the $T$-energy comparison \eqref{Tcomparison} is easily satisfied.  This, in turn, implies the uniform boundedness result of Theorem \ref{uniformBoundedness}.

\subsubsection{The Morawetz Estimate}

The calculation of the weighted density of the current $J_{a}^{X,\omega^{X}}[\hat{h}]$ \eqref{Xweighted} agrees with that of the two-tensor wave equation, up to the presence of the potential terms
\begin{align*}
&\left[\frac{1}{2}\partial_{r}\mu f U - \frac{1}{2}f(1-\mu)\partial_{r}U\right]|\hat{h}|^2\\
&=\frac{2f}{r^3}\left(1-\frac{n+1}{2}\mu\right)|\hat{h}|^2,
\end{align*}
with coefficient proportional to that of the angular gradient.  Choosing a function $f$ increasing on the exterior and vanishing at the photon sphere, the extra terms above are manifestly non-negative.  Such a choice of $f$ appears in the work of Schlue \cite{Schlue}. Then we can define $J_{a}^{\mathfrak{M}}[\hat{h}]$ as
\begin{equation}
J_{a}^{\mathfrak{M}}[\hat{h}]:=J_{a}^{X,\omega^{X}}[\hat{h}]+\frac{1}{4}(\nabla_a g)|\hat{h}|^2-\frac{1}{4}h\nabla_a g |\hat{h}|^2+\epsilon_3\check{T}_{ab}[\hat{h}]Y^b,
\end{equation}
with suitable choices of $g(r)$ and $\epsilon_3>0$ to obtain the density estimate in Lemma \ref{Morawetz}. See subsection \ref{covectorMorawetz} for details. In addition, the function $f$ constructed in \cite{Schlue} is uniformly bounded on the exterior region, ensuring that the boundary terms associated with the $\mathfrak{M}$-current are dominated by those of the $N$-current as required in Lemma \ref{Morawetz}.  

\subsection{Uniform Boundedness and Decay of the Master Quantity}

With the $T$-comparison and the Morawetz estimates in place, we have uniform boundedness and decay of solutions to the Regge-Wheeler equation \eqref{RW1} in all spacetime dimensions:

\begin{theorem}
Let $\delta g$ be a smooth, symmetric two-tensor on a Schwarzschild-Tangherlini spacetime, satisfying the linearized Einstein equation \eqref{linEinstein}.  There exists a gauge-invariant master quantity $\hat{h}_{\alpha\beta}$ in the two-tensor portion $h_3$ of $\delta g$ satisfying the Regge-Wheeler type equation \eqref{RW1}.  As a solution of \eqref{RW1}, $\hat{h}_{\alpha\beta}$ satisfies the uniform boundedness estimate
\begin{equation}
\check{E}^{N}_{\hat{h}}(\Sigma_\tau) \lesssim \check{E}^{N}_{\hat{h}}(\Sigma_0),
\end{equation}
and the uniform decay estimate
\begin{equation}
\check{E}^{N}_{\hat{h}}(\Sigma_{\tau}) \lesssim \frac{I_{\hat{h}}(\Sigma_0)}{\tau^2},
\end{equation}
where 
\begin{equation} 
I_{\hat{h}}(\Sigma_0):=E^2_{\hat{h}, \mathcal{L}_{K}\hat{h}}(\Sigma_0)+E^N_{\hat{h}, \mathcal{L}_{K}\hat{h}, \mathcal{L}^2_{K}\hat{h}}(\Sigma_0)
\end{equation}
and $\tau \geq 0.$
\end{theorem}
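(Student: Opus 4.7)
The plan is to specialize the general framework of Section 7 to the two-tensor master equation \eqref{RW1}, exploiting the fact that the potential $U = 2r^{-2}$ is \emph{non-negative} throughout the exterior region; this makes the energy theory for \eqref{RW1} considerably simpler than for the scalar and co-vector masters to be treated later. The first step is to confirm that $\hat{h}_{\alpha\beta}$ qualifies as a solution in the sense of Definition \ref{RWdefinition}: it is a symmetric traceless two-tensor on the orbit spheres (by construction in Proposition \ref{decomposition}), it satisfies \eqref{RWequation} with $V = U = 2r^{-2}$, and the potential is bounded and $\approx r^{-2}$ at infinity. Gauge invariance of $\hat{h}_{\alpha\beta}$ was already noted after \eqref{pi_2}.

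Next I would verify the $T$-energy comparison \eqref{Tcomparison}. Because $U \geq 0$, the natural stress-energy tensor \eqref{stressTensor} satisfies a pointwise positive energy condition on any causal pair of vectors, so the integrand of $E^T_{\hat{h}}(\Sigma_\tau)$ in the expansion of Section 7.7.2 is manifestly non-negative term by term. The Hardy estimate of Lemma \ref{HardyTEstimate} is not actually needed here; one simply notes $E^T_{\hat{h}}(\Sigma_\tau) \approx \check{E}^T_{\hat{h}}(\Sigma_\tau)$ by direct inspection. Invoking Theorem \ref{uniformBoundedness} then yields the uniform boundedness estimate $\check{E}^N_{\hat{h}}(\Sigma_\tau) \lesssim \check{E}^N_{\hat{h}}(\Sigma_0)$.

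The main work is in establishing the Morawetz estimate of Lemma \ref{Morawetz}. Following the strategy of Section 7.9, I would pick a radial function $f(r)$, vanishing at the photon sphere $r = r_P$ and monotonically increasing on the exterior, as constructed in Schlue \cite{Schlue}, and form the current $J^{X,\omega^X}_a[\hat{h}]$ from \eqref{Xweighted} with $X = f\partial_{r_*}$. Using \eqref{Xweighted}, the potential contribution to the bulk density is
\[
\left(-\tfrac{1}{2}fU' + \tfrac{1}{2}\mu_r f U\right)|\hat{h}|^2 = \frac{2f}{r^3}\!\left(1 - \tfrac{n+1}{2}\mu\right)|\hat{h}|^2,
\]
whose coefficient is proportional to that of the angular gradient term appearing in the wave-equation part of $\Div J^{X,\omega^X}$ and has the same sign as $f$ on the exterior, hence is non-negative and reinforces the angular trapping term rather than spoiling it. To upgrade to a fully coercive density down to the horizon (and to handle the zero of $f$ at $r_P$ via a $\beta$-modification as in \eqref{Xweightedweighted} and the residual $|\hat{h}|^2$ term), I would add an $\epsilon_3$-multiple of the red-shift current $\check{T}_{ab}[\hat{h}]Y^b$ with $\epsilon_3$ small, producing $J^{\mathfrak{M}}_a[\hat{h}]$ with the required bulk and boundary bounds of Lemma \ref{Morawetz}; the boundary estimates are automatic because $f$ is uniformly bounded on the exterior in Schlue's construction. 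This is the main obstacle, essentially a bookkeeping exercise once the positivity of $U$ is exploited.

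With the $T$-energy comparison and the Morawetz estimate in hand, and since $U = 2r^{-2}$ plainly satisfies the asymptotic form $U = a_0/r^2 + O(M/r^{n+1})$ with $a_0 = 2 > 0$ required by Proposition \ref{rpDensity}, the hypotheses of the uniform decay theorem at the end of Section 7 are met. Commuting the equation with the Killing fields $T$ and $\Omega_i$ (which preserve the sub-bundle of symmetric traceless two-tensors on the spheres) to control $\mathcal{L}_K \hat{h}$ and $\mathcal{L}^2_K \hat{h}$, and applying that theorem, yields the decay estimate $\check{E}^N_{\hat{h}}(\Sigma_\tau) \lesssim I_{\hat{h}}(\Sigma_0)/\tau^2$, completing the proof.
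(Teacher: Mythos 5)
Your proposal is correct and follows essentially the same route as the paper: verify the $T$-energy comparison trivially from the non-negativity of $U=2r^{-2}$, invoke Theorem \ref{uniformBoundedness} for boundedness, observe that the extra potential contribution $\frac{2f}{r^3}\bigl(1-\frac{n+1}{2}\mu\bigr)|\hat{h}|^2$ to the Morawetz density is proportional to the angular trapping term and hence non-negative for Schlue's choice of $f$, and then feed the resulting Morawetz estimate and the $r^p$-hierarchy into the decay theorem of Section 7. The only cosmetic difference is that the paper assembles $J^{\mathfrak{M}}$ with the $g$-modification plus the $\epsilon_3$ red-shift term (deferring details to Subsection \ref{covectorMorawetz}) rather than emphasizing the $\beta$-modification, but this is the same construction.
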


We emphasize that the relevant constants in the comparisons depend only upon the orbit sphere dimension $n$ and the mass $M > 0$.

\section{The Co-Vector Portion}

\subsection{The Linearized Einstein Equations}
The cross-term and the traceless portion of the pure angular term of the linearized Ricci tensor above yield the co-vector equations

\begin{align}
\begin{split}
&-\tilde\Box \hat{h}_{A\alpha} - r^{-2}\mathring\Delta\hat{h}_{A\alpha} + (2-n)r^{-1}r^{B}\tilde\nabla_{B}\hat{h}_{A\alpha}\\
&+(n-1)r^{-2}\hat{h}_{A\alpha} +\tilde\nabla^{B}\tilde\nabla_{A}\hat{h}_{B\alpha} -2r^{-1}\left(\tilde\nabla_{A}\tilde\nabla^{B}r\right)\hat{h}_{B\alpha}\\
&+2(1-n)r^{-2}r_{A}r^{B}\hat{h}_{B\alpha}-2r^{-1}r_{A}\tilde\nabla^{B}\hat{h}_{B\alpha}+nr^{-1}r^{B}\tilde\nabla_{A}\hat{h}_{B\alpha}\\
&+(n-1)\tilde\nabla_{A}\left(r^{-2}\hat{h}_{\alpha}\right) + \tilde\nabla_{A}\left(r^{-2}\mathring\Delta\hat{h}_{\alpha}\right) = 0,
\end{split}
\end{align}

\begin{align}
\begin{split}
&\tilde\nabla^{A}\hat{h}_{A\beta} +(n-2)r^{-1}r^{A}\hat{h}_{A\beta} -r^{2}\tilde\Box\left(r^{-2}\hat{h}_{\beta}\right)\\
&-nr^{-1}r^{A}\tilde\nabla_{A}\hat{h}_{\beta} +2(n-1)r^{-2}\hat{h}_{\beta}+2r^{-2}r^{A}r_{A}\hat{h}_{\beta}\\
&-2r^{-1}\left(\tilde\Box r\right)\hat{h}_{\beta} = 0.
\end{split}
\end{align}

We introduce the connection-level co-vector quantities
\begin{align}
P_{\alpha} &:= r^{3}\epsilon^{AB}\tilde\nabla_{B}\left(r^{-2}\hat{h}_{A\alpha}\right),\\
Q_{\alpha\beta A} &:= \mathring\nabla_{\beta}\hat{h}_{A\alpha} +\mathring\nabla_{\alpha}\hat{h}_{A\beta}- r^{2}\tilde\nabla_{A}\left(r^{-2}\left(\mathring\nabla_{\beta}\hat{h}_{\alpha}+\mathring\nabla_{\alpha}\hat{h}_{\beta}\right)\right),
\end{align}
which are moreover gauge-invariant.  The co-vector equations above can be rewritten in terms of these gauge-invariant quantities as

\begin{align}
-r^{-n}\epsilon_{AB}\tilde\nabla^{B}\left(r^{n-1}P_{\alpha}\right) -r^{-2}\mathring\nabla^{\beta}Q_{\alpha\beta A} &= 0,\label{one}\\
r^{2-n}\tilde\nabla^{A}\left(r^{n-2}Q_{\alpha\beta A}\right) &= 0\label{two}.
\end{align}

In addition, the two are related by
\begin{equation}\label{three}
\epsilon^{AB}\tilde\nabla_{B}\left(r^{-2}Q_{\alpha\beta A}\right) - r^{-3}\left(\mathring\nabla_{\alpha}P_{\beta} + \mathring\nabla_{\beta} P_{\alpha}\right)= 0.
\end{equation}

\subsection{Master Equations for the Co-Vector Portion}

\subsubsection{Decoupling of $P_{\alpha}$}
The decoupling of $P_{\alpha}$ proceeds as follows.  Applying the operator $\epsilon^{AB}\tilde\nabla_{B}$ to \eqref{one} and rewriting the result with \eqref{three}, we find
\begin{align*}
&\epsilon^{AB}\tilde\nabla_{B}\left(r^{-n}\epsilon_{AC}\tilde\nabla^{C}\left(r^{n-1}P_{\alpha}\right)\right) +\epsilon^{AB}\tilde\nabla_{B}\left(r^{-2}\mathring\nabla^{\beta}Q_{\alpha\beta A}\right) = 0,\\
&g^{BC}\left(\tilde\nabla_{B}\left(r^{-n}\tilde\nabla_{C}\left(r^{n-1}P_{\alpha}\right)\right)\right) + r^{-3}\mathring\Delta P_{\alpha} + (n-1)r^{-3}P_{\alpha} = 0.
\end{align*}

Expanding the first term and multiplying through by $r$, we have
\begin{align*}
&\tilde\Box P_{\alpha} + (n-2)r^{-1}r^{B}\tilde\nabla_{B}P_{\alpha} + r^{-2}\mathring\Delta P_{\alpha}\\  
&+\left((n-1)r^{-2}- 2r^{-2}(n-1)r^{B}r_{B} + (n-1)r^{-1}\left(\tilde\Box r\right)\right)P_{\alpha} = 0.
\end{align*}

Noting the formula for the spin-$1$ d'Alembertian \eqref{CovectorWave}, along with the background formulae \eqref{SchwarzFormulae}, the equation reduces to
\begin{equation}\label{RW2}
\slashed{\Box}_{\mathcal{L}(-1)} P_{\alpha} = W P_{\alpha},
\end{equation}
where
\begin{equation}
W := \frac{1}{r^2}-\frac{2Mn^2}{r^{n+1}}.
\end{equation}
We remark that $P_{\alpha}$ is the higher-dimensional analog of the Cunningham-Moncrief-Price quantity \cite{CMP} in four spacetime dimensions.  

\subsubsection{Decoupling of $Q_{\alpha\beta A}r^{A}$}
Multiplying \eqref{three} by $r^{n+2}$, and applying the operator $\epsilon_{AB}\tilde\nabla^{B}$ to the result, we find
\begin{align*}
&\epsilon_{AB}\tilde\nabla^{B}\left(r^{n+2}\epsilon^{CD}\tilde\nabla_{D}\left(r^{-2}Q_{\alpha\beta C}\right)\right)\\
&-\epsilon_{AB}\tilde\nabla^{B}\left(r^{n-1}\mathring\nabla_{\alpha}P_{\beta}\right) - \epsilon_{AB}\tilde\nabla^{B}\left(r^{n-1}\mathring\nabla_{\beta}P_{\alpha}\right) = 0,
\end{align*}
or, applying \eqref{one},
\begin{align*}
&r^{-n}\epsilon_{AB}\epsilon^{CD}\tilde\nabla^{B}\left(r^{n+2}\tilde\nabla_{D}\left(r^{-2}Q_{\alpha\beta C}\right)\right)\\
&+r^{-2}\mathring\nabla_{\alpha}\mathring\nabla^{\gamma}Q_{\beta\gamma A} + r^{-2}\mathring\nabla_{\beta}\mathring\nabla^{\gamma}Q_{\alpha\gamma A} = 0.
\end{align*}
The first term above can be expanded by appealing to the relation
\[ \epsilon_{AB}\epsilon^{CD}P^{B}_{DC} = P^{B}_{BA} - P^{B}_{AB},\]
valid for tensors on the two-dimensional quotient space.  Applying this result, and contracting the equation with $r^{A}$, we find
\begin{align*}
&\left(\tilde\Box Q_{\alpha\beta A}\right)r^{A} - 2r^{-1}r^{A}r^{B}\tilde\nabla_{B}Q_{\alpha\beta A} - r^{-1}\left(\tilde\Box r\right)Q_{\alpha\beta A}r^{A}\\
&+2r^{-1}(r^{B}r_{B})\tilde\nabla^{B}Q_{\alpha\beta B} - r^{A}\left(\tilde\nabla^{B}\tilde\nabla_{A} Q_{\alpha\beta B}\right)\\
&+r^{-2}\mathring\nabla_{\alpha}\mathring\nabla^{\gamma}\left(Q_{\beta\gamma A}r^{A}\right) + r^{-2}\mathring\nabla_{\beta}\mathring\nabla^{\gamma}\left(Q_{\alpha\gamma A}r^{A}\right) = 0.
\end{align*}

Commuting the covariant derivative, and applying \eqref{two}, we rewrite the term
\begin{align*}
&r^{A}\left(\tilde\nabla^{B}\tilde\nabla_{A}Q_{\alpha\beta B}\right)\\
&= (n-2)r^{-2}r^{B}r_{B}Q_{\alpha\beta A}r^{A} - (n-2)r^{-1}r^{A}\left(\tilde\nabla_{A}\tilde\nabla^{C} r\right)Q_{\alpha\beta C}\\
&-(n-2)r^{-1}r^{A}r^{C}\tilde\nabla_{C}Q_{\alpha\beta A} + \tilde{K}Q_{\alpha\beta A}r^{A}.
\end{align*}

With this, and application of \eqref{two} to the preceding divergence term, our equation takes the form
\begin{align*}
&\left(\tilde\Box Q_{\alpha\beta A}\right)r^{A} + (n-4)r^{-1}r^{A}r^{B}\tilde\nabla_{B}Q_{\alpha\beta B} - r^{-1}\left(\tilde\Box r\right)Q_{\alpha\beta A}r^{A}\\
&-2(n-2)r^{-2}r^{B}r_{B}Q_{\alpha\beta A}r^{A} - \tilde{K}Q_{\alpha\beta A}r^{A} - (n-2)r^{-2}r^{B}r_{B}Q_{\alpha\beta A}r^{A}\\
&+(n-2)r^{-1}r^{A}\left(\tilde\nabla_{A}\tilde\nabla^{B} r\right)Q_{\alpha\beta B}\\
&+r^{-2}\left(\mathring\nabla_{\alpha}\mathring\nabla^{\gamma}Q_{\beta\gamma A}r^{A}\right) + r^{-2}\left(\mathring\nabla_{\beta}\mathring\nabla^{\gamma}Q_{\alpha\gamma A}r^{A}\right) = 0.
\end{align*}

Comparing this expression with the spin-$2$ d'Alembertian \eqref{TwoTensorWave} applied to $Q_{\alpha\beta A}r^{A}$,
\begin{align*}
&\slashed{\Box}_{\mathcal{L}(-2)}\left(Q_{\alpha\beta A}r^{A}\right) = \left(\tilde\Box Q_{\alpha\beta A}\right)r^{A} + \left(\tilde\Box r^{A}\right)Q_{\alpha\beta A}\\
&+2\left(\tilde\nabla^{A}\tilde\nabla^{A} r\right)\tilde\nabla_{B}Q_{\alpha\beta A} + (n-4)r^{-1}r^{A}\left(\tilde\nabla_{A}\tilde\nabla^{B} r\right)Q_{\alpha\beta B}\\
&+(n-4)r^{-1}r^{A}r^{B}\tilde\nabla_{B}Q_{\alpha\beta B} + r^{-2}\mathring\Delta\left(Q_{\alpha\beta A}r^{A}\right)\\
&+(6-2n)r^{-2}r^{B}r_{B}Q_{\alpha\beta A}r^{A} - 2r^{-1}\left(\tilde\Box r\right)Q_{\alpha\beta A}r^{A},
\end{align*}
a lengthy reduction, using also the background calculations \eqref{SchwarzFormulae} and the commutation relation
\[ \mathring\nabla_{\alpha}\mathring\nabla^{\gamma}Q_{\beta\gamma A} + \mathring\nabla_{\beta}\mathring\nabla^{\gamma}Q_{\alpha \gamma A} - \mathring\Delta Q_{\alpha\beta A} = -2Q_{\alpha\beta A},\]
yields the equation
\begin{equation}\label{RW3}
\slashed\Box_{\mathcal{L}(-2)}\left(Q_{\alpha\beta A}r^{A}\right) = V^{(-)}\left(Q_{\alpha\beta A}r^{A}\right),
\end{equation}
with
\begin{equation}\label{RWPotential}
V^{(-)} := \frac{n+2}{r^2} - \frac{2Mn^2}{r^{n+1}}.
\end{equation}
The quantity and associated equation are analogous to those first discovered in Kodama-Ishibashi-Seto \cite{KI1}, although their derivation is quite different in that work.

Subsequently, we employ the shorthand
\begin{equation}\label{QMinusDef}
Q^{(-)}_{\alpha\beta} := Q_{\alpha\beta A}r^{A}.
\end{equation}

\subsubsection{Spin-Raising of $P_{\alpha}$}

We denote by $\mathcal{D}$ the symmetrized gradient operation, and consider the quantity
\begin{equation}\label{SDef}
S_{\alpha\beta} := r\left(\mathcal{D}P\right)_{\alpha\beta} := r\left(\mathring\nabla_{\alpha}P_{\beta} + \mathring\nabla_{\beta}P_{\alpha}\right).
\end{equation}

Expanding with definition \eqref{TwoTensorWave}, we find
\begin{align*}
\slashed\Box_{\mathcal{L}(-2)}S_{\alpha\beta} &= \tilde\Box S_{\alpha\beta} + (n-4)r^{-1}r^{A}\tilde\nabla_{A}S_{\alpha\beta} \\
&+r^{-2}\mathring\Delta S_{\alpha\beta} + (6-2n)r^{-2}r^{A}r_{A}S_{\alpha\beta} -2r^{-1}\left(\tilde\Box r\right)S_{\alpha\beta}\\
&= r\left(\mathcal{D}\slashed\Box_{\mathcal{L}(-1)}P\right)_{\alpha\beta} + r^{-2}(n+1)S_{\alpha\beta}\\
&= \left(W + (n+1)r^{-2}\right)S_{\alpha\beta},\end{align*}
where we have used
\[ \mathring\Delta \mathcal{D} P = \mathcal{D}\mathring\Delta P + (n+1)\mathcal{D} P,\]
applied the definition \eqref{CovectorWave}, and used the wave equation \eqref{RW2}.  That is, the spin-raised quantity $S_{\alpha\beta}$ satisfies
\begin{equation}\label{spinraised}
\slashed\Box_{\mathcal{L}(-2)} S_{\alpha\beta} = V^{(-)} S_{\alpha\beta},
\end{equation}
with $V^{(-)}$ defined by \eqref{RWPotential}.

\subsection{Analysis of the Master Equation}
We analyze solutions $\Psi$ of the Regge-Wheeler type equation \eqref{RW3}, including both $S_{\alpha\beta}$ and $Q^{(-)}_{\alpha\beta}.$

\subsubsection{The $T$-Energy Comparison}
The $T$-energy has the form
\begin{align*}
E_{\Psi}^T(\Sigma_{\tau})&=\frac{1}{2}\int_{S^n}\int_{r_{h}}^\infty \left[(1-\mu)|\slashed{\nabla}_{R}\Psi|^2+V^{(-)}|\Psi|^2\right]R^{n}drdVol_{S^n}\\
&+\frac{1}{2}\int_{S^n}\int_{r_{h}}^\infty \left[\cosh^{-2} x(1-\mu)^{-1}|\nablas_{\tau}\Phi|^2+R^{-2}|\mathring{\nablas}\Phi|^2\right]  R^{n}dr dVol_{S^n}.
\end{align*}

Performing the change of variables $s = R^{n-1}$, we evaluate the quantity
\[ \int_{r_{h}}^{\infty} \left[s(s-2M)\left(\frac{df}{ds}\right)^2 + \left(\frac{n+2}{(n-1)^2}-\frac{2Mn^2}{(n-1)^2s}\right)f^2\right]ds.\]
Choosing $F = \frac{n}{n-1}$ and $E = 2F - 1 = \frac{n+1}{n-1}$, and noting $\frac{n+2}{(n-1)^2} \geq \frac{1}{4}(E^2-1),$  Lemma \ref{HardyTEstimate} implies positivity of the expression above.

Choosing $\epsilon(n) > 0$ small, we find the lower bound
\begin{align*}
E_{\Psi}^T(\Sigma_{\tau})&\geq \frac{1}{2}\int_{S^n}\int_{r_{h}}^\infty \epsilon(n)\left[(1-\mu)|\slashed{\nabla}_{R}\Psi|^2+V^{(-)}|\Psi|^2\right]R^{n}drdVol_{S^n}\\
&+\frac{1}{2}\int_{S^n}\int_{r_{h}}^\infty \left[\cosh^{-2} x(1-\mu)^{-1}|\nablas_{\tau}\Phi|^2+R^{-2}|\mathring{\nablas}\Phi|^2\right]  R^{n}dr dVol_{S^n}\\
&\gtrsim \check{E}^{T}_{\Psi}(\Sigma_{\tau}),
\end{align*}
with the reversed comparison in \eqref{Tcomparison} following trivially.  We conclude that $\Psi$ satisfies the uniform boundedness estimate of Theorem \ref{uniformBoundedness} in all spacetime dimensions.

\subsubsection{The Morawetz Estimate}\label{covectorMorawetz}

Borrowing from the angular term via the Poincar\'{e} inequality \eqref{PoincarePsi}, with least eigenvalue $\lambda = n$, and applying \eqref{Xweightedweighted}, we find
\[\textup{div} \left( J^{X,\omega^X}[\Psi]+ \frac{f'}{1-\mu}\beta |\Psi|^2\partial_{r_*}\right)\geq \frac{f'}{1-\mu}|\slashed{\nabla}_{r_*}\Psi+\beta\Psi|^2+W|\Psi|^2,\]
where
\begin{align*}
W(f,\beta)&=\left(-\frac{1}{4}\Box\omega^X-\frac{1}{2}V^{(-)'} f+\frac{1}{2}\mu_r V^{(-)} f  \right)\\
 &+\left( \frac{f''\beta}{1-\mu}+f'\left( \frac{\beta'}{1-\mu}+\frac{n\beta}{r}-\frac{\beta^2}{1-\mu} \right) \right)\\
 &+\frac{nf}{r^3}\left( 1-\frac{(n+1)M}{r^{n-1}} \right).
\end{align*}
For $n=3$, choosing
\begin{align*}
f&=\left( 1-\frac{(n+1)M}{r^{n-1}} \right)\left( 1-\frac{M}{r^{n-1}}+\frac{4M^2}{5r^{2n-2}} \right)\\
 &=\left( 1-\frac{4M}{r^{2}} \right)\left( 1-\frac{M}{r^{2}}+\frac{4M^2}{5r^{4}} \right),\\
\beta&=\frac{1}{2},
\end{align*}
gives $f'\geq 0$ and $W(f,\beta)>0$. For $n = 4$, choices of
\begin{align*}
f&=\left( 1-\frac{(n+1)M}{r^{n-1}} \right)\left( 1-\frac{2M}{r^{n-1}}+\frac{8M^2}{5r^{2n-2}} \right)\\
 &=\left( 1-\frac{5M}{r^{3}} \right)\left( 1-\frac{2M}{r^{3}}+\frac{8M^2}{5r^{6}} \right),\\
\beta&=1+\frac{M}{r^{3}},
\end{align*}
do the same.  Furthermore, $W\approx r^{-3}$ near infinity. 

Let $J[\Psi]=J^{X,\omega}[\Psi]+\frac{f'}{1-\mu}\beta |\Psi|^2\frac{\partial}{\partial r_*}$. For any $\epsilon_0>0$ we have
\begin{align*}
\textup{div}J[\Psi] &=\frac{f'}{1-\mu}|\slashed{\nabla}_{r_*}\Psi+\beta \Psi|^2+\frac{f}{r^3}\left(1-\frac{(n+1)M}{r^{n-1}}\right)|\mathring{\slashed{\nabla}}\Psi|^2+\mathring{W}|\Psi|^2\\
             &=\frac{\epsilon_0 f'}{1-\mu}|\slashed{\nabla}_{r_*}\Psi|^2+\frac{f'}{1-\mu}|(1-\epsilon_0)^{1/2}\slashed{\nabla}_{r_*}\Psi+(1-\epsilon_0)^{-1/2}\beta \Psi|^2\\
             &+\frac{f}{r^3}\left(1-\frac{(n+1)M}{r^{n-1}}\right)|\mathring{\slashed{\nabla}}\Psi|^2+\mathring{W}|\Psi|^2-\frac{\epsilon_0}{1-\epsilon_0}\frac{\beta^2f'}{1-\mu}|\Psi|^2\\
             &\geq \frac{\epsilon_0 f'}{1-\mu}|\slashed{\nabla}_{r_*}\Psi|^2+\frac{f'}{1-\mu}|(1-\epsilon_0)^{1/2}\slashed{\nabla}_{r_*}\Psi+(1-\epsilon_0)^{-1/2}\beta \Psi|^2\\
             &+\frac{1}{2}\left( \frac{f}{r^3}\left(1-\frac{(n+1)M}{r^{n-1}}\right)|\mathring{\slashed{\nabla}}\Psi|^2+\mathring{W}|\Psi|^2 \right)\\
             &+\left(\frac{1}{2} W-\frac{\epsilon_0}{1-\epsilon_0}\frac{\beta^2f'}{1-\mu} \right)|\Psi|^2.
\end{align*}
For $r \geq r_0$, $W \gtrsim r^{-3}$  and $\frac{\beta f'}{1-\mu} \approx r^{-n}$; by choosing $\epsilon_0$ small enough, the last term is positive in this region. To control the angular derivative, notice that
\begin{align*}
&\frac{f}{r^3}\left(1-\frac{(n+1)M}{r^{n-1}}\right)|\mathring{\slashed{\nabla}}\Psi|^2+\mathring{W}|\Psi|^2\\
&\geq \epsilon_1\frac{f}{r^3}\left(1-\frac{(n+1)M}{r^{n-1}}\right)|\mathring{\slashed{\nabla}}\Psi|^2+W|\Psi|^2-\epsilon_1\frac{nf}{r^3}\left(1-\frac{(n+1)M}{r^{n-1}}\right)|\Psi|^2.
\end{align*}
Again, $W \gtrsim r^{-3}$ for $r\geq r_0/2$ and 	$\frac{nf}{r^3}\left(1-\frac{(n+1)M}{r^{n-1}}\right) \approx r^{-3}$.  By choosing $\epsilon_1$ small, we have as $r\geq r_0/2$
\begin{align*}
&\frac{f}{r^3}\left(1-\frac{(n+1)M}{r^{n-1}}\right)|\mathring{\slashed{\nabla}}\Psi|^2+\mathring{W}|\Psi|^2\\
&\geq \epsilon_1\frac{f}{r^3}\left(1-\frac{(n+1)M}{r^{n-1}}\right)|\mathring{\slashed{\nabla}}\Psi|^2+\frac{1}{2} W|\Psi|^2.
\end{align*}
Hence for $r\geq r_0$,
\begin{align*}
\textup{div}J[\Psi]\geq \frac{\epsilon_0 f'}{1-\mu}|\slashed{\nabla}_{r_*}\Psi|^2+\frac{\epsilon_1 f}{2r^3}\left(1-\frac{(n+1)M}{r^{n-1}}\right)|\mathring{\slashed{\nabla}}\Psi|^2+\frac{1}{4}W |\Psi|^2.
\end{align*}
To obtain a $|\slashed{\nabla}_{t}\Psi|^2$ term, let $\chi$ be a cut-off function with $\chi=0$ as $r\leq r_0$ and $\chi=1$ as $r\geq (r_0+r_1)/2$. Define
\[g(r):=\epsilon_2\frac{\chi}{r^n}\left(1-\frac{(n+1)M}{r^{n-1}} \right)^2,\]
where $\epsilon_2>0$ is a constant to be determined. Denoting 
\[\tilde{J}_{a}[\Psi] :=J_{a}[\Psi]+\frac{1}{4}(\nabla_{a} g)|\Psi|^2-\frac{1}{4}g\nabla_{a} |\Psi|^2,\]
we calculate
\begin{align*}
\textup{div}\tilde{J}[\Psi]&=\textup{div} J[\Psi]+\frac{1}{4}\Box g |\Psi|^2-\frac{1}{2} g (\nablas \Psi)^2-\frac{1}{2} g V^{(-)} |\Psi|^2\\
                     &=\textup{div} J[\Psi]+\frac{1}{2}g\left( \frac{1}{1-\mu}|\slashed{\nabla}_{t}\Psi|^2-\frac{1}{1-\mu}|\slashed{\nabla}_{r_*}\Psi|^2-\frac{1}{r^2}|\mathring{\slashed{\nabla}}\Psi|^2 \right)\\
                     &+\left( \frac{1}{4}\Box g-\frac{1}{2}gV^{(-)} \right)|\Psi|^2.
\end{align*}
Together with previous estimate, for $r\geq r_0$
\begin{align*}
\textup{div}\tilde{J}[\Psi] &\geq \frac{g}{2(1-\mu)}|\slashed{\nabla}_{t}\Psi|^2+ \left(\frac{\epsilon_0 f'}{1-\mu}-\frac{g}{2(1-\mu)}\right)|\slashed{\nabla}_{r_*}\Psi|^2\\
                        &+\left(\frac{\epsilon_1 f}{r^3}\left(1-\frac{(n+1)M}{r^{n-1}}\right)-\frac{g}{2r^2}   \right) |\mathring{\slashed{\nabla}}\Psi|^2\\
                        &+\left(\frac{1}{4}W +\frac{1}{4}\Box g-\frac{1}{2}gV^{(-)} \right)|\Psi|^2.
\end{align*}
For large radii, $f' \approx g \approx r^{-n}$. By choosing $\epsilon_2$ small, the coefficient of $|\slashed{\nabla}_{r_*}\Psi|^2$ is greater than $\frac{\epsilon_0 f'}{2(1-\mu)}$. Both $\frac{\epsilon_1 f}{r^2}\left(1-\frac{(n+1)M}{r^{n-1}}\right)$ and $g$ vanish quadratically at the photon sphere, and the former, comparable to $r^{-3}$, decays no slower than the latter, comparable to $r^{-n}$. Hence the coefficient of $|\mathring{\slashed{\nabla}}\Psi|^2$ can also be made positive with $\epsilon_2$ small.  The term $\frac{1}{4}\Box g-\frac{1}{2}gV^{(-)}$ behaves like $r^{-n-2}$ toward infinity; yet again, positivity of the coefficient for $|\Psi|^2$ is ensured by a choice of small $\epsilon_2$.  With such a choice, we have for $r\geq r_0$
\begin{align*}
\textup{div}\tilde{J}[\Psi]\gtrsim\frac{1}{r^n}\left( \left(1-\frac{(n+1)M}{r^{n-1}}\right)^2\left(|\slashed{\nabla}_{t}\Psi|^2+|\mathring{\slashed{\nabla}}\Psi|^2 \right)+|\slashed{\nabla}_{r}\Psi|^2 \right).
\end{align*}
In $r\leq r_0$, $\textup{div}\tilde{J}[\Psi]=\textup{div}{J}[\Psi]\geq 0$. By adding $\epsilon_3 \check{T}_{ab}[\Psi]Y^b$ into $\tilde{J}_{a}[\Psi]$, we obtain a Morawetz current
\begin{equation}
\begin{split}
J^{\mathfrak{M}}_a[\Psi]&:=\tilde{J}_a[\Psi]+\epsilon_3\check{T}_{ab}[\Psi]Y^b\\
     &=T_{ab}[\Psi]X^b-\frac{1}{2}V^{(-)}|\Psi|^2X_a+\frac{1}{4}(\omega^X-g)\nabla_a |\Psi|^2\\
     &-\frac{1}{4}\nabla_a (\omega^X-g)|\Psi|^2 +f'\beta|\Psi|^2dR_{*}+\epsilon_3\check{T}_{ab}[\Psi]Y^b.
\end{split}
\end{equation}
By choosing $\epsilon_3$ small enough, $K^{\mathfrak{M}}[\Psi] := \Div J^{\mathfrak{M}}[\Psi]$ has the desired positivity. To estimate $J^{\mathfrak{M}}_a[\Psi] n^a_{\Sigma_\tau}$, note that $X$ and $Y$ are regular vector fields, such that
\[|(T_{ab}[\Psi]X^b-\frac{1}{2}V^{(-)}|\Psi|^2X_a+\epsilon_3 \check{T}_{ab}[\Psi]Y^b)n^a_{\Sigma_\tau}|\lesssim T_{ab}[\Psi]n^a_{\Sigma_\tau}N^b.\]
From $\omega^X-g\approx r^{-1}$ for large radii, the third and the fourth terms are controlled. The estimate for the fifth term follows from regularity of $\frac{f'}{1-\mu}=\frac{\partial f}{\partial r}$ at the horizon in addition to its comparability with $r^{-n}$ for large radii. The estimate for $J^{\mathfrak{M}}_a[\Psi] \underline{L}^a$ is similar.  For $J^{\mathfrak{M}}_a[\Psi] n^a_{\mathcal{H}^+}$, we note that on the horizon $\frac{\partial}{\partial r_*}$ and $n^a_{\mathcal{H}^+}$ are proportional to $T$ and that $\omega^X-g=0$.  Except for $\epsilon_3 \check{T}_{ab}[\Psi]Y^b n^a_{\mathcal{H}^+}$, satisfying $\check{T}_{ab}[\Psi]Y^b n^a_{\mathcal{H}^+}\geq 0$, the terms in $J^{\mathfrak{M}}_a[\Psi] n^a_{\mathcal{H}^+}$ are bounded by $\check{T}_{ab}[\Psi]T^b n^a_{\mathcal{H}^+}$, from which Lemma \ref{Morawetz} follows.

We remark that the construction of a current $\tilde{J}_{a}[\Psi]$ with positivity in $r\leq r_0$, an essential feature in the construction above, is impossible for $n \geq 5$, i.e. in spacetime dimension seven and above.  In this regime, a more refined analysis is needed to form a Morawetz current satisfying the necessary density estimate.  This difficulty is the reason for our restriction to dimensions six and below in our decay estimates.

\subsection{Uniform Boundedness and Decay of the Master Quantities}

With the $T$-comparison, we have uniform boundedness of solutions to the Regge-Wheeler equation \eqref{RW2} in all spacetime dimensions; in addition, the Morawetz estimate for $n \leq 4$ gives uniform decay of solutions in six and fewer spacetime dimensions.

\begin{theorem}
Let $\delta g$ be a smooth, symmetric two-tensor on a Schwarzschild-Tangherlini spacetime, satisfying the linearized Einstein equation \eqref{linEinstein}.  There exists a gauge-invariant master quantities $Q^{(-)}_{\alpha\beta}$ and $S_{\alpha\beta}$ in the co-vector portion $h_2$ of $\delta g$ satisfying the Regge-Wheeler type equation \eqref{RW2}.  As solutions of \eqref{RW2}, $Q^{(-)}_{\alpha\beta}$ and $S_{\alpha\beta}$ satisfy the uniform boundedness estimate
\begin{align}
\begin{split}
\check{E}^{N}_{Q^{(-)}}(\Sigma_\tau) \lesssim \check{E}^{N}_{Q^{(-)}}(\Sigma_0),\\
\check{E}^{N}_{S}(\Sigma_\tau) \lesssim \check{E}^{N}_{S}(\Sigma_0),
\end{split}
\end{align}
in all spacetime dimensions.  In six and fewer spacetime dimensions, $Q^{(-)}_{\alpha\beta}$ and $S_{\alpha\beta}$ satisfy the uniform decay estimate
\begin{align}
\begin{split}
\check{E}^{N}_{Q^{(-)}}(\Sigma_{\tau}) \lesssim \frac{I_{Q^{(-)}}(\Sigma_0)}{\tau^2},\\
\check{E}^{N}_{S}(\Sigma_{\tau}) \lesssim \frac{I_{S}(\Sigma_0)}{\tau^2},\\
\end{split}
\end{align}
where 
\begin{equation} 
I_{Q^{(-)}}(\Sigma_0):=E^2_{Q^{(-)}, \mathcal{L}_{K}Q^{(-)}}(\Sigma_0)+E^N_{Q^{(-)}, \mathcal{L}_{K}Q^{(-)}, \mathcal{L}^2_{K}Q^{(-)}}(\Sigma_0),
\end{equation}
and similarly for $S$.  Here we assume $\tau \geq 0$.
\end{theorem}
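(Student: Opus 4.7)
The plan is to assemble the conclusions already established in this section within the general framework of Section 7. First, I would recall that the gauge-invariant quantities $P_\alpha$ and $Q_{\alpha\beta A}$ were constructed directly from $\hat{h}_{A\alpha}$ and $\hat{h}_\alpha$, and that the derived quantities $Q^{(-)}_{\alpha\beta} = Q_{\alpha\beta A} r^A$ and $S_{\alpha\beta} = r(\mathring\nabla_\alpha P_\beta + \mathring\nabla_\beta P_\alpha)$ satisfy the Regge-Wheeler type equations \eqref{RW3} and \eqref{spinraised}, both with the common potential $V^{(-)} = (n+2)/r^2 - 2Mn^2/r^{n+1}$. Both quantities are sections of $\mathcal{L}(-2)$, so the abstract setup of Definition \ref{RWdefinition} applies, and it suffices to verify the two structural hypotheses of the general theory: the $T$-energy comparison \eqref{Tcomparison} and the Morawetz estimate of Lemma \ref{Morawetz}.

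For the boundedness half, I would invoke the $T$-energy computation already carried out in this section: after the change of variable $s = R^{n-1}$ the radial part of $E^T_\Psi$ matches the form of the Hardy estimate of Lemma \ref{HardyTEstimate} with parameters $F = n/(n-1)$ and $E = (n+1)/(n-1)$, which satisfy both $|2F-E|\le 1$ and $E > 1$ for every $n \ge 2$, and for which the residual constant $(n+2)/(n-1)^2 \ge \tfrac14(E^2-1)$ leaves room to absorb a small fraction of the radial energy. Following the strategy of Section \ref{TStrategy}, this yields $\check{E}^T_\Psi(\Sigma_\tau) \approx E^T_\Psi(\Sigma_\tau)$ in all dimensions, and Theorem \ref{uniformBoundedness} then immediately produces the claimed bound $\check{E}^N_\Psi(\Sigma_\tau) \lesssim \check{E}^N_\Psi(\Sigma_0)$ for $\Psi \in \{Q^{(-)}, S\}$.

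For the decay half in dimensions six and below, I would verify the Morawetz estimate of Lemma \ref{Morawetz} using the explicit choices of $f$, $\beta$ in subsection \ref{covectorMorawetz} when $n=3$ or $n=4$: the current $J^{\mathfrak{M}}_a[\Psi] = \tilde{J}_a[\Psi] + \epsilon_3 \check{T}_{ab}[\Psi] Y^b$ has nonnegative density with the required coercivity in the trapping region, while the boundary dominations follow from regularity of $X, Y$ and the radial behavior of $\omega^X - g$ and $f'/(1-\mu)$ at the horizon and near infinity. Proposition \ref{rpDensity} applies since $V^{(-)}$ has the required asymptotic form with $a_0 = n+2 > 0$, allowing us to choose $k$ and $R_1$ uniformly. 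The uniform decay theorem at the end of Section 7 then delivers $\check{E}^N_\Psi(\Sigma_\tau) \lesssim I_\Psi(\Sigma_0)/\tau^2$ upon commuting with $T$ and the angular Killing fields.

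The main obstacle, as noted after the construction of $J^{\mathfrak{M}}_a[\Psi]$, is the construction of a Morawetz current with the required positivity near the horizon when $n \ge 5$: the algebraic balance between the potential $V^{(-)}$, its radial derivative, and the terms produced by $\omega^X$ and $\beta$ no longer admits the explicit closed-form choice available for $n=3,4$. This is precisely why the decay statement is restricted to spacetime dimensions six and below, and why no analogous obstruction appears in the uniform boundedness statement, which relies only on the Hardy estimate valid in all dimensions.
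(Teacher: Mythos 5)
Your proposal is correct and follows essentially the same route as the paper: the $T$-energy comparison via Lemma \ref{HardyTEstimate} with $F = n/(n-1)$, $E = (n+1)/(n-1)$ (so $2F-E=1$ and $\frac{n+2}{(n-1)^2}\geq\frac{1}{4}(E^2-1)$), the explicit Morawetz multipliers of subsection \ref{covectorMorawetz} for $n=3,4$, and the general machinery of Section 7 to conclude boundedness in all dimensions and decay for $n\leq 4$. You also correctly identify that $Q^{(-)}$ and $S$ satisfy \eqref{RW3} and \eqref{spinraised} with the common potential $V^{(-)}$ (the theorem statement's reference to \eqref{RW2} is a slip), and your account of the $n\geq 5$ obstruction matches the paper's remark.
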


We emphasize that the relevant constants in the comparisons depend only upon the orbit sphere dimension $n$ and the mass $M > 0$.


\section{The Scalar Portion}

\subsection{The Linearized Einstein Equations}

In this subsection we reduce the linearized vacuum Einstein equations for the scalar solution, modifying the approach of Kodama-Ishibashi \cite{KI2}.  

The authors consider the linearized Einstein tensor
\begin{equation}
\delta E_{ab} dx^{a}dx^{b} = \delta E_{AB}dx^{A}dx^{B} + 2\delta E_{A\alpha} dx^{A}dx^{\alpha} + \delta E_{\alpha\beta} dx^{\alpha}dx^{\beta},
\end{equation}
admitting the same Hodge decomposition as the linearized Ricci tensor above.  Defining 
\begin{align}
e_{A} &:= H_{A} - \frac{1}{2}r^2\tilde\nabla_{A}\left(r^{-2}H_2\right),\\
\tilde{h}_{AB} &:= h_{AB} - \tilde\nabla_{A}e_{B} - \tilde\nabla_{B}e_{A},\\
\tilde{H} &:= \frac{1}{2r^2}\left(H-\frac{1}{n}\mathring\Delta H_2 - 2rr^{A}e_{A}\right),
\end{align}
the linearized vacuum Einstein equations for the scalar portion can be expressed in terms of the gauge-invariant quantities $\tilde{h}_{AB}$ and $\tilde{H}$.  Following \cite{KI2}, we further rescale the quantities as
\begin{align}
\begin{split}
\delta\check{E}_{ab} &:= r^{n-2}\delta E_{ab},\\
\check{h}_{AB} &:= r^{n-2}\tilde{h}_{AB},\\
\check{H} &:= r^{n-2}\tilde{H}.
\end{split}
\end{align}
As noted in the same work, owing to the Bianchi identities, it suffices to consider the rescaled linearized Einstein tensor components
\[ \delta\check{E}_{A\alpha},\ \delta\check{E}_{\alpha\beta} - \frac{1}{n}\left(\mathring\sigma^{\gamma\delta}\delta\check{E}_{\gamma\delta}\right)\mathring\sigma_{\alpha\beta},\ \delta\check{E}_{tr},\ \delta\check{E}_{rr}.\]

With the aim of rewriting these remaining linearized equations, we define the further gauge-invariants
\begin{align}
\begin{split}
X &= \check{h}_{t}^{t} - 2\check{H},\\
Y &= \check{h}_{r}^{r} - 2\check{H},\\
Z & = \check{h}_{t}^{r}.
\end{split}
\end{align}

Using the algebraic relation provided by the traceless equation
\[ \delta\check{E}_{\alpha\beta} - \frac{1}{n}\left(\mathring\sigma^{\gamma\delta}\delta\check{E}_{\gamma\delta}\right)\mathring\sigma_{\alpha\beta} = 0,\]
we can invert these relations and find
\begin{align}\label{inversion}
\begin{split}
\check{h}_{t}^{t} &= \frac{(n-1)X - Y}{n},\\
\check{h}_{r}^{r} &= \frac{-X + (n-1)Y}{n},\\
\check{h}_{t}^{r} &= Z,\\
\check{H} &= -\frac{X+Y}{2n}.
\end{split}
\end{align}

We rewrite the linearized Einstein equations using scalar spherical harmonic expansion with indices $\ell \geq 2$ and $m_s(n,\ell) \in \{1,\hdots ,d_s(n,\ell)\}$, with the dimension of the eigenspaces $d_s(n,\ell)$ given by \eqref{scalarEigenspace}, and the inversion \eqref{inversion}.  First, the cross-term equations $\delta\check{E}_{A\alpha} = 0$ imply
\begin{align}\label{crossEinstein}
\begin{split}
\partial_{r}Z_{\ell m_s(n,\ell)} &= -\partial_{t}X_{\ell m_s(n,\ell)},\\
\partial_{r}Y_{\ell m_s(n,\ell)} &= \frac{\partial_{r}f}{2f}(X_{\ell m_s(n,\ell)} - Y_{\ell m_s(n,\ell)}) + \frac{1}{f^2}\partial_{t}Z_{\ell m_s(n,\ell)},
\end{split}
\end{align}
where we adopt the notation $f := 1-\mu$ of \cite{KI2}.

Substituting \eqref{crossEinstein}, the quotient equation $\delta\check{E}^{r}_{r} = 0$ can be rewritten as
\begin{align}\label{quotEinsteinOne}
\begin{split}
&\partial_{r}X_{\ell m_s(n,\ell)} = \frac{2}{\partial_{r}f}\Bigg(\frac{n-1}{r^2}(f-1)+\frac{(n+2)\partial_{r}f}{2r}+\frac{\partial_{r}^2f}{n}\\
&-\left(\frac{\partial_{r}f}{2}+\frac{nf}{r}\right)\frac{\partial_{r}f}{2f}\Bigg)X_{\ell m_s(n,\ell)} +\frac{2}{\partial_{r}f}\Bigg(\frac{1-f}{r^2} - \frac{(3n-2)}{2r}\partial_{r}f \\
&- \frac{(n-1)}{n}\partial_{r}^2f + \frac{\ell(\ell+n-1)-n}{r^2}+\left(\frac{\partial_{r}f}{2}+\frac{nf}{r}\right)\frac{\partial_{r}f}{2f}\Bigg)Y_{\ell m_s(n,\ell)}\\
&+\frac{2}{\partial_{r}f}\left(\frac{2n}{rf} - \frac{1}{f^2}\left(\frac{\partial_{r}f}{2}+\frac{nf}{r}\right)\right)\partial_{t}Z_{\ell m_s(n,\ell)}\\
&+ \frac{2}{f\partial_{r}f}\left(\partial_{t}^2X_{\ell m_s(n,\ell)} + \partial_{t}^2Y_{\ell m_s(n,\ell)}\right).
\end{split}
\end{align}

Finally, the quotient equation $\delta\check{E}^{r}_{t} = 0$ has the form
\begin{align}\label{quotEinsteinTwo}
\begin{split}
&\left(\frac{\ell(\ell+n-1)}{r^2} - \partial_{r}^2f - \frac{n\partial_{r}f}{r}\right)Z_{\ell m_s(n,\ell)}\\
& + f\left(\partial_{t}\partial_{r}X_{\ell m_s(n,\ell)} + \partial_{t}\partial_{r}Y_{\ell m_s(n,\ell)}\right)\\
&-\left(\frac{(n-2)}{r}f + \frac{\partial_{r}f}{2}\right)\partial_{t}X_{\ell m_s(n,\ell)} + \left(\frac{2f}{r} - \frac{\partial_{r}f}{2}\right)\partial_{t}Y_{\ell m_s(n,\ell)} = 0.
\end{split}
\end{align}

\subsection{Master Equation for the Scalar Portion}

Let us define the further gauge-invariant quantity
\begin{align}
\begin{split}
\tilde{Z}_{\ell m_s(n,\ell)} &:=  \frac{1}{\left(\frac{\ell(\ell+n-1)}{r^2} - \partial_{r}^2f - \frac{n\partial_{r}f}{r}\right)}\Bigg(f\left(\partial_{r}X_{\ell m_s(n,\ell)} + \partial_{r}Y_{\ell m_s(n,\ell)}\right)\\
&-\left(\frac{(n-2)}{r}f + \frac{\partial_{r}f}{2}\right)X_{\ell m_s(n,\ell)} + \left(\frac{2f}{r} - \frac{\partial_{r}f}{2}\right)Y_{\ell m_s(n,\ell)}\Bigg),
\end{split}
\end{align}
in addition to
\begin{equation}
\Phi_{\ell m_s(n,\ell)} = \frac{n\tilde{Z}_{\ell m_s(n,\ell)} - r(X_{\ell m_s(n,\ell)} + Y_{\ell m_s(n,\ell)})}{r^{n/2-1}(\ell(\ell+n-1)-n+\frac{1}{2}n(n+1)\mu)},
\end{equation}
slightly modifying those definitions provided in \cite{KI2}.  Substituting \eqref{quotEinsteinOne} and \eqref{crossEinstein} into $\tilde{Z}_{\ell m_s(n,\ell)}$, we regard $\Phi_{\ell m_s(n,\ell)}$ as an expression in $X_{\ell m_s(n,\ell)},$ $Y_{\ell m_s(n,\ell)},$ $\partial_{t}Z_{\ell m_s(n,\ell)},$ $\partial_{t}^2 X_{\ell m_s(n,\ell)},$ $\partial_{t}^2 Y_{\ell m_s(n,\ell)}$.

As described below, $\Phi_{\ell m_s(n,\ell)}$ satisfies the following scalar wave equation:
\begin{equation}
\tilde\Box \Phi_{\ell m_s(n,\ell)} = \tilde{V}^{(+)}_{\ell} \Phi_{\ell m_s(n,\ell)},
\end{equation}
where
\begin{equation}
\tilde{V}^{(+)}_{\ell} := \frac{Q_{\ell}}{16r^2\left(q+n(n+1)\mu/2\right)^2},
\end{equation}
with
\[ q := \ell(\ell + n - 1) - n\]
and
\begin{align}
\begin{split}
Q_{\ell} &:= n^4(n+1)^2\mu^3 + n(n+1)\Big[4(2n^2-3n+4)q\\
&+n(n-2)(n-4)(n+1)\Big]\mu^2-12n\Big[(n-4)q\\
&+n(n+1)(n-2)\Big]q\mu +16q^3+4n(n+2)q^2.
\end{split}
\end{align}
Note that the equation reduces to the well-known Zerilli equation in the case $n = 2$.  

Suppressing the angular harmonic dependence where possible, the equation in standard Schwarzschild coordinates is tantamount to
\[ \partial_{r}\left(f\partial_{r}\Phi_{\ell}\right) = \tilde{V}^{(+)}_{\ell}\Phi_{\ell} + \frac{1}{f}\partial_{t}^2\Phi_{\ell}.\]
Again, we regard $\Phi_{\ell}$ as an expression in $X, Y, \partial_{t}Z, \partial_{t}^2 X, \partial_{t}^2 Y$, and apply the linearized Einstein equations (\ref{crossEinstein},\ \ref{quotEinsteinOne},\ \ref{quotEinsteinTwo}) to the left-hand side of the equation above as follows.  Differentiating in $r$, we replace radial derivatives on $X, Y, \partial_{t}Z$ using \eqref{quotEinsteinOne} and \eqref{crossEinstein}, respectively.  Differentiation of the terms $\partial_{t}^2X$ and $\partial_{t}^2Y$ results in mixed partials, with the same coefficient on each quantity; we replace these terms using \eqref{quotEinsteinTwo}.  After this first differentiation and substitution, we are again left with an expression in $X, Y, \partial_{t}Z, \partial_{t}^2 X, \partial_{t}^2 Y$.  Multiplying by $f$ and differentiating the product in $r$, we again replace radial derivatives on $X, Y, \partial_{t}Z$ using \eqref{quotEinsteinOne} and \eqref{crossEinstein}.  The situation for the mixed partial terms is more subtle.  Regarding the mixed term obtained via radial derivatives on $\partial_{t}^2X$, we first substitute \eqref{quotEinsteinOne} for an appropriate portion in order to match the $\partial_{t}^4X + \partial_{t}^4Y$ term in the right-hand side of the equation above.  For the remainder of the term, we add and subtract appropriate radial derivatives on $\partial_{t}^2Y$ and apply \eqref{quotEinsteinTwo} to the matching mixed partial terms in $X$ and $Y$.  Finally, the residual radial derivatives on $\partial_{t}^2Y$ are handled via substitution with \eqref{crossEinstein}.  The resulting quantity, an expression in $X, Y, \partial_{t}Z, \partial_{t}^2 X, \partial_{t}^2 Y$ and their second time derivatives, is equal to the right-hand side of the equation above, as can be verified by direct calculation.

\subsection{The Spin-Raised Equation}

We spin-raise the equation by associating $\Phi_{\ell m_s(n,\ell)}$ with a symmetric traceless two-tensor $Q^{(+)}_{\ell m_s(n,\ell), \alpha\beta}$, specified by
\begin{equation}\label{QPlusDef}
Q^{(+)}_{\ell m_s(n,\ell), \alpha\beta} := \left(r^{k}\Phi_{\ell m_s(n,\ell)}\right)Y_{\alpha\beta}^{\ell m_s(n,\ell)}dx^{\alpha}dx^{\beta},
\end{equation}
where 
\[k := \frac{4-n}{2}\]
and $Y_{\alpha\beta}^{\ell m_s(n,\ell)}$ are the tensor spherical harmonics \eqref{scalarTensorHarmonics}.

We calculate
\begin{align*}
\slashed\Box_{\mathcal{L}(-2)} Q^{(+)}_{\ell m_s(n,\ell), \alpha\beta} &= \Big[r^k\tilde\Box \Phi_{\ell m_s(n,\ell)} + \tilde\Box(r^k)\Phi_{\ell m_s(n,\ell)}\\
& + (n-4)kr^{-2}r^{A}r_{A}r^k\Phi_{\ell m_s(n,\ell)}\Big]Y_{\alpha\beta}^{\ell m_s(n,\ell)}\\
&+ r^{-2}r^k\Phi_{\ell m_s(n,\ell)}\left(2n-\ell(\ell+n-1)\right)Y_{\alpha\beta}^{\ell m_s(n,\ell)}\\
&+(6-2n)r^{-2}r^{A}r_{A}Q^{(+)}_{\ell m_s(n,\ell), \alpha\beta} -2r^{-1}\left(\tilde\Box r \right)Q^{(+)}_{\ell m_s(n,\ell), \alpha\beta}\\
&=\Big[\tilde{V}^{(+)}_{\ell} + k(k-1)r^{B}r_{B}r^{-2}+kr^{-1}\tilde\Box r \\
&+ (n-4)r^{-2}r^{B}r_{B} + \left(2n-\ell(\ell + n-1)\right)r^{-2}\\
&+(6-2n)r^{-2}r^{B}r_{B}-2r^{-1}\left(\tilde\Box r\right)\Big]Q^{(+)}_{\ell m_s(n,\ell), \alpha\beta},
\end{align*}
where we have used
\[\tilde\Box (r^k) = k(k-1)r^{k-2}r^{A}r_{A} + kr^{k-1}\tilde\Box r.\]

Simplifying, we obtain the master equation
\begin{equation}\label{RW4}
\slashed{\Box}_{\mathcal{L}(-2)} Q^{(+)}_{\ell m_s(n,\ell), \alpha\beta} = V^{(+)}_{\ell}Q^{(+)}_{\ell m_s(n,\ell), \alpha\beta},
\end{equation}
where 
\[V^{(+)}_{\ell}= V_{1,n\ell}+V_{2,n\ell}+V_{3,n\ell}\]
is given by
\begin{align}
\begin{split}
r^2V_{1,n\ell}&=-(\ell (\ell+n-1)-2n),\\
r^2V_{2,n\ell}&=\left(\frac{n^2-10n+16}{4}-\frac{3n^2-12n+16}{4}\left(\frac{2M}{r^{n-1}}\right)\right),
\end{split}
\end{align}
and
\begin{align}
\begin{split}
4r^2V_{3,n\ell}=\frac{1}{D_{n\ell}^2}&\Bigg[ (\ell-1)^2(n+\ell)^2n(n+2)+4(\ell-1)^3(n+\ell)^3\\
&-\Big(6(\ell-1)(n+\ell)(n-2)n^2(n+1)M\\
&+6(\ell-1)^2(n+\ell)^2(n-4)nM\Big)r^{-n+1}\\
&+\Big(4(\ell-1)(n+\ell)n(n+1)(2n^2-3n+4)M^2 \\
&+(n-4)(n-2)n^2(n+1)^2M^2\Big)r^{-2n+2} \\
&+2n^4(n+1)^2M^3r^{-3n+3}\Bigg],
\end{split}
\end{align}
with
\begin{equation}
D_{n\ell}=n(n+1)Mr^{-n+1}+(\ell-1)(n+\ell).
\end{equation}

Note that the potentials converge as
\begin{equation}\label{limitPotential}
\lim_{\ell \to \infty} V_{\ell}^{(+)} = V^{(+)} := \frac{1}{2r^2}\left((n^2-2n+8)-(5n^2-10n+8)\frac{2M}{r^{n-1}}\right).
\end{equation}

\subsection{Analysis of the Master Equation}

\subsubsection{The $T$-Energy Comparison}

The $T$-energy comparison for the Zerilli potential is more involved than the earlier equations.  The argument naturally splits into two regimes, $n\geq 4$ and $n \leq 3.$

Turning to the higher dimensional analysis, with $n \geq 4$, we make use of the following inequalities:

For $\alpha,\beta$, we have
\[ (\alpha+\beta)^2 \geq 4\alpha\beta,\]
and for $x, y$ and $\lambda_1, \lambda_2 \geq 0$, 
\[\frac{\lambda_1 x^2 + \lambda_2 y^2}{(x+y)^2} \geq \frac{\lambda_1 \lambda_2}{\lambda_1 + \lambda_2}.\]
Moreover, we let $s = r^{n-1}$.

Grouping terms in $V^{3}_{n\ell}$,
\begin{align*}
&4r^2V^3_{n\ell}=\frac{1}{D_{n\ell}^2}\Bigg[ \underbrace{(\ell-1)^2(n+\ell)^2n(n+2)}_{\textup{I}}+\underbrace{4(\ell-1)^3(n+\ell)^3}_{\textup{II}}+2n^4(n+1)^2M^3r^{-3n+3}\\
                                 &+\Big( \underbrace{(n-4)(n-2)n^2(n+1)^2M^2}_{\textup{I}}+\underbrace{4(\ell-1)(n+\ell)n(n+1)(2n^2-3n+4)M^2}_{\textup{II}} \Big)r^{-2n+2}\\
                                 &-\Big(\underbrace{6(\ell-1)(n+\ell)(n-2)n^2(n+1)M}_{\textup{III}}+6(\ell-1)^2(n+\ell)^2(n-4)nM \Big)r^{-n+1}\Bigg]
\end{align*}
we estimate each of the pieces as follows.

For I and II, we apply the second inequality to deduce
\begin{align*}
\textup{I}/D_{n\ell}^2 &= \frac{\overbrace{(\ell-1)^2(n+\ell)^2}^{x^2}\overbrace{n(n+2)}^{\lambda_1} + \overbrace{n^2(n+1)^2M^2s^{-2}}^{y^2}\overbrace{(n-4)(n-2)}^{\lambda_2}}{((\ell-1)(n+\ell)+n(n+1)Ms^{-1})^2}\\
&\geq \frac{n(n+2)(n-4)(n-2)}{n(n+2)+(n-4)(n-2)} = \frac{n(n+2)(n-4)(n-2)}{2(n^2-2n+4)}.
\end{align*}
and
\begin{align*}
\textup{II}/D_{n\ell}^2 &= \frac{(n+\ell)^2(\ell-1)^2(4(n+\ell)(\ell-1)) + n^2(n+1)^2M^2s^{-2}(\frac{4(n+\ell)(\ell-1)(2n^2-3n+4)}{n(n+1)})}{((\ell-1)(n+\ell)+n(n+1)Ms^{-1})^2}\\
&\geq 4(\ell-1)(n+\ell)\frac{2n^2-3n+4}{3n^2-2n+4}.
\end{align*}

For III, we apply the second inequality:
\begin{align*}
-\textup{III}/D_{n \ell}^2&=-\frac{6(\ell-1)(n+\ell)(n-2)n^2(n+1)Ms^{-1}}{\Big(n(n+1)Ms^{-1}+(\ell-1)(n+\ell)\Big)^2}\\
                                            &\geq -\frac{6(\ell-1)(n+\ell)(n-2)n^2(n+1)Ms^{-1}}{4(\ell-1)(n+\ell)n(n+1)Ms^{-1}}\\
                                            &=-\frac{3}{2}n(n-2).
\end{align*}

After this first round of estimates, we have
\begin{align*}
r^2V^3_{n\ell} &\geq \frac{n(n+2)(n-2)(n-4)}{8(n^2-2n+4)} + (\ell - 1)(n+\ell)\frac{2n^2-3n+4}{3n^2-2n+4}-\frac{3}{8}n(n-2)\\
&+\frac{1}{4D_{n\ell}^2}\left[2n^4(n+1)^2M^3s^{-3}-6(\ell-1)^2(n+\ell)^2(n-4)nMs^{-1}\right].
\end{align*}

Defining the quadratic polynomial
\[ P(n) = an^2 + bn + c,\]
with $a, b, c$ as yet unchosen, we rewrite the estimate above as
\begin{align*}
r^2V^3_{n\ell} &\geq \frac{n(n+2)(n-2)(n-4)}{8(n^2-2n+4)} + (\ell - 1)(n+\ell)\frac{2n^2-3n+4}{3n^2-2n+4}-\frac{3}{8}n(n-2)\\
&+\frac{1}{4D_{n\ell}^2}\Big[\underbrace{2n^4(n+1)^2M^3s^{-3}+P(n)(\ell-1)^2(n-\ell)^2Ms^{-1}}_{\textup{IV}}\\
&\underbrace{-P(n)(\ell-1)^2(n+\ell)^2Ms^{-1}-6(\ell-1)^2(n+\ell)^2(n-4)nMs^{-1}}_{\textup{V}}\Big].
\end{align*}

Using the second inequality, we estimate
\begin{align*}
\textup{IV}/D_{n\ell}^2 &= Ms^{-1}\frac{(\ell - 1)^2(n-\ell)^2P(n) + n^2(n+1)^2M^2s^{-2}(2n^2)}{((\ell-1)(n+\ell)+n(n+1)Ms^{-1})^2}\\
&\geq Ms^{-1}\frac{2n^2P(n)}{2n^2+P(n)} = \left(\frac{n^2P(n)}{2n^2+P(n)}\right)\frac{2M}{s},
\end{align*}
where we have assumed that $P(n)$ is non-negative.

Next, we control V with the first inequality:
\begin{align*}
\textup{V}/D_{n\ell}^2 &= \frac{(\ell-1)^2(n+\ell)^2(-6n^2+24n -P(n))Ms^{-1}}{((\ell-1)(n+\ell)+n(n+1)Ms^{-1})^2}\\
&\geq (\ell - 1)(n+\ell)\frac{-6n^2+24n-P(n)}{4n(n+1)}.
\end{align*}

After this second round of estimates, we have
\begin{align*}
r^2V^3_{n\ell} &\geq \frac{n(n+2)(n-2)(n-4)}{8(n^2-2n+4)} -\frac{3}{8}n(n-2) + \frac{1}{4}\left(\frac{n^2P(n)}{2n^2+P(n)}\right)\frac{2M}{s}\\
&+ (\ell - 1)(n+\ell)\left(\frac{2n^2-3n+4}{3n^2-2n+4}+\frac{-6n^2+24n-P(n)}{16n(n+1)}\right)\\
&\geq \frac{n(n+2)(n-2)(n-4)}{8(n^2-2n+4)} -\frac{3}{8}n(n-2) + \frac{1}{4}\left(\frac{n^2P(n)}{2n^2+P(n)}\right)\frac{2M}{s}\\
&+ (n+2)\left(\frac{2n^2-3n+4}{3n^2-2n+4}+\frac{-6n^2+24n-P(n)}{16n(n+1)}\right),
\end{align*}
assuming that 
\[\frac{2n^2-3n+4}{3n^2-2n+4}+\frac{-6n^2+24n-P(n)}{16n(n+1)} \geq 0.\]

As $V^{1}_{n\ell}$ can be accounted for by borrowing from the angular gradient, it remains to consider
\begin{align*}
&\left[r^2V_{n\ell}^{2}+r^2V_{n\ell}^{3}\right]\frac{1}{(n-1)^2}\\
&\geq \Bigg[\frac{n^2-10n+16}{4}+\frac{n(n+2)(n-2)(n-4)}{8(n^2-2n+4)} -\frac{3}{8}n(n-2)\\
&+ (n+2)\left(\frac{2n^2-3n+4}{3n^2-2n+4}+\frac{-6n^2+24n-P(n)}{16n(n+1)}\right)\Bigg]\frac{1}{(n-1)^2}\\
&-\Bigg[\frac{3n^2-12n+16}{4}-\frac{1}{4}\frac{n^2P(n)}{2n^2+P(n)}\Bigg]\frac{1}{(n-1)^2}\frac{2M}{s}.
\end{align*}

Setting $F_n^2 = \Bigg[\frac{3n^2-12n+16}{4}-\frac{1}{4}\frac{n^2P(n)}{2n^2+P(n)}\Bigg]\frac{1}{(n-1)^2}$ and $E_n = 2F_n - 1$, we must check that
\begin{align*}
&\Bigg[\frac{n^2-10n+16}{4}+\frac{n(n+2)(n-2)(n-4)}{8(n^2-2n+4)} -\frac{3}{8}n(n-2)\\
&+ (n+2)\left(\frac{2n^2-3n+4}{3n^2-2n+4}+\frac{-6n^2+24n-P(n)}{16n(n+1)}\right)\Bigg]\frac{1}{(n-1)^2}\geq \frac{1}{4}(E_n^2-1).
\end{align*}

Choosing $P(n) = 2n^2 - 5n + 5$, fulfilling both of the conditions on $P(n)$ above, we obtain positivity for $n \geq 7$ and $n = 4$.  We remark that we can always perturb the estimate slightly and retain $\epsilon(n+\ell)^3(\ell-1)^3$ in II; phrased another way, we can keep a small piece of the angular gradient, as necessary to demonstrating positivity of the $T$-energy.

To extend the estimate to all $n\geq 4$, we need a refinement of the second inequality above.  Namely, given $\lambda_1, \lambda_2 > 0$ and $x, y > 0$, with some $\alpha > 0$ such that $0 < y < \alpha < \frac{\lambda_1}{\lambda_2}x,$  we have
\[\frac{\lambda_1 x^2 + \lambda_2 y^2}{(x+y)^2} \geq (1+\epsilon)\frac{\lambda_1 \lambda_2}{\lambda_1 + \lambda_2},\]
where
\[ \epsilon = \frac{(\lambda_{1}x-\lambda_{2}\alpha)^2}{\lambda_1\lambda_2(x+\alpha)^2}.\]

Specializing to I, with $x = (\ell - 1)(n + \ell), y = n(n+1)Ms^{-1}, \lambda_1 = n(n+2), \lambda_2 = (n-4)(n-2)$, we note that $Ms^{-1} < \frac{1}{2}$, such that $0 < y < \frac{1}{2}n(n+1) < \frac{1}{2}n(n+\ell)(\ell-1)$.  The relation
\[ y < \alpha = \frac{1}{2}n(n+\ell)(\ell-1) < \frac{n(n+2)}{(n-4)(n-2)}(n+\ell)(\ell-1) = \frac{\lambda_1}{\lambda_2}x \]
holds for $5 \leq n \leq 7$, and we obtain the improvement 
\begin{align*}
r^2V^3_{n\ell} &\geq (1+\epsilon)\frac{n(n+2)(n-2)(n-4)}{8(n^2-2n+4)} -\frac{3}{8}n(n-2) + \frac{1}{4}\left(\frac{n^2P(n)}{2n^2+P(n)}\right)\frac{2M}{s}\\
&+ (n+2)\left(\frac{2n^2-3n+4}{3n^2-2n+4}+\frac{-6n^2+24n-P(n)}{16n(n+1)}\right),
\end{align*}
with
\[ \epsilon = \frac{n(n^2-8n+4)^2}{(n-4)(n-2)(n+2)^3}.\]
Using the same choices of $P(n), F_n$ and $E_n$ as above, the result extends to include $n = 5$ and $n = 6$.

Finally, we consider the low dimensions $n = 2$ and $n = 3$.  Here we group 
\begin{align*}
&4r^2V^3_{n\ell}=\frac{1}{D_{n\ell}^2}\Bigg[(\ell-1)^2(n+\ell)^2n(n+2)+\underbrace{4(\ell-1)^3(n+\ell)^3}_{\textup{II}}+2n^4(n+1)^2M^3r^{-3n+3}\\
                                 &+\Big( \underbrace{(n-4)(n-2)n^2(n+1)^2M^2}_{\textup{I}}+\underbrace{4(\ell-1)(n+\ell)n(n+1)(2n^2-3n+4)M^2}_{\textup{II}} \Big)r^{-2n+2}\\
                                 &-\Big(\underbrace{6(\ell-1)(n+\ell)(n-2)n^2(n+1)M}_{\textup{III}}+\underbrace{6(\ell-1)^2(n+\ell)^2(n-4)nM}_{\textup{I}} \Big)r^{-n+1}\Bigg].
\end{align*}

The terms II and III are handled just as before.  The terms in I change sign for these low dimensions, and we estimate
\begin{align*}
\textup{I} &= 6(\ell-1)^2(n+\ell)^2(4-n)nMs^{-1} - (n-2)n^2(n+1)^2(4-n)M^2s^{-2}\\
&\geq 12(n+2)^2(4-n)nM^2s^{-2} - (n-2)n^2(n+1)^2(4-n)M^2s^{-2}\\
&\geq n(4-n)\left(12(n+2)^2-n(n-2)(n+1)^2\right)M^2s^{-2}.
\end{align*}

In this way, we obtain
\begin{align*}
&r^2V^3_{n\ell} \geq (\ell - 1)(n+\ell)\frac{2n^2-3n+4}{3n^2-2n+4}-\frac{3}{8}n(n-2)\\
&+\frac{1}{4D_{n\ell}^2}\left[\underbrace{(\ell-1)^2(n+\ell)^2n(n+2) + n(4-n)\left(12(n+2)^2-n(n-2)(n+1)^2\right)M^2s^{-2}}_{\textup{IV}}\right].
\end{align*}

Applying the second inequality, we find
\[\textup{IV}/D_{n\ell}^2 \geq \frac{n(n+2)(4-n)(48+50n+15n^2-n^4)}{2(n^5-5n^3+6n^2+76n+96)}.\]
With this new lower bound, and the usual choices of $F_{n}$ and $E_{n}$, the result extends to $n =2$ and $n = 3$.

\subsubsection{The Morawetz Estimate}
Borrowing from the angular term using \eqref{PoincarePsi}, we have 
\begin{align*}
&\textup{div} \left( J^{X,\omega^X}[Q^{(+)}_{\ell m_s(n,\ell)}]+ \frac{f'}{1-\mu}\beta |Q^{(+)}_{\ell m_s(n,\ell)}|^2\partial_{r_*}\right)\\
&\geq \frac{f'}{1-\mu}|\slashed{\nabla}_{r_*}Q^{(+)}_{\ell m_s(n,\ell)}+\beta Q^{(+)}_{\ell m_s(n,\ell)}|^2+W|Q^{(+)}_{\ell m_s(n,\ell)}|^2,
\end{align*}
with
\begin{align*}
W(f,\beta)&=\left(-\frac{1}{4}\Box\omega^X-\frac{1}{2}V_{\ell}^{(+)'} f+\frac{1}{2}\mu_r V^{(+)}_{\ell} f  \right)\\
 &+\left( \frac{f''\beta}{1-\mu}+f'\left( \frac{\beta'}{1-\mu}+\frac{n\beta}{r}-\frac{\beta^2}{1-\mu} \right) \right)\\
 &+\frac{(\ell(\ell+n-1)-2n)f}{r^3}\left( 1-\frac{(n+1)M}{r^{n-1}} \right).
\end{align*}
For $n=3$, we make the same choice as for the Regge-Wheeler equation \eqref{RW2}:
\begin{align*}
f&=\left( 1-\frac{(n+1)M}{r^{n-1}} \right)\left( 1-\frac{M}{r^{n-1}}+\frac{4M^2}{5r^{2n-2}} \right)\\
 &=\left( 1-\frac{4M}{r^{2}} \right)\left( 1-\frac{M}{r^{2}}+\frac{4M^2}{5r^{4}} \right),\\
\beta&=\frac{1}{2}.
\end{align*}
For $n=4$ and $\ell\geq 3$, we choose
\begin{align*}
f&=\left( 1-\frac{(n+1)M}{r^{n-1}} \right)\left( 1-\frac{2M}{r^{n-1}}+\frac{6M^2}{5r^{2n-2}} \right)\\
 &=\left( 1-\frac{5M}{r^{3}} \right)\left( 1-\frac{2M}{r^{3}}+\frac{6M^2}{5r^{6}} \right),\\
\beta&=1+\frac{M}{r^3}.
\end{align*}
For $n=4$ and $\ell=2$, we take
\begin{align*}
f&=\left( 1-\frac{(n+1)M}{r^{n-1}} \right)\left( 1-\frac{M}{5r^{n-1}}+\frac{2M^2}{5r^{2n-2}} \right)\\
 &=\left( 1-\frac{5M}{r^{3}} \right)\left( 1-\frac{M}{5r^{3}}+\frac{2M^2}{5r^{6}} \right),\\
\beta&=\frac{1}{2}\left( 2+\frac{M}{5r^3} \right)^3\Bigg/\left( 2+\frac{M}{5r_{P}^3} \right)^3\ \textup{for}\ r\leq r_{P},\\
     &=\frac{1}{2}\left( \frac{1}{2}+\frac{2}{5}\left(\frac{2M}{r^3}\right)^{1/6} \right)\Bigg/\left( \frac{1}{2}+\frac{2}{5}\left(\frac{2M}{r_{P}^3}\right)^{1/6} \right)\ \textup{for}\ r\geq r_{P}.
\end{align*}
Note that $\beta=\frac{1}{2}$  on the photon sphere $r = r_{P}$, so that the divergence theorem still applies. With these choices, $f'>0$ and $W(f,\beta)>0$, and Lemma \ref{Morawetz} can be proved in the same way as in Regge-Wheeler case.

\subsection{Uniform Boundedness and Decay of the Master Quantity}

The estimates for the $Q^{(+)}_{\ell m_s(n,\ell),\alpha\beta}$ in the previous subsection are uniform in the angular mode numbers $\ell$ and $m$, owing to convergence of the potentials $V^{(+)}_{\ell}$ to the limiting potential $V^{(+)}$ \eqref{limitPotential}.  As the relevant energies involve $L^2(S^n)$-terms integrated over the orbit spheres, there is no difficulty in summing the estimates on the angular modes $Q^{(+)}_{\ell m_s(n,\ell), \alpha\beta}$ to obtain estimates on a total object $Q^{(+)}_{\alpha\beta}$, defined as the $L^2(S^n)$-sum
\begin{equation}\label{QPlusDef}
Q^{(+)}_{\alpha\beta} := \sum_{\ell\geq 2}\sum_{m_s(n,\ell)=1}^{d_s(n,\ell)} Q^{(+)}_{\ell m_s(n,\ell), \alpha\beta}.
\end{equation}
Following this reasoning, we have the following estimates for the gauge-invariant master quantity $Q^{(+)}_{\alpha\beta}$:

\begin{theorem}
Let $\delta g$ be a smooth, symmetric two-tensor on a Schwarzschild-Tangherlini spacetime, satisfying the linearized Einstein equation \eqref{linEinstein}.  There exists a gauge-invariant master quantity $Q^{(+)}_{\alpha\beta}$ in the scalar portion $h_1$ of $\delta g$ with harmonics $Q^{(+)}_{\ell m_s(n,\ell), \alpha\beta}$ satisfying the Regge-Wheeler type equations \eqref{RW4}.  Summing estimates for the $Q^{(+)}_{\ell m_s(n,\ell)}$ terms, $Q^{(+)}$ satisfies the uniform boundedness estimate
\begin{equation}
\check{E}^{N}_{Q^{(+)}}(\Sigma_\tau) \lesssim \check{E}^{N}_{Q^{(+)}}(\Sigma_0),
\end{equation}
in all spacetime dimensions.  In six and fewer spacetime dimensions, $Q^{(+)}$ satisfies the uniform decay estimate
\begin{equation}
\check{E}^{N}_{Q^{(+)}}(\Sigma_{\tau}) \lesssim \frac{I_{Q^{(+)}}(\Sigma_0)}{\tau^2},
\end{equation}
where 
\begin{equation} 
I_{Q^{(+)}}(\Sigma_0):=E^2_{Q^{(+)}, \mathcal{L}_{K}Q^{(+)}}(\Sigma_0)+E^N_{Q^{(+)}, \mathcal{L}_{K}Q^{(+)}, \mathcal{L}^2_{K}Q^{(+)}}(\Sigma_0)
\end{equation}
and $\tau \geq 0.$
\end{theorem}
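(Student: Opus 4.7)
The plan is to apply the general machinery developed in Section 7 mode-by-mode, and then use the orthogonality of the tensor spherical harmonics $Y^{\ell m_s(n,\ell)}_{\alpha\beta}$ to sum the estimates into a single bound for $Q^{(+)}_{\alpha\beta}$. Because each $Q^{(+)}_{\ell m_s(n,\ell), \alpha\beta}$ is of the form $\left(r^k \Phi_{\ell m_s(n,\ell)}\right) Y^{\ell m_s(n,\ell)}_{\alpha\beta}$ and satisfies the Regge--Wheeler type equation \eqref{RW4} with radial potential $V^{(+)}_{\ell}$ bounded on the exterior and of order $r^{-2}$ at spatial infinity, each mode fits Definition \ref{RWdefinition}. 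The mode-wise $T$-energy comparison \eqref{Tcomparison} and the Morawetz estimate of Lemma \ref{Morawetz} were verified in the subsection immediately preceding this theorem, so Theorem \ref{uniformBoundedness} and the decay theorem at the end of Section 7 apply directly to each $Q^{(+)}_{\ell m_s(n,\ell), \alpha\beta}$.

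First I would verify that the constants hidden in $\lesssim$ can be chosen uniformly in $(\ell, m_s(n,\ell))$. The crucial observation is the convergence \eqref{limitPotential} of $V^{(+)}_{\ell}$ to $V^{(+)}$ as $\ell \to \infty$, together with the explicit expressions for $V^{(+)}_{\ell}$. The Hardy constant $(E_n, F_n)$ chosen in the mode-wise $T$-energy comparison depends only on $n$ and not on $\ell$, and the Morawetz multipliers $(f, \beta)$ constructed for $n \leq 4$ were independent of $\ell$ as well (with the $n=4$, $\ell = 2$ case handled by a separate choice consistent with the $\ell \geq 3$ case at the photon sphere). Likewise the density estimate of Proposition \ref{rpDensity}, which requires a uniform choice of $(k, R_1)$ for $r \geq R_1$ sufficiently large, is valid for all $\ell$ simultaneously by \eqref{limitPotential}, as remarked after that proposition. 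Hence Theorem \ref{uniformBoundedness} yields
\[
\check{E}^{N}_{Q^{(+)}_{\ell m_s(n,\ell)}}(\Sigma_{\tau}) \leq C(n,M)\, \check{E}^{N}_{Q^{(+)}_{\ell m_s(n,\ell)}}(\Sigma_0),
\]
with $C(n,M)$ independent of $(\ell, m_s(n,\ell))$, and in dimension $\leq 6$ the decay theorem yields the analogous $\tau^{-2}$ estimate.

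Second I would sum these mode-wise estimates. Because the ambient energies $\check{E}^{N}_{Q^{(+)}}(\Sigma_{\tau})$ and $I_{Q^{(+)}}(\Sigma_0)$ are defined as integrals of quadratic quantities over $\Sigma_{\tau}$ whose angular component is an $L^2(S^n)$-integral, the orthogonality of the tensor spherical harmonics $\{Y^{\ell m_s(n,\ell)}_{\alpha\beta}\}$ together with their covariant derivative images makes the total energy a direct sum of the mode energies:
\[
\check{E}^{N}_{Q^{(+)}}(\Sigma_{\tau}) = \sum_{\ell \geq 2}\sum_{m_s(n,\ell)} \check{E}^{N}_{Q^{(+)}_{\ell m_s(n,\ell)}}(\Sigma_{\tau}),
\]
and similarly for $I_{Q^{(+)}}$. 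Uniformity of the mode constant then allows term-by-term summation, delivering the desired inequalities for $Q^{(+)}_{\alpha\beta}$. Commutation with the Killing fields $K = \{T, \Omega_i\}$ preserves the mode decomposition and the Regge--Wheeler equation \eqref{RW4}, which justifies the definition of $I_{Q^{(+)}}$ and the $r^p$-hierarchy input in Theorem \ref{rpEstimate}.

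The main obstacle is the uniformity of constants across all modes. For large $\ell$, uniformity follows from the limit \eqref{limitPotential} and continuity of the construction in the subsection proving the $T$-comparison and the Morawetz estimate; for small $\ell$, one must check finitely many cases (notably $n=4$, $\ell=2$, where the author's specially constructed $\beta$ is needed). Once this is absorbed into the constant $C(n,M)$, the remainder of the argument is a direct application of the already-established theorems together with the Parseval-type identity on tensor spherical harmonics.
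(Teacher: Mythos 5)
Your proposal matches the paper's argument: the paper likewise establishes the mode-wise $T$-energy comparison and Morawetz estimate in the preceding subsections, invokes the convergence $V^{(+)}_{\ell}\to V^{(+)}$ of \eqref{limitPotential} (and the remark after Proposition \ref{rpDensity}) to get constants uniform in $(\ell, m_s(n,\ell))$, and then sums the mode estimates using the $L^2(S^n)$ structure of the energies. The additional detail you supply on where uniformity enters (the $n$-only Hardy constants, the $\ell$-independent Morawetz multipliers apart from the $n=4$, $\ell=2$ case) is consistent with, and a correct elaboration of, the paper's terse justification.
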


We emphasize that the relevant constants in the comparisons depend only upon the orbit sphere dimension $n$ and the mass $M > 0$.  

\section{Proof of Main Theorem}
\begin{theorem}
Let $\delta g$ be a smooth, symmetric two-tensor on a Schwarzschild-Tangherlini spacetime, satisfying the linearized Einstein equation \eqref{linEinstein}.  Performing a spacetime Hodge decomposition of $\delta g$,  each of the portions of $\delta g$ contains gauge-invariant master quantities satisfying decoupled Regge-Wheeler type wave equations.  In particular, the two-tensor portion $h_3 = \hat{h}_{\alpha\beta}$ satisfies the equation \eqref{RW1}, the co-vector portion $h_2$ has quantities $Q^{(-)}_{\alpha\beta}$ \eqref{QMinusDef} and $S_{\alpha\beta}$ \eqref{SDef} satisfying the equation \eqref{RW3}, and the scalar portion $h_1$ has quantities $Q^{(+)}_{\ell m_s(n,\ell), \alpha\beta}$ \eqref{QPlusDef} satisfying the equations \eqref{RW4}.  

As solutions of Regge-Wheeler type equations, the master quantities satisfy the uniform boundedness estimates
\begin{align}
\begin{split}
\check{E}^{N}_{\hat{h}}(\Sigma_\tau) &\lesssim \check{E}^{N}_{\hat{h}}(\Sigma_0),\\
\check{E}^{N}_{Q^{(-)}}(\Sigma_\tau) &\lesssim \check{E}^{N}_{Q^{(-)}}(\Sigma_0),\\
\check{E}^{N}_{S}(\Sigma_\tau) &\lesssim \check{E}^{N}_{S}(\Sigma_0),\\
\check{E}^{N}_{Q^{(+)}_{\ell m_s(n,\ell)}}(\Sigma_\tau) &\lesssim \check{E}^{N}_{Q^{(+)}_{\ell m_s(n,\ell)}}(\Sigma_0),
\end{split}
\end{align}
in all spacetime dimensions.  In six and fewer spacetime dimensions, the master quantities satisfy the uniform decay estimate
\begin{align}
\begin{split}
\check{E}^{N}_{\hat{h}}(\Sigma_{\tau}) &\lesssim \frac{I_{\hat{h}}(\Sigma_0)}{\tau^2},\\
\check{E}^{N}_{Q^{(-)}}(\Sigma_{\tau}) &\lesssim \frac{I_{Q^{(-)}}(\Sigma_0)}{\tau^2},\\
\check{E}^{N}_{S}(\Sigma_{\tau}) &\lesssim \frac{I_{S}(\Sigma_0)}{\tau^2},\\
\check{E}^{N}_{Q^{(+)}_{\ell m_s(n,\ell)}}(\Sigma_{\tau}) &\lesssim \frac{I_{Q^{(+)}_{\ell m_s(n,\ell)}}(\Sigma_0)}{\tau^2},
\end{split}
\end{align}
where 
\begin{equation} 
I_{\Psi}(\Sigma_0):=E^2_{\Psi, \mathcal{L}_{K}\Psi}(\Sigma_0)+E^N_{\Psi, \mathcal{L}_{K}\Psi, \mathcal{L}^2_{K}\Psi}(\Sigma_0)
\end{equation}
and $\tau \geq 0$ for the decay foliation $\Sigma_{\tau}$ of Subsection \ref{decayFoliation}.

Owing to uniformity of the estimates for the $Q^{(+)}_{\ell m_s(n,\ell),\alpha\beta}$ in the angular mode numbers $\ell$ and $m_s(n,\ell)$, we can concisely encode these estimates by considering their $L^2(S^n)$-sum $Q^{(+)}_{\alpha\beta}$ \eqref{QPlusDef}, which satisfies
\begin{align}
\begin{split}
\check{E}^{N}_{Q^{(+)}}(\Sigma_\tau) &\lesssim \check{E}^{N}_{Q^{(+)}}(\Sigma_0),\\
\check{E}^{N}_{Q^{(+)}}(\Sigma_{\tau}) &\lesssim \frac{I_{Q^{(+)}}(\Sigma_0)}{\tau^2}.
\end{split}
\end{align}

\end{theorem}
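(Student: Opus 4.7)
The plan is to assemble the theorem directly from the analyses carried out portion-by-portion in Sections 8, 9, and 10, treating this statement essentially as a compilation result. First, I would invoke the spacetime Hodge decomposition of Proposition \ref{decomposition} to split $\delta g$ into its scalar, co-vector, and two-tensor pieces $h_1$, $h_2$, $h_3$. Since each of the linearized Einstein equations decouples across these three portions (as shown by the calculations in Section 4.2), the master quantities $\hat{h}_{\alpha\beta}$, $Q^{(-)}_{\alpha\beta}$, $S_{\alpha\beta}$, and $Q^{(+)}_{\ell m_s(n,\ell),\alpha\beta}$ can be constructed independently from the corresponding pieces and verified to satisfy \eqref{RW1}, \eqref{RW3}, \eqref{spinraised}, and \eqref{RW4} respectively, as already established in Sections 8--10. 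Gauge invariance of these quantities was also recorded there.

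Next, for each master quantity I would simply quote the corresponding boundedness and decay theorems of Sections 8, 9, and 10. Each of these theorems was proved by verifying the two hypotheses of the abstract framework of Section 7 --- namely, the $T$-energy comparison \eqref{Tcomparison} (via the Hardy estimate Lemma \ref{HardyTEstimate}) and the Morawetz estimate of Lemma \ref{Morawetz} --- and then applying Theorems \ref{uniformBoundedness}, \ref{MorawetzPlus}, \ref{rpEstimate}, and the uniform decay theorem in order. The boundedness half holds in all dimensions because the $T$-energy comparison was established in all $n$; the decay half is restricted to $n \leq 4$ (i.e.\ spacetime dimension six or fewer) because this is precisely the regime in which the Morawetz currents constructed in Sections 9.3.2 and 10.4.2 yield the required coercive density in $r \leq r_0$.

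For the scalar portion there is an additional step: the estimates of Section 10 are obtained mode-by-mode in the spherical harmonic index $\ell$, so I must pass from per-mode estimates on $Q^{(+)}_{\ell m_s(n,\ell),\alpha\beta}$ to estimates on the $L^2(S^n)$-sum $Q^{(+)}_{\alpha\beta}$ of \eqref{QPlusDef}. The crucial point here is that the potentials $V^{(+)}_\ell$ converge to the limit potential $V^{(+)}$ in \eqref{limitPotential}, and the constants appearing in the Hardy-based $T$-energy comparison and in the Morawetz construction can be chosen uniformly in $\ell$ (since the choices of $f$, $\beta$, and the hypergeometric parameters $E,F$ were all $\ell$-independent once $\ell$ was large, and the finitely many low modes $\ell = 2$ were handled separately with bespoke choices). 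Consequently the per-mode constants in the boundedness and decay bounds can be taken independent of $(\ell,m_s(n,\ell))$, after which orthogonality of the tensor spherical harmonics on $S^n$ allows summation in $L^2(S^n)$ of the energy estimates.

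The main obstacle I anticipate is precisely this uniform-in-$\ell$ control of the scalar master estimates: one must check carefully that the Morawetz multipliers constructed in Section 10 yield coercive bulk densities with constants that do not degenerate as $\ell \to \infty$, and that the additional angular borrowing used in the $T$-energy argument (absorbing $V^{1}_{n\ell}$ into the Poincar\'{e} term) leaves a residual angular weight bounded below uniformly in $\ell$. Given the explicit convergence \eqref{limitPotential} and the fact that the dominant large-$\ell$ behavior of $V^{(+)}_\ell$ is the negative centrifugal piece $-\ell(\ell+n-1)/r^2$ which only strengthens the Poincar\'{e} absorption, this uniformity follows, and the summation then yields the stated estimates for $Q^{(+)}_{\alpha\beta}$, completing the proof.
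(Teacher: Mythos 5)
Your proposal is correct and follows essentially the same route as the paper: Theorem 2 is a compilation of the portion-by-portion results of Sections 8--10, each of which is obtained by verifying the $T$-energy comparison and the Morawetz hypothesis and then invoking the abstract boundedness and decay machinery of Section 7, with the dimension restriction for decay traced to the Morawetz construction and the scalar-portion summation justified by the $\ell$-uniformity coming from $V^{(+)}_\ell \to V^{(+)}$. Your added care about uniform-in-$\ell$ constants matches the paper's remark that $R_1$ and $k$ in the $r^p$-hierarchy can be chosen uniformly across \eqref{RW1}, \eqref{RW3}, and \eqref{RW4}.
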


We remark that further pointwise uniform boundedness and uniform decay estimates can be derived from those above by means of commutation with the angular Killing fields and application of Sobolev estimates on the orbit spheres.
\bibliographystyle{plain}
\bibliography{highDSchwarz}

\end{document}